\documentclass[a4paper,american,10pt,DIV=12,includehead]{scrartcl}
\pdfoutput=1
\usepackage{lmodern}
\usepackage[utf8]{inputenc}
\usepackage[T1]{fontenc}

\usepackage{fouriernc}
\usepackage{newcent}
\usepackage{helvet}

\usepackage[notindex,nottoc,notlot,notlof]{tocbibind}
\usepackage{amsmath}
\usepackage{amsthm}
\usepackage{lastpage}
\usepackage[square,sort]{natbib}

\setlength\parindent{1.5em}

\pagestyle{headings}

\let\cite\citep
\let\citeN\citet
\let\citeNP\citealt
\let\citeyear\citeyearpar

\usepackage{float}
\floatstyle{ruled}
\newfloat{algorithm}{tbp}{loa}
\floatname{algorithm}{Algorithm}

\newcommand\tbl[2]{%
	\begin{small}\mbox{}\hfill{#2}\hfill\mbox{}\end{small}
	\caption{#1}%
}

\addtokomafont{caption}{\sffamily\small}
\addtokomafont{captionlabel}{\sffamily\textbf}
\setcapmargin{2em}

\date{\small\today}

\newenvironment{longitem}{\begin{itemize}}{\end{itemize}}
\newenvironment{longenum}{\begin{enumerate}}{\end{enumerate}}

\newcommand\received[3]{}

\theoremstyle{plain}
\newtheorem{theorem}{Theorem}[section]
\newtheorem{proposition}[theorem]{Proposition}
\newtheorem{lemma}[theorem]{Lemma}
\newtheorem{conjecture}[theorem]{Conjecture}

\renewcommand\proof[1][]{%
    \ifthenelse{\equal{#1}{}}{%
        \textit{Proof: }%
    }{%
        \textit{Proof #1:}%
    }%
}

\theoremstyle{definition}
\newtheorem{property}[theorem]{Property}
\newtheorem{fact}[theorem]{Fact}
\newtheorem{corollary}[theorem]{Corollary}

\usepackage{graphicx}
\usepackage{array}
\usepackage{booktabs}
\usepackage{multirow}
\usepackage{multicol}
\usepackage{ifthen}
\usepackage{pbox}
\usepackage{placeins}
\usepackage{url}
\usepackage{amsmath}
\usepackage{amssymb}
\usepackage{mathtools}
\usepackage{tikz}
\usepackage{clrscode3e}
\usepackage{babel}
\usepackage{xfrac}
\usepackage{colonequals}
\usepackage{enumitem}
\usepackage{xspace}
\usepackage[final]{listings}
\usepackage{pbox}
\usepackage{textcomp}
\usepackage{microtype}
\usepackage{needspace}

\usepackage[referable]{threeparttablex}

\usetikzlibrary{shapes.multipart,positioning,calc,external}

\tikzexternalize[mode=list and make]
\tikzsetexternalprefix{plots/externalized/}
\tikzset{external/export=false} 

\usepackage{pgfplots}
\pgfplotsset{compat=1.5}
\pgfplotsset{
  every axis/.append style={
  font=\footnotesize,
  line width=0.5pt,
  tick style={thin}}
}
\pgfplotscreateplotcyclelist{bw}{%
mark=x\\%
mark=square*,every mark/.append style={fill=white}\\%
densely dashed,mark=*,every mark/.append style={solid,fill=white}\\%
densely dotted,mark=diamond*,every mark/.append style={solid,fill=white}\\%
mark=square*,every mark/.append style={fill=black!60}\\%
}

\pgfplotscreateplotcyclelist{asymptotics}{%
mark=x\\%
only marks,mark=*,every mark/.append style={fill=white}\\%
}

\lstset{columns=flexible,
        aboveskip=0.5\baselineskip,
        belowskip=0.5\baselineskip,
        tabsize=4,
        emptylines=*1,
        breaklines,
        breakatwhitespace,
        breakindent=30pt,
        prebreak=\raisebox{-1ex}{\hbox{$\hookleftarrow$}},
        upquote,
        showstringspaces=false,
        basicstyle=\small\ttfamily,
        commentstyle=\ttfamily\itshape{},
        identifierstyle=\ttfamily\slshape{},
        keywordstyle=\bfseries,
        numbers=left,
        numberstyle=\ttfamily\tiny{},
        escapechar=|,
        numberblanklines=false,
        captionpos=t,
        numberbychapter=false
}

\lstloadlanguages{Java}
\lstset{language=Java}

\usepackage[final,unicode=true,bookmarks=true,
            bookmarksnumbered=true,
            bookmarksopen=true,
            breaklinks=true,
            pdfborder={0 0 0},
            colorlinks=false]{hyperref}
\hypersetup{
 pdftitle={Average Case and Distributional Analysis of Dual-Pivot
	Quicksort},
 pdfauthor={Sebastian Wild, Markus E. Nebel and Ralph Neininger},
 pdfsubject={},
 pdfkeywords={Quicksort, Dual Pivot, Average Case Analysis}
}

\newcounter{basicblocknumber}
\newcounter{basicblocknumberIS}

                	    	        \setcounter{topnumber}{2}
    \setcounter{bottomnumber}{2}
    \setcounter{totalnumber}{4}         \setcounter{dbltopnumber}{2}

\newcommand\weakemph[1]{\textsl{#1}}

\newdimen\makeboxdimen
\newcommand\makeboxlike[3][l]{%
\setbox0=\hbox{#2}%
\global\makeboxdimen=\wd0%
\setbox1=\hbox{\makebox[\makeboxdimen][#1]{%
\makebox[0pt][#1]{#3}%
}}%
\ht1=\ht0%
\dp1=\dp0%
\box1%
}

\tolerance 1414
\hbadness 1414
\emergencystretch 1.5em	 \hfuzz 0.3pt	\vfuzz \hfuzz	

\widowpenalty=10000
\clubpenalty=10000

\raggedbottom

\allowdisplaybreaks[3]

\numberwithin{equation}{section}

\setcounter{tocdepth}{4} 

\def\EndFor{\End\li\kw{end for} }
\def\EndIf{\End\li\kw{end if} }
\def\EndWhile{\End\li\kw{end while} }

\newcommand\E{\mathop{\mbox{$\mathbb{E}$}}\nolimits}
\newcommand\given{\;|\;}

\def\.{\mskip1mu}

\renewcommand\given{\mathbin{\mid}}
\newcommand\harm[1]{\ensuremath{\mathcal{H}_{#1}}}

\newcommand\ce{\colonequals}

\newcommand\pc{T\!}
\newcommand\insertsortcost{C^{\mathrm{IS}}}
\newcommand\sumpq{\sum_{1\le p<q\le n}}
\newcommand\probsumpq{%
	1 \big/ \tbinom n2 \!\!\!\!\!\! \sum_{1\le p<q\le n}\!\!\!
}

\newcommand\toll[2][]{%
	\ensuremath{%
	\ifthenelse{\equal{#1}{}}{%
		T_{\!#2}%
	}{%
		T_{\!#2}({#1})%
	}}%
}
\newcommand\comparisonmarker[2][]{\toll[#1]{\totalcomparisonmarker[]{#2}}}
\newcommand\swapmarker[2][]{\toll[#1]{\totalswapmarker[]{#2}}}
\newcommand\totalcomparisonmarker[2][]{%
	\ensuremath{%
	\ifthenelse{\equal{#2}{}}{%
		\ifthenelse{\equal{#1}{}}{%
			C%
		}{%
			C_{#1}%
		}%
	}{%
		\ifthenelse{\equal{#1}{}}{%
			\ui C{#2}%
		}{%
			\ui C{#2}_{#1}%
		}%
	}}%
}
\newcommand\totalswapmarker[2][]{%
	\ensuremath{%
	\ifthenelse{\equal{#2}{}}{%
		\ifthenelse{\equal{#1}{}}{%
			S%
		}{%
			S_{#1}%
		}%
	}{%
		\ifthenelse{\equal{#1}{}}{%
			\ui S{#2}%
		}{%
			\ui S{#2}_{#1}%
		}%
	}}%
}
\newcommand\totalwritesmarker[2][]{%
	\ensuremath{%
	\ifthenelse{\equal{#2}{}}{%
		\ifthenelse{\equal{#1}{}}{%
			W%
		}{%
			W_{#1}%
		}%
	}{%
		\ifthenelse{\equal{#1}{}}{%
			\ui W{#2}%
		}{%
			\ui W{#2}_{#1}%
		}%
	}}%
}
\newcommand\bytecodes[2][]{%
	\ensuremath{%
	\ifthenelse{\equal{#2}{}}{%
		\ifthenelse{\equal{#1}{}}{%
			\mathit{BC}%
		}{%
			\mathit{BC}_{#1}%
		}%
	}{%
		\ifthenelse{\equal{#1}{}}{%
			\ui{\mathit{BC}}{#2}%
		}{%
			\ui{\mathit{BC}}{#2}_{#1}%
		}%
	}}%
}

\newcommand\arrayA{\smash{\raisebox{-.2pt}{\scalebox{1.25}[1.18]{$\mathtt{A}$}}}\xspace}

\newcommand\values[1]{#1}

\newcommand\positionsets[1]{\mathcal{#1}}

\newcommand\numberat[2]{\values{#1}\.\mbox{\emph{@}}\.\positionsets{#2}}
\newcommand\satK{\numberat s{K}}

\newcommand\indicator[1]{\mathbb 1_{\{#1\}}}

\newcommand\rel[1]{\mathrel{\:{#1}\:}}
\newcommand\wrel[1]{\mathrel{\;{#1}\;}}
\newcommand\wwrel[1]{\mathrel{\;\;{#1}\;\;}}
\newcommand\bin[1]{\mathbin{\:{#1}\:}}
\newcommand\wbin[1]{\mathbin{\;{#1}\;}}
\newcommand\wwbin[1]{\mathbin{\;\;{#1}\;\;}}

\newcommand{\V}{\ensuremath{\mathrm{Var}}}
\newcommand{\Cov}{\ensuremath{\mathrm{Cov}}}
\newcommand{\Prob}{\ensuremath{\mathbb{P}}}
\newcommand{\Rset}{\ensuremath{\mathbb{R}}}
\newcommand{\Nset}{\ensuremath{\mathbb{N}}}
\newcommand{\bo}{\ensuremath{\mathrm{O}}}

\newcommand\bernoulli{\mathrm B}
\newcommand\hypergeometric{\mathrm{HypG}}
\newcommand\multinomial{\mathrm M}
\newcommand\binomial{\mathrm{Bin}}

\newcommand\ui[2]{#1^{\smash{(#2)}}}

\newcommand\eqdist{	
	\mathchoice{
		\mathrel{\overset{\raisebox{0ex}{$\scriptstyle \cal D$}}=}%
	}{
		\mathrel{\like{=}{%
			\overset{\raisebox{-1ex}{$\scriptscriptstyle \cal D$}}=%
		}}%
	}{
		\mathrel{\overset{\cal D}=}%
	}{
		\mathrel{\overset{\cal D}=}%
	}%
}

\newcommand\person[1]{#1}
\newcommand\pe{\phantom{{}={}}}
\newcommand\ppe{\phantom{=}}

\newcommand\tCpj{C_j{}^{\mkern-10mu \prime}\mkern3mu}
\newcommand\tCppj{C_{\!j\,}{}^{\mkern-10mu \prime\prime}\mkern1mu}
\newcommand\tCpppj{C_{\!j\,}{}^{\mkern-10mu \prime\prime\prime}\mkern1mu}
\newcommand\tCsn{C^*{}_{\mkern-13mu n}\mkern1mu}
\newcommand\tSsn{S^*{}_{\mkern-13mu n}\mkern1mu}
\newcommand\tBCsn{\mathit{BC}^*{}_{\mkern-13mu n}\mkern1mu}
\newcommand\tYsn{Y^*{}_{\mkern-17mu n}\mkern2mu}

\newcommand\like[3][c]{\makeboxlike[#1]{\ensuremath{#2}}{\ensuremath{#3}}}

\hyphenation{Yar-os-lav-skiy Quick-sort-Yar-os-lav-skiy}

\let\affil\thanks

\title{Average Case and Distributional Analysis of Dual-Pivot Quicksort%
}

\author{%
	Sebastian Wild${}^*$
	\and
	Markus E. Nebel%
	\affil{Computer Science Department, University of Kaiserslautern}
	$\,{}^{\text{\scriptsize,}}$%
	\affil{Department of Mathematics and Computer Science,
	University of Southern Denmark}
	\and Ralph Neininger%
	\affil{Institute for Mathematics, J.\,W.~Goethe University}
}

\begin{document}

\maketitle

\begin{abstract}
\rm
In 2009, Oracle replaced the long-serving sorting algorithm in its
Java~7 runtime library by a new dual-pivot Quicksort variant due to
Vladimir Yaroslavskiy.
The decision was based on the strikingly good performance of Yaroslavskiy's
implementation in running time experiments.
At that time, no precise investigations of the algorithm were available to
explain its superior performance\,---\,on the contrary:
Previous theoretical studies of other dual-pivot Quicksort variants even discouraged the
use of two pivots.
Only in 2012, two of the authors gave an average case analysis of a simplified
version of Yaroslavskiy's algorithm, proving that savings in the number of
comparisons are possible.
However, Yaroslavskiy's algorithm needs more swaps, which renders the analysis inconclusive.

To force the issue, we herein extend our analysis to the fully detailed
style of Knuth:
We determine the exact number of executed Java Bytecode instructions.
Surprisingly, Yaroslavskiy's algorithm needs sightly
\emph{more} Bytecode instructions than a simple implementation of classic
Quicksort\,---\,contradicting observed running times. Like in Oracle's library
implementation we incorporate the use of Insertionsort on small subproblems
and show that it indeed speeds up Yaroslavskiy's Quicksort in terms of
Bytecodes; but even with optimal Insertionsort thresholds
the new Quicksort variant needs slightly more Bytecode instructions on average.

Finally, we show that the (suitably normalized) costs of Yaroslavskiy's
algorithm converge to a random variable whose distribution is characterized by
a fixed-point equation.
From that, we compute variances of costs and show that for large $n$, costs are
concentrated around their mean.

\end{abstract}

\section{Introduction}
\label{sec:introduction}

Quicksort is a divide and conquer sorting algorithm originally proposed by
Hoare~\citeyear{Hoare1961b,Hoare1961a}.
The procedure starts by selecting an arbitrary element from the list to be
sorted as \emph{pivot}.
Then, Quicksort \emph{partitions} the elements into two groups:
those smaller than the pivot and those larger than the pivot.
After partitioning, we know the exact \emph{rank} of the pivot element in the
sorted list, so we can put it at its final landing position
between the groups of smaller and larger elements.
Afterwards, Quicksort proceeds by recursively sorting the two parts,
until it reaches lists of length zero or one, which are already sorted by
definition.

We will in the following always assume random access to the data, i.\,e.,
the elements are given as entries of an \emph{array}.
Then, the partitioning process can work \emph{in~place} by directly
manipulating the array.
This makes Quicksort convenient to use and avoids the need for extra 
space (except for the recursion stack).
Hoare's initial implementation \cite{Hoare1961b} works in place and
\citeN{Sedgewick1975} studies several variants thereof.

In the worst case, Quicksort has quadratic complexity, namely if in every
partitioning step, the pivot is the smallest or largest element of the
current subarray.
However, this behavior occurs very infrequently, such that the expected
complexity is $\Theta(n\log n)$.
\citeN{Hoare1962} already gives a precise average case analysis of his
algorithm, which is nowadays contained in most algorithms textbooks, [e.\,g.\
\citeNP{Cormen2009}].
Sedgewick~\citeyear{Sedgewick1975,Sedgewick1977} refines this analysis to count
the exact number of executed primitive instructions of a low level
implementation.
This detailed breakdown reveals that Quicksort has the asymptotically fastest
average running time on MIX among all the sorting algorithm studied by
\citeN{Knuth1998}.

Only considering average results can be misleading.
To increase our confidence in a sorting method, we also require that it is
\emph{likely} to observe costs close to the expectation.
The standard deviation of the Quicksort complexity
grows linearly with $n$ \cite{Knuth1998,hennequin1989combinatorial},
which implies that the costs are concentrated around their mean
for large~$n$.
Precise tail bounds which ensure tight concentration around the mean were
derived by \citeN{hamc95}; see also \citeN{fija02}.

Much more information is available on the full distribution of the number of
key comparisons.
When suitably normalized, the number of comparisons converges in law
\cite{Regnier}, with a certain unknown limit distribution.
\citeN{hennequin1989combinatorial} computed its first cumulants and proved that
it is not a normal distribution.
The limiting distribution can be implicitly characterized by a stochastic
fixed-point equation \cite{roesler1991limit} and it is known to have a smooth
density \cite{Tan1995,fija00}.

\smallskip
Due to its efficiency in the average, Quicksort has been used as
general purpose sorting method for decades, for example in the C/C++ standard
library and the Java runtime library.
As sorting is a widely used elementary task, even small speedups of such
library implementations can be worthwhile.
This caused a run on variations and modifications to the basic algorithm.
One very successful optimization is based on the observation that Quicksort's
performance on tiny subarrays is comparatively poor.
Therefore, we should switch to some special purpose sorting method for these
cases \cite{Hoare1962}.
\citeN{Singleton1969} proposed using Insertionsort for this task, which indeed
``for small $n$ [\,\dots] is about the best sorting method known''
according to \citeN[p.\,22]{Sedgewick1975}.
He also gives a precise analysis of Quicksort where Insertionsort is used for
subproblems of size less than~$M$ \cite{Sedgewick1977}.
For his MIX implementation, the optimal choice is $M=9$, which leads to a
speedup of 14\,\% for $n=10\,000$.

Another very successful optimization is to improve the choice of the pivot
element by selecting the \emph{median} of a small
sample of the current subarray.
This idea has been studied extensively
\cite{Hoare1962,Singleton1969,VanEmden1970,Sedgewick1977,hennequin1989combinatorial,Chern2001a,Martinez2001,Durand2003pseudonine},
and real world implementations make heavy use of it \cite{Bentley1993}.%

Precise analysis of the impact of a modification often helped in understanding
and assessing its usefulness,
and in fact, many proposed variations turned out detrimental in the end
(many examples are exposed by \citeN{Sedgewick1975}).
Partitioning with more than one pivot used to be counted among those.
\citeN[p.\,150ff]{Sedgewick1975} studies a dual-pivot Quicksort variant in
detail, but finds that it uses more swaps and comparisons than classic
Quicksort.%
\footnote{%
	Interestingly, tiny changes make Sedgewick's dual-pivot Quicksort
	competitive w.\,r.\,t.\ the number of comparisons; in fact it even needs only
	$28/15 n\ln n + \bo(n)$ comparisons \cite[Chapter\,5]{wild2012thesis}, which
	is less than Yaroslavskiy's algorithm!
	Yet, the many swaps dominate overall performance.	
}
Later \citeN{hennequin1991analyse} considers the general case of partitioning
into $s\ge 2$ partitions. For $s=3$, his Quicksort uses asymptotically the same
number of comparisons as classic Quicksort; for $s>3$, he attests minor
savings which, however, will not compensate for the much more complicated
partitioning process in practice.
These negative results may have discouraged further research along these lines
in the following two decades.

In 2009, however, Vladimir Yaroslavskiy presented his new dual-pivot Quicksort
variant at the Java core library mailing list.%
\footnote{%
	see e.\,g.\ the archive on
	\url{http://permalink.gmane.org/gmane.comp.java.openjdk.core-libs.devel/2628}
}
After promising running time benchmarks, Oracle decided to use Yaroslavskiy's
algorithm as default sorting method for arrays of primitive types%
\footnote{%
	Primitive types are all integer types as well as Boolean, character and
	floating point types.
	For arrays of objects, the library specification prescribes a stable sorting
	method, which Quicksort does not provide.
	Instead a variant of Mergesort is used, there.
}
in the Java~7 runtime library, even though literature did not offer an
explanation for the algorithm's good performance.

Only in 2012, \citeN{Wild2012} made a first step towards closing this gap by
giving exact expected numbers of swaps and comparisons for a simple version of
Yaroslavskiy's algorithm. 
We will re-derive these results here as a special case.

The surprising finding is that Yaroslavskiy's algorithm uses only $1.9n\ln n +
\bo(n)$ comparisons on average\,---\,asymptotically 5\,\% less than the $2 n\ln
n + \bo(n)$ comparisons needed by classic Quicksort.

\def\ck{\smash{$\mbox{\raisebox{-.2pt}{\scalebox{1.25}[1.15]{$\mathtt{C}$}}}_k$}\xspace}
\def\cg{\smash{$\mbox{\raisebox{-.2pt}{\scalebox{1.25}[1.15]{$\mathtt{C}$}}}{}_{\!\.g}$}\xspace}

The reason for the savings lies in the clever usage of stochastic dependencies:
Yaroslavskiy's algorithm contains two opposite pairs of locations~\ck
(lines~\ref*{lin:yaroslavskiy-comp-1} and~\ref*{lin:yaroslavskiy-comp-2} of
Algorithm~\ref{alg:yaroslavskiy}) and~\cg
(lines~\ref*{lin:yaroslavskiy-comp-3} and~\ref*{lin:yaroslavskiy-comp-4})
in the code where key comparisons are done: 
At~\ck, elements are first compared with the small pivot $p$ (in
line~\ref*{lin:yaroslavskiy-comp-1}) and then with the large pivot $q$ (in
line~\ref*{lin:yaroslavskiy-comp-2})\,---\,if still needed, i.\,e., only if the
element is larger than $p$.  
This means that we need only \emph{one} comparison to
identify a small element, whereas all other elements cost us a second
comparisons.
For \cg it is vice versa: We first compare with $q$, and thus large elements
are cheap to identify there.

By the way partitioning is organized, it happens that elements which are
initially to the \emph{right} of the final position of $q$ are classified at
\cg; whereas elements to the left are classified at \ck.
This implies that the \emph{number} of elements classified at \cg
\emph{co-varies} with the number of large elements: 
\cg is executed more often if there are more elements larger than $q$ (on
average) and similarly, \ck is visited often if there are many small elements.
Consequently, the probability that one comparison suffices to determine an
element's target partition is strictly larger than $1\./\.3$\,---\,which
would be the probability if \emph{all} elements are first compared to $p$ 
(or all first to $q$).
The asymmetric treatment of elements is the novelty that makes Yaroslavskiy's
algorithm superior to the dual-pivot partitioning schemes studied earlier.%
\footnote{%
	For details consider \cite{Wild2012} or the corresponding talk	at\\
	\url{http://www.slideshare.net/sebawild/average-case-analysis-of-java-7s-dual-pivot-quicksort}.
}

\smallskip
While the lower number of comparisons seems promising, Yaroslavskiy's dual-pivot
Quicksort needs more swaps than classic Quicksort,
so the high level analysis remains inconclusive.
In this paper, we extend our analysis to detailed instruction counts,
complementing previous work on classic Quicksort \cite{Sedgewick1977}.
However, instead of Knuth's slightly dated mythical machine MIX, we
consider the Java Virtual Machine \cite{Lindholm1999JVMSpec} and count the
number of executed Java Bytecode instructions.
\citeN{wild2012thesis} gives similar results for Knuth's MMIX
\cite{Knuth2005FascicleMMIX}, the successor of MIX.

The number of executed Bytecode instructions has been shown to resemble
actual running time \cite{Camesi2006BytecodeRuntime}, even though just-in-time
compilation can have a tremendous influence \cite{Wild2013Alenex} and some
aspects of modern processor architectures are neglected.

\smallskip
Extending the results of \citeN{Wild2012}, the analysis in this paper includes
sorting short subproblems with Insertionsort. 
Moreover, all previous results on Yaroslavskiy's algorithm only concern
expected behavior.
In this article, we show existence and give characterizations of limit
distributions.
A comforting result of these studies is that the standard deviation grows
linearly for Yaroslavskiy's algorithm as well, which implies concentration
around the mean.

This paper does not consider more refined ways to choose pivots, 
like selecting order statistics of a random sample.
We decided to defer a detailed treatment of Yaroslavskiy's algorithm
under this optimization to a separate article \cite{NebelWild2014}.

\medskip
The rest of this paper is organized as follows.
Section~\ref{sec:yaroslavskiys-algorithm} presents our object of
study.
In Section~\ref{sec:preliminaries}, we review basic notions used
in the analysis later. We also define our input model and collect elementary
properties of Yaroslavskiy's algorithm.
In Section~\ref{sec:average-case-analysis}, we derive exact average costs in
terms of comparisons, swaps and executed Bytecode instructions.
These are used in Section~\ref{sec:distributional-analysis} to identify a
limiting distribution of normalized costs in all three measures, from which we
obtain asymptotic variances.
Finally, Section~\ref{sec:conclusion} summarizes our findings and puts them in
context.

\subsection{Yaroslavskiy's Algorithm}
\label{sec:yaroslavskiys-algorithm}
\begin{algorithm}
	\begin{small}
	\begin{codebox}
\Procname{$\proc{QuicksortYaroslavskiy}\,(\arrayA,\id{left},\id{right})$}
\zi \Comment Sort $\arrayA[\id{left},\dots,\id{right}]$ (including end points).
\li \If $\id{right} - \id{left} < M$
	\qquad\qquad\Comment i.\,e.\ the subarray has $n\le M$ elements
\li \Then 
		\proc{InsertionSort}$(\arrayA,\id{left},\id{right})$
\li	\Else
\li		\If $\arrayA[\id{left}] > \arrayA[\id{right}]$
\li		\Then
			$p\gets \arrayA[\id{right}]$; \>\>\> $q\gets \arrayA[\id{left}]$
\li		\Else
\li			$p\gets \arrayA[\id{left}]$;  \>\>\> $q\gets \arrayA[\id{right}]$
		\EndIf
\li 	$\ell\gets \id{left} + 1$; 
 \quad $g\gets \id{right} - 1$; 
 \quad $k\gets \ell$ \label{lin:yaroslavskiy-init-l-g-k} 
\li		\While $k\le g$ 
\li 	\Do
			\If $\arrayA[k] < p$ \label{lin:yaroslavskiy-comp-1}
\li			\Then
				Swap $\arrayA[k]$ and $\arrayA[\ell]$ \label{lin:yaroslavskiy-swap-1}
\li				$\ell\gets \ell+1$ \label{lin:yaroslavskiy-l++-1}
\li			\Else 
\li				\If $\arrayA[k] \ge q$ \label{lin:yaroslavskiy-comp-2}
\li				\Then
					\While $\arrayA[g] > q$ and $k<g$ \kw{do} $g\gets g-1$ \kw{end while} \label{lin:yaroslavskiy-comp-3}
\li					\If $\arrayA[g] \ge p$ \label{lin:yaroslavskiy-comp-4}
\li					\Then
						Swap $\arrayA[k]$ and $\arrayA[g]$ \label{lin:yaroslavskiy-swap-2}
\li					\Else
\li						Swap $\arrayA[k]$ and $\arrayA[g]$;
	\;					Swap $\arrayA[k]$ and $\arrayA[\ell]$ \label{lin:yaroslavskiy-swap-3}
\li						$\ell\gets \ell+1$ \label{lin:yaroslavskiy-l++-2}
					\EndIf
\li					$g\gets g-1$ \label{lin:yaroslavskiy-g--}
				\EndIf
			\EndIf
\li			$k\gets k+1$ \label{lin:yaroslavskiy-k++}
		\EndWhile \label{lin:yaroslavskiy-end-while}
\li		$\ell\gets \ell-1$; \>\>\>$g\gets g+1$
\li		$\arrayA[\id{left}] \gets \arrayA[\ell]$; \>\>\>\> $\arrayA[\ell] \gets p$
			\label{lin:yaroslavskiy-swap-4} 
			\qquad\Comment Swap pivots to final position 
\li		$\arrayA[\id{right}] \gets \arrayA[g]$; \>\>\>\> $\arrayA[g] \gets q$
			\label{lin:yaroslavskiy-swap-5} 
\li		$\proc{QuicksortYaroslavskiy}\,(\arrayA,
				\makeboxlike[c]{$g+1$}{$\id{left}$},
				\makeboxlike[c]{$g+1$}{$\ell-1$}
			)$
		\label{lin:yaroslavskiy-call-1}
\li		$\proc{QuicksortYaroslavskiy}\,(\arrayA,
				\makeboxlike[c]{$g+1$}{$\ell+1$},
				\makeboxlike[c]{$g+1$}{$g-1$}
			)$
		\label{lin:yaroslavskiy-call-2}
\li		$\proc{QuicksortYaroslavskiy}\,(\arrayA,
				\makeboxlike[c]{$g+1$}{$g+1$},
				\makeboxlike[c]{$g+1$}{$\id{right}$}
			)$
		\label{lin:yaroslavskiy-call-3}
	\EndIf
\zi
\end{codebox}

	\end{small}
	\vspace{-3ex}
	\caption{\protect\rule[-.75ex]{0pt}{2.75ex}%
		\protect\person{Yaroslavskiy}'s Dual-Pivot Quicksort with Insertionsort.}
	\label{alg:yaroslavskiy}
\end{algorithm}
\noindent
Yaroslavskiy's dual-pivot Quicksort is shown in
Algorithm~\ref{alg:yaroslavskiy}.
The initial call to the procedure takes the form
\proc{QuicksortYaroslavskiy}$(\arrayA,1,n)$, where \arrayA is an array
containing the elements to be sorted and $n$ is its length.
After selecting the outermost elements as pivots $p$ and $q$ such that $p\le q$,
lines~\ref*{lin:yaroslavskiy-init-l-g-k}\,--\,\ref*{lin:yaroslavskiy-swap-5} of
Algorithm~\ref{alg:yaroslavskiy} comprise the partitioning method.
After that, all small elements, i.\,e., those smaller than $p$ (and $q$), form a
contiguous region at the left end of the array, followed by $p$ and the medium
elements.
Finally $q$ separates the medium and large elements.
After recursively sorting these three regions, the whole array is in order.

Yaroslavskiy's partitioning algorithm is an asymmetric generalization of Hoare's
crossing pointers technique: The index pointers $k$ and $g$ start at the left
and right ends, respectively, and are moved towards each other until they cross.
Additionally, pointer $\ell$ marks the position of the rightmost small element,
such that the array is kept invariably in the following form:

\begin{quote}
	\mbox{}\hfill%
	\begin{tikzpicture}[
		scale=0.5,
		baseline=(ref.south),
		every node/.style={font={}},
		semithick,
	]	
	
	\draw (-.75,0) -- ++(14.5,0) -- ++(0,1) -- ++(-14.5,0) -- cycle;
	\node at (-.375, .5) {$p$} ;
	\node at (13.375, .5) {$q$} ;
	\draw (0,0) -- ++(0,1);
	\draw (13,0) -- ++(0,1);
	
	\node at (1.5,0.5) {$< p$};
	\draw (3,1) -- ++ (0,-1);
	\node at (3.3,-0.4) {$\ell$};
	
	\node at (12,0.5) {$\ge q$};
	\draw (11,1) -- ++ (0,-1);
	\node at (10.7,-0.4) {$g$};
	
	\node at (5,0.5) {$p\le \circ\le q$};
	\draw (7,1) -- ++(0,-1);
	\node at (7.3,-0.4) {$k$};

 	\begin{pgfinterruptboundingbox}
	\node[below] at (10.7,-0.5) {$\leftarrow$};
	\node[below] at (3.3,-0.5) {$\rightarrow$};
	\node[below] at (7.3,-0.5) {$\rightarrow$};
 	\end{pgfinterruptboundingbox}
	
	\node[inner sep=0pt] (ref) at (9,0.5) {?};
	\end{tikzpicture}%
	\hfill\mbox{}
\end{quote}
\medskip

\noindent
Our Algorithm~\ref{alg:yaroslavskiy} differs from Algorithm~3 of
\cite{Wild2012} as follows:
\begin{longitem}
\item
	For lists of length less than $M$, we switch to \proc{InsertionSort}.%
	\footnote{%
		Note that even if \citeN{Sedgewick1977} proposes to use one final
		run of Insertionsort over the entire input array, modern cache hierarchies
		suggest to immediately sort small subarrays as done in our implementation.
	} 
	A possible implementation is given in Appendix~\ref{app:insertionsort}.
	The case $M=1$ corresponds to not using Insertionsort at all.
\item
	The swap of $\arrayA[k]$ and $\arrayA[g]$ has been moved behind the check
	$\arrayA[g] \ge p$. Thereby, we never use array positions in a key comparison
	after we have overwritten their contents in one partitioning step; see
	Fact~\ref{fact:fresh-elements} below.
	(This is just to simplify discussions.)
\item
	The comparison in line~\ref*{lin:yaroslavskiy-comp-2} has been made non-strict.
	For distinct elements this makes no difference, but it drastically improves
	performance in case of many equal keys \cite[p.\,54]{wild2012thesis}.
	The reader might find it instructive to consider the behavior on an array with
	all elements equal.
\end{longitem}

Note that partitioning an array around two pivots is similar in nature to the
\emph{Dutch National Flag Problem (DNFP)} posed by \citeN{Dijkstra1976} as a
programming exercise:
\begin{quote}
\sl Given an array of\/ $n$ red, white and blue pebbles, rearrange them by
swaps, such that the colors form the Dutch national flag: red, white and blue in
contiguous regions.
Each pebble may be inspected only once and only a constant amount of extra
storage may be used.
\end{quote}
Dijkstra assumes an operation ``buck'' that tells us an element's color in
one shot, so any algorithm must use exactly $n$ buck-operations.
Performance differences only concern the number of swaps needed.

Interestingly, Meyer gave an algorithm for the DNFP which
is essentially equivalent to Yaroslavskiy's partitioning method. Indeed, it even
outperforms the algorithm proposed by Dijkstra~\cite{McMaster1978}!
Yet, the real advantage of Yaroslavskiy's partitioning scheme\,---\,the reduced
expected number of key comparisons\,---\,is hidden by the atomic
buck operation; its potential use in Quicksort went unnoticed.

%
%
%

%
\section{Preliminaries}
\label{sec:preliminaries}

In this section, we recall elementary definitions and collect some notation and
basic facts used throughout this paper.

By $\harm n \ce \sum_{i=1}^n 1\mathbin/i$, we denote the $n$th \emph{Harmonic
Number}. We use $\delta_{i\!j}$ for the \emph{\person{Kronecker} delta}, which
is defined to be $1$ if $i=j$ and~$0$ otherwise. 
We define $x\ln(x)=0$ for $x=0$, so that $x\mapsto x\ln(x)$ becomes a continuous 
function on $[0,\infty)$.

The \emph{probability} of an event $E$ is denoted by $\Prob[E]$ and we write
$\indicator{E}$ for its \emph{indicator random variable}, which is~$1$ if the
event occurs and~$0$ otherwise.
For a random variable~$X$, let $\E[X]$, $\V(X)$ and ${\cal L}(X)$ denote its
\emph{expectation}, \emph{variance} and \emph{distribution}, respectively.
$X \mathrel{\smash{\eqdist}} Y$
means that $X$ has the \emph{same distribution} as $Y$.

By $\|X\|_p \ce \E[|X|^p]^{1/p}$, $1\le p<\infty$,
we denote the \emph{$L_p$-norm} of random variable
$X$.
For random variables $X_1,X_2,\ldots$ and $X$, we say $X_n$ \emph{converges
in $L_p$ to $X$}
$$
	X_n \wwrel{\overset{L_p}\longrightarrow} X
	\qquad\qquad\text{iff}\qquad\qquad
	\lim_{n\to\infty} \| X_n - X \|_p \wwrel= 0 \;.
$$

The \emph{Bernoulli distribution} with parameter $p$ is written as
$\bernoulli(p)$.
Provided that \smash{$\sum_{r=1}^b p_r = 1$} and  $b\ge 1$ is a fixed integer,
we denote by $\multinomial(n;p_1,\ldots,p_b)$ the \emph{multinomial
distribution} with $n$ trials and success probabilities $p_1,\ldots,p_b\in[\.0,1]$.
For \emph{random probabilities} $V=(V_1,\ldots,V_b)$, i.\,e., random variables
$0\le V_r\le 1$ ($r=1,\ldots,b$) with $\sum_{r=1}^b V_r =1$ almost surely, we
write
$
	Y \eqdist \multinomial(n;V_1,\ldots,V_b)
$
to denote that
$Y$ \emph{conditional} on $V=v$ (i.\,e., conditional on
$(V_1,\ldots,V_b)=(v_1,\ldots,v_b)$) is multinomially
$\multinomial(n;v_1,\ldots,v_b)$ distributed.

For $k,r,b\in\Nset$ satisfying $k\le r+b$, the \emph{hypergeometric
distribution} with $k$ trials from $r$ red and $b$ black balls is denoted by
$\mathrm{HypG}(k,r,r+b)$.
Given an urn with $r$ red and $b$~black balls, it is the distribution of
the number of red balls drawn when drawing $k$~times without replacement.
The mean and variance of a hypergeometrically $\mathrm{HypG}(k,r,r+b)$ distributed
random variable~$G$ are given by
\cite[p.\,127]{Kendall1945}
\begin{align}
\label{eq:mean-variance-hypergeometric}
	\E[G]	&\wwrel=	k\cdot\frac{r}{r+b}\,,&
	\V(G)	&\wwrel=	\frac{krb(r+b-k)}{(r+b)^2(r+b-1)}\;.
\end{align}
As for the multinomial distribution, given \emph{random} parameters $K$, $R$ in
$\{0,\ldots,n\}$ we use
$
	Y \eqdist \mathrm{HypG}(K,R,n)
$
to denote that
$Y$ \emph{conditional} on $(K,R)=(k,r)$ is hypergeometrically
$\mathrm{HypG}(k,r,n)$ distributed.

%
%

\subsection{Input Model}

We assume the \emph{random permutation model}:
The keys to be sorted are the integers $1,\ldots,n$ and each
permutation of $\{1,\dots,n\}$ has equal probability
$1\./n!$ to become the input.
Note that we implicitly exclude the case of equal keys by that.

As sorting is only concerned with the relative order of elements,
not the key values themselves, we can equivalently assume keys to be
i.\,i.\,d.\ real random variables from any (non-degenerate) continuous
distribution.
Equal keys do not occur almost surely and the \emph{ranks} of the elements form
in fact a random permutation of $\{1,\dots,n\}$ again
[see e.\,g.\ \citeNP{mahmoud2000sorting}]. For the analysis of
Section~\ref{sec:distributional-analysis}, this alternative point of view will
be helpful.

\subsection{Basic Properties of Yaroslavskiy's Algorithm}

As typical for divide and conquer algorithms, the analysis is based on
setting up a recurrence relation for the costs.
For such a recurrence to hold, it is vital that the costs for subproblems of
size~$k$ behave the same as the costs for dealing with an original random input
of initial size~$k$.
For Quicksort, we require the following property:
\begin{property}[Randomness Preservation]
\label{pro:randomness-preservation}
~\\If the whole input is a (uniformly chosen) random permutation of its
elements, so are the subproblems Quicksort is recursively invoked on.
\end{property}
\citeN{hennequin1989combinatorial} showed that
Property~\ref{pro:randomness-preservation}
is implied by the following property.
\begin{property}[Sufficient Condition for Randomness Preservation]
\label{pro:only-comparisons-with-pivots}
~\\Every key comparison involves a pivot element of the current partitioning
step.
\end{property}
Now, it is easy to verify that Yaroslavskiy's algorithm fulfills
Property~\ref{pro:only-comparisons-with-pivots} and hence
Property~\ref{pro:randomness-preservation}.

\medskip\noindent
Since Yaroslavskiy's algorithm is an in-place sorting method,
it modifies the array~$\arrayA$ over time.
This dynamic component makes discussions inconvenient.
Fortunately, a sharp look at the algorithm reveals the following fact,
allowing a more static point of view:
\begin{fact}
\label{fact:fresh-elements}
	The array elements used in key comparisons have not been changed since the
	beginning of the current partitioning step.
	More precisely, if a key comparison involves an array element $\arrayA[i]$,
	then there has not been a write access to $\arrayA[i]$ in the current
	partitioning step.
	\qed
\end{fact}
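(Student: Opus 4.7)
The plan is to identify every array read that feeds a key comparison in Algorithm~\ref{alg:yaroslavskiy} and to show, by a loop invariant, that the corresponding cell has not yet been written to in the current partitioning step. Inspection of the algorithm reveals five comparison sites: the pivot-ordering test $\arrayA[\id{left}] > \arrayA[\id{right}]$, the two tests on $\arrayA[k]$ at lines~\ref*{lin:yaroslavskiy-comp-1} and~\ref*{lin:yaroslavskiy-comp-2}, and the two tests on $\arrayA[g]$ at lines~\ref*{lin:yaroslavskiy-comp-3} and~\ref*{lin:yaroslavskiy-comp-4}. The initial pivot comparison is trivial, since it precedes all writes of the partitioning step. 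Everything else reduces to the loop invariant $(\star)$: \emph{at the start of every iteration of the outer while loop, no cell $\arrayA[i]$ with $k \le i \le g$ has yet been written to in this partitioning step.}

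I would prove $(\star)$ by induction on the number of iterations. The base case is immediate from line~\ref*{lin:yaroslavskiy-init-l-g-k}, since at that point no array writes have occurred. For the induction step I would distinguish the three mutually exclusive branches of the loop body: (i)~$\arrayA[k] < p$, which triggers swap-1; (ii)~$p \le \arrayA[k] < q$, no swap; and (iii)~$\arrayA[k] \ge q$, which triggers swap-2 or swap-3 after an inner scan. In each branch one checks that all reads inside the iteration precede all writes, that the only positions written are $\ell$, $k$, and (in case~(iii)) $g$ at their current values, and that the subsequent updates $k \leftarrow k+1$ together with the possible updates $\ell \leftarrow \ell+1$ and $g \leftarrow g-1$ shrink the range $[k,g]$ in such a way that none of the freshly written positions remain inside it. Hence $(\star)$ propagates to the next iteration.

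The only slightly subtle point is the inner while loop at line~\ref*{lin:yaroslavskiy-comp-3}: it repeatedly reads $\arrayA[g]$ for a sequence of decreasing $g$ yet performs no writes, so by~$(\star)$ every position it touches is still untouched; the same argument covers the follow-up read at line~\ref*{lin:yaroslavskiy-comp-4}, whose $g$ is at most the value in effect at entry of the body. There is no genuine mathematical difficulty here---the proof is a mechanical case analysis, and the main challenge is just careful bookkeeping: verifying in every branch that all reads are exhausted before the first write and that the updated range $[k,g]$ consists solely of positions that were already in $[k,g]$ before the iteration began.
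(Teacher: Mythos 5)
Your proposal is correct. The paper itself offers no proof of Fact~\ref{fact:fresh-elements}---it is asserted as evident from inspection of the algorithm (and the text notes that the swap in line~\ref*{lin:yaroslavskiy-swap-2} was deliberately moved behind the test $\arrayA[g]\ge p$ precisely so that this fact holds)---and your loop invariant $(\star)$ is exactly the right formalization of that inspection: every comparison reads a position in the current range $[k,g]$ (or $\id{left}$, $\id{right}$ before any write), all reads in an iteration precede its writes, and the written positions $k$, $g$ and $\ell\le k$ fall outside the shrunken range, so the invariant propagates. The only bookkeeping item worth making explicit in a full write-up is the auxiliary invariant $\ell\le k$, which you use implicitly when discarding position $\ell$ from the new range.
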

\section{Average Case Analysis}
\label{sec:average-case-analysis}

Throughout Section~\ref*{sec:average-case-analysis}, we assume that array
\arrayA stores a random permutation of~$\{1,\ldots,n\}$.

\subsection{The Dual-Pivot Quicksort Recurrence}
\label{sec:recurrence-solution}

In this section, we obtain a general solution to the
recurrence relation corresponding to dual-pivot Quicksort.
We denote by 
$\E[C_n]$ 
the \emph{expected costs}\,---\,where different
cost measures will be inserted later\,---\,of Yaroslavskiy's algorithm on a
random permutation of $\{1,\ldots,n\}$.
$\E[C_n]$ decomposes as
\begin{align}
	\E[C_n] &\wwrel= \text{costs of first partitioning step}
						\wbin+ \text{costs for subproblems}
				\;.
\end{align}
As Yaroslavskiy's algorithm satisfies
Property~\ref{pro:randomness-preservation}, the costs for recursively sorting
subarrays can be expressed in terms of $C$ with smaller arguments,
leading to a recurrence relation.
Every (sorted) pair of elements has the same probability $1\big/\binom n2$ of
becoming pivots.
Conditioning on the ranks of the pivots, this gives the following recursive form
for the expected costs $\E[C_n]$ of Yaroslavskiy's algorithm on a random
permutation of size~$n$:
\begin{align}
\label{eq:recurrence}
	\E[C_n] &\wrel= \begin{cases}
			\displaystyle
			\E[\pc_n] \wbin+ 1 \big/ \tbinom n2
			\!\!\! \sum_{1\le p<q\le n}\!\!\! \bigl(
					  \E[C_{p-1}] + \E[C_{q-p-1}] + \E[C_{n-q}]
				\bigr)\,,
			& \text{for } n>M; \\
		\E[\insertsortcost_n]\,,
			& \text{for } n\le M,
	\end{cases}
\end{align}
where $\insertsortcost_n$ denotes the costs of
\proc{InsertionSort}\kern.5pt ing a random permutation of $\{1,\ldots,n\}$
and $\pc_n$ is the cost contribution of the \emph{first}
partitioning step.
This function $\pc_n$ quantifies the ``toll'' we have to pay for unfolding the
recurrence once, therefore we will call $\pc_n$ the \emph{toll function} of the
recurrence.
By adapting the toll function, we can use the same
recurrence to describe different kinds of costs and we only need to derive
a general solution to this single recurrence relation as provided by the
following theorem:

\begin{theorem}
\label{thm:recurrence-solution}
	Let $\E[C_n]$ be recursively defined by \eqref{eq:recurrence}. Then, $\E[C_n]$
	satisfies
	\begin{multline}
	\label{eq:Cn-explicit}
		\E[C_n] \wwrel= \frac1{\tbinom n4}
				\sum_{i=M+4}^{n}\!\!\!\tbinom i4\!\!\! \sum_{j=M+2}^{i-2}
					\biggl(
						      \E[\pc_{j+2}]
						\bin- \tfrac{2j}{j+2} \E[\pc_{j+1}]
						\bin+ \tfrac{\tbinom j2}{\tbinom{j+2}2} \E[\pc_j]
					\biggr) \\*
			 \wbin+ \biggl(
					  \! \tfrac{n+1}5
					+ \frac{\tbinom{M+3}4-\tbinom{M+4}5}{\tbinom n4}
				\biggr) \E[C_{M+3}]
				\wwbin- \tfrac{M-1}{M+3} \biggl(
					  \! \tfrac{n+1}5
					- \frac{\tbinom{M+4}5}{\tbinom n4}
				\biggr) \E[C_{M+2}]\,,
		\qquad \text{for } n\ge M+3.
	\end{multline}
	As an immediate consequence, $\E[C_n]$\,---\,seen as a function of
	$\E[\pc_n]$\,---\,is \emph{linear} in $\E[\pc_n]$.
\end{theorem}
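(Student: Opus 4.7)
My plan is to reduce the convolution recurrence in~\eqref{eq:recurrence} to a first-order linear recurrence by the classical second-difference trick, solve that explicitly via an integrating-factor telescoping, and then sum back to recover $\E[C_n]$ in closed form.

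First, I would simplify the triple sum in~\eqref{eq:recurrence} by symmetry: each subarray size $k\in\{0,\ldots,n-2\}$ occurs exactly $3(n-k-1)$ times across the three summands $\E[C_{p-1}]+\E[C_{q-p-1}]+\E[C_{n-q}]$, so~\eqref{eq:recurrence} becomes $\E[C_n]=\E[\pc_n]+\tfrac{6}{n(n-1)}\sum_{k=0}^{n-2}(n-k-1)\E[C_k]$ for $n>M$. Multiplying by $\binom{n}{2}$ and applying the second-difference operator $f(m)\mapsto f(m)-2f(m-1)+f(m-2)$ eliminates the convolution sum entirely (the first difference reduces it to a partial sum of $\E[C_k]$; the second kills it), yielding the three-term linear recurrence
\begin{equation*}
m(m{-}1)\E[C_m]-2(m{-}1)(m{-}2)\E[C_{m-1}]+m(m{-}5)\E[C_{m-2}]\wwrel= m(m{-}1)\alpha_{m-2},\qquad m\ge M+3,
\end{equation*}
where $\alpha_j:=\E[\pc_{j+2}]-\tfrac{2j}{j+2}\E[\pc_{j+1}]+\tfrac{\binom{j}{2}}{\binom{j+2}{2}}\E[\pc_j]$ is precisely the bracketed expression of~\eqref{eq:Cn-explicit}. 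Since this derivation uses the convolution form at three consecutive indices, its validity forces $m-2>M$, which is why $\E[C_{M+2}]$ and $\E[C_{M+3}]$ play the role of initial data.

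A direct substitution shows that $m\mapsto m+1$ is a solution of the homogeneous equation; reduction of order via $\E[C_m]=(m+1)U_m$ followed by $V_m:=U_m-U_{m-1}$ collapses everything to $(m+1)V_m-(m-5)V_{m-1}=\alpha_{m-2}$. Multiplying by the summation factor $\binom{m}{5}$ turns this into the telescoping difference $6\binom{m+1}{6}V_m-6\binom{m}{6}V_{m-1}=\binom{m}{5}\alpha_{m-2}$; summing from $m=M+4$ to $n$ expresses $V_n$ as a linear combination of $V_{M+3}=\E[C_{M+3}]/(M+4)-\E[C_{M+2}]/(M+3)$ and $\sum_{k=M+4}^{n}\binom{k}{5}\alpha_{k-2}$. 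A second summation $U_n=U_{M+2}+\sum_{m=M+3}^{n}V_m$, combined with a swap of order of summation and the telescoping identity $\tfrac{1}{\binom{m+1}{6}}=\tfrac{6}{5}\bigl[\tfrac{1}{\binom{m}{5}}-\tfrac{1}{\binom{m+1}{5}}\bigr]$, produces for the coefficient of $\alpha_j$ in $\E[C_n]=(n+1)U_n$ the expression $\tfrac{1}{\binom{n}{4}}\sum_{i=j+2}^{n}\binom{i}{4}$ that appears in~\eqref{eq:Cn-explicit}; the terms carrying $V_{M+3}$ and $U_{M+2}$ regroup, using $\binom{n+1}{5}/\binom{n}{4}=(n+1)/5$ and $\binom{M+4}{5}/\binom{M+3}{4}=(M+4)/5$, into the two displayed boundary terms in $\E[C_{M+3}]$ and $\E[C_{M+2}]$. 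The linearity claim is then immediate, because every step in this chain is a linear operation on the sequence $(\E[\pc_n])_n$ while the boundary values $\E[C_{M+2}]$ and $\E[C_{M+3}]$ do not depend on $\pc$.

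The main obstacle is purely computational bookkeeping: one must carefully respect the index ranges of the two nested summations, propagate $V_{M+3}$ through the second sum, and regroup terms so that the resulting double sum collapses to precisely the shape stated in~\eqref{eq:Cn-explicit}. A sanity check at $n=M+3$ (where the outer sum is empty, so the $\E[C_{M+3}]$-coefficient must evaluate to $1$ and the $\E[C_{M+2}]$-coefficient to $0$), and at $n=M+4$ (where the formula must reproduce the three-term recurrence), confirms both the reduction and the final form.
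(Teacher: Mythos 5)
Your proposal is correct and follows essentially the same route as the paper's proof in Appendix~\ref{app:recurrence-proofs}: the same symmetry reduction of the triple sum, the same second difference of $\tbinom n2 \E[C_n]$ yielding the three-term recurrence with right-hand side $m(m-1)\alpha_{m-2}$, and the same two telescoping summations with the same boundary bookkeeping for $\E[C_{M+2}]$ and $\E[C_{M+3}]$. The only variation is in how the second-order recurrence is split into two first-order telescopes: the paper guesses the ansatz $F_n = \E[C_n] - \tfrac{n-4}{n}\E[C_{n-1}]$ and verifies $F_{n+2}-F_{n+1}=f(n)$ directly, whereas you derive the same factorization systematically by dividing out the homogeneous solution $m+1$ and applying the summing factor $\tbinom m5$ --- a cosmetic difference that does not change the structure of the argument.
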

The proof for Theorem~\ref{thm:recurrence-solution} uses several layers
of successive differences of $\E[C_n]$ to finally obtain a telescoping
recurrence. Substituting back in then yields~\eqref{eq:Cn-explicit}. 
The detailed computations are given in Appendix~\ref{app:recurrence-proofs}.
This general solution still involves non-trivial double sums.
For the cost measures we are
interested in, the following proposition gives an
explicit solution for~\eqref{eq:recurrence}.

\begin{proposition}
\label{pro:Cn-explicit-linear-pc}
	Let $\E[C_n]$ be recursively defined by \eqref{eq:recurrence} and let
	$\E[\pc_n] = an + b$ for $n \ge M+1$.
	Then, $\E[C_n]$ satisfies
	\begin{multline}
	\label{eq:Cn-exact-solution-linear-pc}
		\E[C_n] \wwrel=
					\tfrac65 a (n+1) \bigl( \harm{n+1} - \harm{M+2} \bigr)
				\wbin+ \tfrac15 (n+1) \bigl(
					  \tfrac{19}5 a
					+ \tfrac{6(b-a)}{M+2}
				\bigr)
				\wbin+ \tfrac{a-b}2
					\\*
				\wbin+
					\tfrac15 (n+1)
					\sum_{k=0}^M \frac{3M-2k}{\tbinom{M+2}3} \E[\insertsortcost_k]
				\wwbin+
					\frac{\tbinom{M+4}5}{\tbinom n4} R_M
				\,,
		\qquad \text{for } n\ge M+3,
	\end{multline}
	where	
	$$
		R_M \wwrel=    
			  \tfrac65 a + \tfrac{2(a-b)}{M+3}+\tfrac{5b-17a}{2(M+4)}
			- \tfrac{M-1}{M+4} \E[C_{M+3}]
			+ \tfrac{M-1}{M+3} \E[C_{M+2}]\;.
	$$%
	If $\E[\insertsortcost_n = 0]$ for all $n$, $\E[C_n]$ has the following
	asymptotic representation:
	\begin{align}
	\label{eq:Cn-explicit-linear-pc}
		\E[C_n] &\wwrel =
				\tfrac65 a n\ln n
			\wbin+ \bigl(
						  \tfrac{19}{25}a + W
					\bigr)\.n
			\wbin+	 \tfrac65 a \ln n
			\wbin+ \bigl( \tfrac{153}{50}a - \tfrac12b + W \bigr)
			\wwbin+ \bo\bigl(\tfrac1n\bigr)
				\,,\qquad n\to\infty,
	\end{align}
    where
    \begin{align*}
		W &\wwrel= \tfrac65 \bigl(
						   a \, \gamma
						+ \tfrac{b-a}{M+2}
						- a \harm{M+2}
					\bigr)
	\end{align*}
    and $\gamma \approx 0.57721$ is the \person{Euler-Mascheroni} constant.

    \smallskip
    Moreover, if the toll function $\E[\pc_n]$ has essentially the form given
	above, but with $\E[\pc_2] = 0 \ne 2a+b$,
	we get an additional summand
	$-\delta_{M1} \cdot \smash{\tfrac1{10}}	(2a+b) \cdot (n+1)$
	in~\eqref{eq:Cn-exact-solution-linear-pc}.
	Equation~\eqref{eq:Cn-explicit-linear-pc} remains valid if we set
	$W=\tfrac65 \bigl(
						   a \, \gamma
						+ \tfrac{b-a}{M+2}
						- a \harm{M+2}
						- \delta_{M1}(2a+b)\mathbin/10
					\bigr)$.
\end{proposition}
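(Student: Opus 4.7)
The plan is to specialize the master formula~\eqref{eq:Cn-explicit} of Theorem~\ref{thm:recurrence-solution} to the linear toll function $\E[\pc_n] = an + b$. The key algebraic observation is that the second-order combination appearing inside the double sum,
$$
  \E[\pc_{j+2}] - \tfrac{2j}{j+2}\E[\pc_{j+1}] + \tfrac{\tbinom{j}{2}}{\tbinom{j+2}{2}}\E[\pc_j],
$$
simplifies for a linear sequence to the single rational function $\tfrac{6aj+4a+2b}{(j+1)(j+2)}$, as one verifies by clearing denominators and observing that the cubic and quadratic coefficients in $j$ cancel identically. Since $j\ge M+2$ throughout the double sum, the assumed linearity range $n\ge M+1$ covers every index $j,j+1,j+2$ that is touched.

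Next I would perform the inner sum via the partial-fraction decomposition
$$
  \tfrac{6aj+4a+2b}{(j+1)(j+2)} = \tfrac{6a}{j+2} + 2(b-a)\bigl(\tfrac{1}{j+1} - \tfrac{1}{j+2}\bigr),
$$
so that $\sum_{j=M+2}^{i-2}$ collapses to $6a(\harm{i}-\harm{M+3}) + 2(b-a)\bigl(\tfrac{1}{M+3} - \tfrac{1}{i}\bigr)$ by telescoping and the definition of harmonic numbers. The outer sum $\sum_{i=M+4}^{n}\tbinom{i}{4}(\,\cdot\,)$ then splits into three elementary binomial sums, evaluated by the hockey-stick identity $\sum_{i=k}^{n}\tbinom{i}{k}=\tbinom{n+1}{k+1}$, the relation $\tbinom{i}{4}/i = \tfrac14\tbinom{i-1}{3}$, and the classical identity $\sum_{i=k}^{n}\tbinom{i}{k}\harm{i}=\tbinom{n+1}{k+1}\bigl(\harm{n+1}-\tfrac{1}{k+1}\bigr)$ (provable in one step of Abel summation). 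Dividing through by $\tbinom{n}{4}$ and using $\tbinom{n+1}{5}/\tbinom{n}{4} = (n+1)/5$ extracts the leading $\tfrac{6}{5}a(n+1)(\harm{n+1}-\harm{M+2})$ contribution; the remaining $(n+1)$-linear coefficients should collapse to $\tfrac{1}{5}(n+1)\bigl[\tfrac{19}{5}a + \tfrac{6(b-a)}{M+2}\bigr]$ and the constant to $\tfrac{a-b}{2}$.

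The main obstacle, I anticipate, is the bookkeeping around the boundary values $\E[C_{M+2}]$ and $\E[C_{M+3}]$ appearing in \eqref{eq:Cn-explicit}. Each is determined by unfolding \eqref{eq:recurrence}: for $\E[C_{M+2}]$ all subproblem sizes lie in $\{0,\dots,M\}$, so a single application produces a linear combination of $\E[\insertsortcost_k]$; for $\E[C_{M+3}]$ one subproblem may have size $M+1$, which recursively dissolves into $\E[\insertsortcost_k]$ one level deeper. Once the weight on each $\E[\insertsortcost_k]$ is computed symbolically, they must coalesce into the sum $\tfrac{1}{5}(n+1)\sum_{k=0}^{M}\tfrac{3M-2k}{\tbinom{M+2}{3}}\E[\insertsortcost_k]$, while the residual pieces in $a,b,M$ that decay with $n$ should collect into the error term $\tbinom{M+4}{5}/\tbinom{n}{4}\cdot R_M$. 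This is mostly bookkeeping, but it is where the delicate cancellations occur.

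For the asymptotic expansion \eqref{eq:Cn-explicit-linear-pc} in the noise-free case $\E[\insertsortcost_k]=0$, I would substitute $\harm{n+1} = \ln n + \gamma + \tfrac{1}{2n} + O(n^{-2})$ into the exact formula and collect terms up to $O(n^{-1})$; the factor $\tbinom{M+4}{5}/\tbinom{n}{4}$ contributes $O(n^{-4})$ and is absorbed. Finally, the correction for $\E[\pc_2] = 0 \neq 2a+b$, which is relevant only when $M=1$ (so that $n=2$ lies in the linearity range $n\ge M+1$), is handled by pretending $\pc$ is linear everywhere and subtracting back the contribution of the spurious value $\E[\pc_2]=2a+b$. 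This single anomalous toll propagates through \eqref{eq:recurrence} into $\E[C_2]$, and thence into $\E[C_{M+2}]$ and $\E[C_{M+3}]$, producing exactly the stated additive shift $-\delta_{M1}\tfrac{1}{10}(2a+b)(n+1)$ and the corresponding modification of $W$ in the asymptotic form.
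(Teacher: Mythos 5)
Your proposal follows the paper's own proof essentially step for step: the same simplification of the second-order difference $\eta_j$ to $\tfrac{8a-2b}{j+2}-\tfrac{2a-2b}{j+1}$ (your form $\tfrac{6aj+4a+2b}{(j+1)(j+2)}$ is the same function), the same harmonic-number and hockey-stick identities for the double sum, the same two-level unfolding of $\E[C_{M+2}]$ and $\E[C_{M+3}]$ into the $\E[\insertsortcost_k]$, and the same treatment of the $\E[\pc_2]$ anomaly. The only tiny inaccuracy is that for $M=1$ the spurious toll $\E[\pc_2]$ enters exclusively through $\E[C_{M+3}]$ (via the size-$(M+1)$ subproblem), not through $\E[C_{M+2}]$, whose subproblems all have size at most $1$; this does not affect the stated correction term.
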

The proof of Proposition~\ref{pro:Cn-explicit-linear-pc} is basically ``by
computing'', the details are again deferred to
Appendix~\ref{app:recurrence-proofs}.

\emph{Remark:}
For constant $M$, i.\,e., $M=\Theta(1)$ as $n\to\infty$, only the linear term of
the expected costs is affected by $M$.
This means that for the leading term of $\E[C_n]$, the ``base case strategy''
for solving small subproblems is totally irrelevant.

\subsection{Basic Block Execution Frequencies}
\label{sec:expected-frequencies}

In this section, we compute for every single instruction of Yaroslavskiy's
algorithm  how often it is executed in expectation.
Based on that, we can easily derive the expected number of key comparisons,
swaps, but also more detailed measures, such as the expected number of
executed Bytecode instructions.
This is the kind of analysis \person{Knuth} popularized through his book series
\weakemph{The Art of Computer Programming} \cite{Knuth1998}.
A~corresponding analysis of classic single pivot Quicksort was done by
\citeN{Sedgewick1977}. Like the Quicksort variant discussed there,
Algorithm~\ref{alg:yaroslavskiy} uses Insertionsort for
sorting small subarrays. 
Our detailed implementation of Insertionsort and
its analysis are given in Appendix~\ref{app:insertionsort}.

Consecutive lines of purely sequential%
\footnote{%
	Purely sequential blocks contain neither (outgoing) jumps, nor targets for
	(incoming) jumps from other locations, except for the last and first
	instructions, respectively.	
}
code always have the same execution
frequencies; 
contracting maximal blocks of such code yields the control flow
graph (CFG).
Figure~\ref{fig:control-flow-graph} shows the resulting CFG for
Yaroslavskiy's algorithm.
Simple flow conservation arguments (a.\,k.\,a.\ Kirchhoff's laws) allow to
express execution frequencies of some blocks by the frequencies of others:
The execution frequencies of the $20$~basic blocks of
Figure~\ref{fig:control-flow-graph} only depend on the following nine
frequencies: $A$, $B$, $R$, $F$, $\totalcomparisonmarker{1}$,
$\totalcomparisonmarker{3}$, $\totalcomparisonmarker{4}$,
$\totalswapmarker{1}$ and~$\totalswapmarker{3}$.
The name $\totalcomparisonmarker{i}$ indicates that this frequency counts
executions of the $i$th location in the code of Algorithm~\ref{alg:yaroslavskiy},
where a key comparison is done.
Similarly, $\totalswapmarker{i}$ corresponds to the $i$th swap
location.

\begin{figure}
	\mbox{}\hfill%
	\begin{tikzpicture}[
	node distance=5mm,
	shorten >=.75pt,
	every node/.style={font={\scriptsize},inner sep=2.5pt},
	basic block/.style={
		fill=black!5,
		draw,
		shape=rectangle split,
		rectangle split	parts=2,
		rounded corners=2pt,
	}
]

\newcommand{\block}[5][1.5cm]{%
	\let\C\totalcomparisonmarker%
	\let\S\totalswapmarker%
	\setcounter{basicblocknumber}{#2}%
	\addtocounter{basicblocknumber}{-1}%
	\refstepcounter{basicblocknumber}%
	\makebox[#1]{\,\textbf{#2}%
	\label{bb:#2}%
	\:\hfill%
		\smash{$#4$}{\tiny$\vphantom{\totalcomparisonmarker{|}}$}\,%
	}%
	\nodepart{two}\parbox{#1}{%
		\centering \pbox{#1}{%
			\setcounter{basicblocknumber}{#2}%
			\addtocounter{basicblocknumber}{-1}%
			\refstepcounter{basicblocknumber}%
			#5%
		}
	}%
}
\def\ak{\arrayA[k]}
\def\Swap#1#2{Swap $\arrayA[#1]$, $\arrayA[#2]$;}
\node[basic block] (b1) at (0,0) 
	{\block[2cm]{1}{?}{R}{%
		$\id{right} - \id{left} < M$
	}} ;
\node[basic block,left=10mm of b1] (b3a)
	{\block[2cm]{3}{}{A}{%
		$\arrayA[\id{left}] > \arrayA[\id{right}]$
		\label{bb:comp-0}
	}} ;
\node[basic block,below left=5mm and -5mm of b3a] (b3b)
	{\block[1.75cm]{4}{}{B}{%
		$p\gets \arrayA[\id{right}]$; \\
		$q\gets \arrayA[\id{left}]$;
		\label{bb:p-right-q-left}
	}} ;
\node[basic block,below right=5mm and -5mm of b3a] (b3c)
	{\block[1.75cm]{5}{}{A-B}{%
		$p\gets \arrayA[\id{left}]$; \\
		$q\gets \arrayA[\id{right}]$;
		\label{bb:p-left-q-right}
	}} ;
\node[basic block,below=21mm of b3a] (b3d) 
	{\block[4cm]{6}{?}{A}{%
		$\like[l]{n\strut}{\ell}\gets \id{left} + 1$; \;
		$\like[l]{n\strut}{g}\gets \id{right} - 1$; \;
		$\like[l]{n\strut}{k}\gets \ell$;
		\label{bb:init-k-l-g}
	}} ; 
\node[basic block,below=of b3d] (b4) 
	{\block{7}{3}{A+\C1}{%
		$k\le g$
		\label{bb:k-le-g}
	}} ;
\node[basic block,below=of b4] (b5)
	{\block{8}{7}{\C1}{%
		$\ak < p$
		\label{bb:comp-1}
	}} ;
\node[basic block,below=of b5] (b7)
	{\block{10}{3}{\C1-\S1}{%
		$\ak\ge q$
		\label{bb:comp-2}
	}} ;
\node[basic block,left=of b7] (b6)
	{\block[2cm]{9}{12}{\S1}{%
		\Swap k\ell \\
		$\ell\gets\ell+1$;
		\label{bb:swap-1}
	}} ;
\node[basic block,right=of b7] (b8)
	{\block{11}{5}{\C3}{%
		$\arrayA[g] > q$
		\label{bb:comp-3}
	}} ;
\node[basic block,right=of b8] (b9) 
	{\block[1.75cm]{12}{3}{\scriptstyle{\C3-\C4+F}}{%
		$k < g$
		\label{bb:inner-k-less-g}
	}} ;
\node[basic block,below right=5mm and 10mm of b9] (b2)
	{\block[3.5cm]{2}{?}{R-A}{%
		\proc{InsertionSort$\,(\arrayA,\id{left},\id{right})$};\strut
		\label{bb:call-insertionsort}
	}} ; 
\node[basic block,above=of b8] (b10) 
	{\block{13}{2}{\C3-\C4}{%
		$g\gets g-1$;
		\label{bb:inner-g--}
	}} ;
\node[basic block,below=of b8] (b11) 
	{\block{14}{5}{\C4}{%
		$\arrayA[g] \ge p$
		\label{bb:comp-4}
	}} ;
\node[basic block,below right=5mm and -5mm of b11] (b13)
	{\block[2cm]{16}{14}{\S3}{%
		\Swap kg \\
		\Swap k\ell \\
		$\ell \gets \ell+1$;
		\label{bb:swap-3}
	}} ;
\node[basic block,left=of b13] (b12)
	{\block[2cm]{15}{6}{\C4-\S3}{%
		\Swap kg
		\label{bb:swap-2}
	}} ;
\node[basic block,below left= 5mm and -5mm of b13] (b14)
	{\block{17}{5}{\C4}{%
		$g\gets g-1$;
		\label{bb:outer-g--}
	}} ;
\node[basic block,left=1.5cm of b14] (b15)
	{\block{18}{2}{\C1}{%
		$k\gets k+1$;
		\label{bb:k++}
	}} ;
\node[basic block,below= 4.75cm of b7] (b16)
	{\block[5.5cm]{19}{?}{A}{%
		\makebox[2cm][l]{$\ell \gets \ell - 1$;}
		$g \gets g + 1$; \\[.25ex]
		\makebox[2cm][l]{$\arrayA[\id{left}] \gets \arrayA[\ell]$;}
		$\arrayA[\ell] \gets p;$ \\
		\makebox[2cm][l]{$\arrayA[\id{right}] \gets \arrayA[g]$;}
		$\arrayA[g] \gets q;$ \\[.25ex]
		$\proc{QuicksortYaroslavskiy}\,(\arrayA,
				\makeboxlike[c]{$g+1$}{$\id{left}$},
				\makeboxlike[c]{$g+1$}{$\ell-1$}
			)$; \\
		$\proc{QuicksortYaroslavskiy}\,(\arrayA,
				\makeboxlike[c]{$g+1$}{$\ell+1$},
				\makeboxlike[c]{$g+1$}{$g-1$}
			)$; \\
		$\proc{QuicksortYaroslavskiy}\,(\arrayA,
				\makeboxlike[c]{$g+1$}{$g+1$},
				\makeboxlike[c]{$g+1$}{$\id{right}$}
			)$;		
		\label{bb:recursive-calls}
		\\[-.75\baselineskip]
	}} ;
\node[basic block,below=of b16] (b20)
	{\block{20}{}{R}{%
		\weakemph{Return};
	}} ;
\coordinate[above= 2.5mm of b3b] (ab3b) {};
\coordinate[above= 2.5mm of b3c] (ab3c) {};
\coordinate[above= 2.5mm of b3d] (ab3d) {};
\coordinate[above= 3cm of b15] (b6t15) {} ;
\coordinate[above= 2.5mm of b11] (b9t11) {};
\coordinate[below left= 3.5mm of b11] (ab12) {};
\coordinate[below right= 3.5mm of b11] (ab13) {};
\coordinate[above= 2.5mm of b14] (ab14) {};
\coordinate[left=5mm of b6] (lb6) {};
\coordinate[right=5mm of b9] (rb9) {};
\coordinate[below=5mm of b16] (bb16) {};
\begin{scope}[->,auto,every node/.style={font={\tiny}}]
\begin{pgfinterruptboundingbox}
	\draw (0,1) -- (b1) ;
\end{pgfinterruptboundingbox}
\draw (b1) -- node[above]{no}  (b3a) ;
\draw (b1) -| node[pos=.125]{yes} (b2) ;
\draw (b3a.south) ++(-.1,0) |- (ab3b) -| node[above,pos=.45] {yes} (b3b) ;
\draw (b3a.south) ++( .1,0) |- (ab3c) -| node[above,pos=.45] {no}  (b3c) ;
\draw (b3b) |- (ab3d) -- (b3d) ;
\draw[-,shorten >=0pt] (b3c) |- (ab3d) ;
\draw (b3d) -- (b4) ;
\draw (b4) -| node[pos=.125]{no} (rb9) |- ($(b16.east) + (0,.5)$);
\draw (b4) -- node{yes} (b5);
\draw (b5) -- node{no} (b7);
\draw (b5) -| node[pos=.25]{yes} (b6);
\draw (b6) |- (b6t15) -- (b15);
\draw[-,shorten >=0pt] (b7) |- node[pos=.125]{no} (b6t15) ;
\draw (b7) -- node{yes} (b8) ;
\draw (b8) -- node{yes} (b9) ;
\draw[thick] (b9) |- node[pos=.25,right]{yes} (b10) ;
\draw (b10) -- (b8) ; 
\draw (b8) -- node[left]{no} (b11) ;
\draw (b9) |- node[pos=.625,below]{no} (b9t11) -- (b11) ;
\draw (b11.south) ++(-.1,0) |-node[pos=1,above]{no} (ab12) -| (b12) ;
\draw (b11.south) ++(.1,0) |-node[pos=1,above]{yes} (ab13) -| (b13) ; 
\draw (b12) |- (ab14) -- (b14) ;
\draw[-,shorten >=0pt] (b13) |- (ab14) ;
\draw (b14) -- (b15) ;
\draw[thick] (b15) -| (lb6) |- (b4) ;
\draw (b2) |- (b20) ;
\draw (b16) -- (b20) ;
\begin{pgfinterruptboundingbox}
	\draw (b20.south) -- ++(0,-.4) ;
	\begin{scope}[dashed,rounded corners=6pt]
		\coordinate (start rec call1) at ($(b16.south) + (2.45,.8)$) ;
		\coordinate (start rec call2) at ($(b16.south) + (2.45,.225)$) ;
		\coordinate (start rec call3) at ($(b16.south) + (2.45,.512)$) ;
		\coordinate (end rec call1) at ($(b16.south) + (-2.45,.525)$) ;
		\coordinate (end rec call2) at ($(b16.south) + (-2.45,.25)$) ;
		\draw (start rec call1) -|	++(1.5,.65) ;
		\draw[-] (start rec call2) -|	++(1.5,.5) ;
		\draw[-] (start rec call3) -|	++(1.5,.5) ;
		\draw (b20.south) ++(-.2,0) |- ++(-3,-.25) |- (end rec call1) ;
		\draw (b20.south) ++(-.2,0)    ++(-3,1.3) |- (end rec call2) ;
	\end{scope}
\end{pgfinterruptboundingbox}
\end{scope}
\end{tikzpicture}%
	\hfill\mbox{}
	\caption{%
		Control flow graph for Algorithm~\ref{alg:yaroslavskiy}.
		The algorithm is decomposed into \emph{basic blocks} of purely sequential
		code.
		Possible transitions from one block to another are indicated by arrows.
		Blocks with two outgoing arrows end with a conditional, the ``yes'' path is
		taken if the condition is fulfilled, otherwise the ``no'' transition is
		chosen.
		We refer to blocks using the number shown in the upper left corner.
		In the upper right corner, a block's symbolic \emph{execution frequency} is
		given.
		For clarity of presentation, the recursive calls in block~\ref{bb:recursive-calls} are
		not explicitly shown, but only sketched by the dashed arrows.
		Block~\ref{bb:call-insertionsort} calls \proc{InsertionSort} which is given in
		Appendix~\ref{app:insertionsort}.
	}
	\label{fig:control-flow-graph}
\end{figure}

The results are summarized in
Tables~\ref{tab:expectation-of-toll-functions} and~\ref{tab:expected-frequencies}
at the end of the section.

\medskip
The expected execution frequencies allow a recursive representation of the
following form, here using the example of $\totalcomparisonmarker{1}$:
\begin{align}
	\E[\totalcomparisonmarker[n]1]
	&\wwrel= \begin{cases}
		\displaystyle \E[\comparisonmarker[n]{1}] 
		\displaystyle \wwbin+ \probsumpq \bigl(
				  \E[\totalcomparisonmarker[p-1]{1}]
				+ \E[\totalcomparisonmarker[q-p-1]{1}]
				+ \E[\totalcomparisonmarker[n-q]{1}]
			\bigr)\,,
		  & \text{for } n > M; \\
		0, & \text{for } n \le M,
	\end{cases}\,,
\label{eq:recurrence-for-execution-frequencies}
\end{align}
where $\comparisonmarker{1} = \comparisonmarker[n]{1}$ is the
frequency specific toll function\,---\,namely the corresponding frequency during the first
partitioning step only.
For the other frequencies, we similarly denote by $\toll A$, $\toll F$,
$\comparisonmarker{3}$, $\comparisonmarker{4}$, $\swapmarker{1}$
and~$\swapmarker{3}$ the toll functions corresponding to $A$, $F$,
$\totalcomparisonmarker{3}$, $\totalcomparisonmarker{4}$, $\totalswapmarker{1}$
and~$\totalswapmarker{3}$, respectively.

The frequencies $F$, $\totalcomparisonmarker{1}$,
$\totalcomparisonmarker{3}$, $\totalcomparisonmarker{4}$, $\totalswapmarker{1}$
and~$\totalswapmarker{3}$ correspond to basic blocks in the body of the main
partitioning loop, i.\,e., blocks~\ref{bb:8}\,--\,\ref{bb:18}.
All these blocks have in common that they are \emph{not executed at all}
during calls with $\id{right}-\id{left} \le 1$, i.\,e., when $n \le 2$:
In that case we have $k>g$ directly after
block~\ref{bb:init-k-l-g} and hence immediately leave the partitioning loop
from block~\ref{bb:k-le-g} to block~\ref{bb:recursive-calls}.
Therefore, we have $\toll[2]F = \comparisonmarker[2]{1} =
\comparisonmarker[2]{3} = \comparisonmarker[2]{4} = \swapmarker[2]{1} =
\swapmarker[2]{3}=0$. In the subsequent sections, we will determine the toll
functions for~$n\ge3$.

For that, we will first compute their values given \emph{fixed} pivot ranks $P$
and~$Q$, i.\,e., we determine their distribution \emph{conditional} on 
$(P,Q) = (p,q)$. Here, we capitalized $P$ and $Q$ to emphasize the fact that the
pivot ranks are themselves random variables.
Then, we get the unconditional expected frequencies via the \textsl{law of total
expectation}.
Note that for permutations of $\{1,\ldots,n\}$, ranks and values coincide and we
will not always dwell on the difference to keep the presentation concise, but 
unless stated otherwise, $P$ and $Q$ refer to the ranks of the two pivots.

\smallskip
\subsubsection{The Crossing Point Lemma}

The following lemma is the key to the precise analysis of the execution
frequencies that depend on how pointers $k$ and $g$ ``cross''.
As the pointers are moved alternatingly towards each other, one of them will
reach the crossing point \emph{first}\,---\,waiting for the other to
arrive.

\begin{lemma}[Crossing Point Lemma]
\label{lem:k-stops-with-q+delta}
	Let \arrayA store a random permutation of $\{1,\dots,n\}$ with $n\ge2$.
	Then, Algorithm~\ref{alg:yaroslavskiy} leaves
	the outer loop of the first partitioning step with
	\vspace{-1ex}
	\begin{equation}
		k	\wwrel= q+\delta \wwrel= g+1+\delta\,,
			\qquad\text{where } \delta=0 \text{ or } \delta=1.
		\label{eq:crossing-point}
	\end{equation}
	(More precisely, \eqref{eq:crossing-point} holds for the valuations of $k$,
	$g$ and $q$ upon entrance of block~\ref{bb:recursive-calls}).\\[.5ex]
	Moreover, $\delta=1$ \emph{iff} initially $\arrayA[q]>q$ holds, where
	$q = \max\{\arrayA[1],\arrayA[n]\}$ is the large pivot.
\end{lemma}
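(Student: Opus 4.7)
The plan is to separate the lemma into two parts: the geometric statement $k = q + \delta = g + 1 + \delta$ with $\delta \in \{0, 1\}$, and the characterization of when $\delta = 1$. Throughout I would use the region invariant of the partitioning — in particular the identity $g_{\mathrm{end}} = q - 1$ at loop exit, which follows immediately from the fact that at loop exit the ``large'' region $A[g+1], \ldots, A[\id{right}-1]$ must contain exactly the $n - q$ values that are larger than~$q$. The definition $\delta := k_{\mathrm{end}} - g_{\mathrm{end}} - 1$ together with $g_{\mathrm{end}} = q - 1$ immediately yields $k = q + \delta = g + 1 + \delta$, and the loop-exit condition $k > g$ gives $\delta \geq 0$. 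For the upper bound I would inspect the last outer iteration: $k$ grows by exactly~$1$ (line~\ref*{lin:yaroslavskiy-k++}), whereas the inner \kw{while} in line~\ref*{lin:yaroslavskiy-comp-3} keeps $k < g$ among its guards and so exits with some $g' \geq k_\ast$, where $k_\ast$ denotes the $k$-value at the iteration's start; the subsequent $g \gets g - 1$ then leaves $g_{\mathrm{end}} \geq k_\ast - 1$, giving $\delta \leq 1$. This inspection also identifies the unique configuration producing $\delta = 1$: the last iteration enters the inner while (so $A[k_\ast] \geq q$) and exits it via the $k = g$ clause, turning the subsequent swap $A[k] \leftrightarrow A[g]$ into a no-op. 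Pairing $g_{\mathrm{end}} = k_\ast - 1 = q - 1$ with the guard of the branch forces $k_\ast = q$ and $A[q] > q$ at the start of the last iteration.

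The heart of the lemma is then to upgrade ``current $A[q] > q$ at the start of the last iteration'' to ``\emph{initial} $A[q] > q$''. Writes to position~$q$ during the partitioning can come only through swaps involving one of the pointers $\ell, k, g$. Among iterations preceding the last, $k_s < q$ by the monotone increments, and $\ell_s \leq p + 1 \leq q$; even in the degenerate no-medium case $p = q - 1$, the absence of medium values forces $\ell$ and $k$ to move in lock-step in all prior iterations, so $\ell_s = k_s < q$. Hence any prior write to $A[q]$ must come from a swap $A[k_s] \leftrightarrow A[g_s]$ with $g_s = q$ right after the inner while, which under $k_s < q$ means the inner while exited through its $A[g] \leq q$ clause, so that the current $A[q] \leq q$ at that moment. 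The crucial observation is that such a write is immediately followed by $g \gets g - 1$ (line~\ref*{lin:yaroslavskiy-g--}), setting $g = q - 1$; any further iteration entering the inner while would then decrement $g$ below $q - 1$, contradicting $g_{\mathrm{end}} = q - 1$. So after such a write all remaining iterations would merely increment $k$, and the outer loop would exit at $(k, g) = (q, q - 1)$ with $\delta = 0$. Contrapositively, if $\delta = 1$ then no prior write to $A[q]$ ever occurred, the current $A[q]$ at the start of the last iteration equals the initial $A[q]$, and we conclude initial $A[q] > q$.

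For the converse direction I would install an induction: under initial $A[q] > q$, at every iteration $s$ before the first one with $k_s = q$, both $g_s \geq q$ and $A[q]$ still equals its initial value. The main obstacle is ruling out a ``skip-over of $q$'' by the inner while, where arriving at $g = q + 1$ the guards $A[q] > q$ (by induction) and $k_s < q$ would both be satisfied and decrement $g$ past~$q$; as in the preceding argument, this is excluded by $g_{\mathrm{end}} = q - 1$, so the induction propagates. Once this is secured, the region-counting invariant at the first iteration~$t$ with $k_t = q$ forces $\ell_t = p + 1$ (all small and all medium elements have already been placed), so that the unknown region $A[q], \ldots, A[g_t]$ consists entirely of large values. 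The inner while therefore walks $g$ through these large positions until it meets $k$ at position~$q$, exits via the $k < g$ clause, performs the trivial swap, and finally executes $g \gets g - 1$ and $k \gets k + 1$, yielding $\delta = 1$.
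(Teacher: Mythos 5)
Your proof is correct, but it takes a genuinely different route from the paper's in the part that matters most. The paper's proof rests on its Fact~\ref{fact:fresh-elements} (array cells involved in key comparisons have not been written to earlier in the same partitioning step): with that fact in hand, ``$\arrayA[q]\ge q$ at the critical comparison'' upgrades to ``initially $\arrayA[q]>q$'' in one line, and the converse direction is handled by working backwards from the \emph{last} execution of the inner-while test (where $g=q$ and freshness gives $\arrayA[g]=\arrayA[q]>q$). You never invoke that fact; instead you manually enumerate all possible writes to position~$q$, bound the pointers by $\ell_s\le k_s<q$ in prior iterations, and use the anchor $g_{\mathrm{end}}=q-1$ twice — once to show that any early write to $\arrayA[q]$ forces $\delta=0$ (so $\delta=1$ implies freshness of $\arrayA[q]$), and once to exclude the ``skip-over'' in your induction for the converse, which you then finish with a region-counting argument showing $\arrayA[q..g_t]$ is entirely large when $k$ first reaches $q$. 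What each buys: the paper's proof is much shorter but leans on Fact~\ref{fact:fresh-elements}, which the paper states with only a \qed; yours is longer but self-contained, effectively proving the one instance of that fact the lemma needs. Both arguments rely on partitioning correctness to pin $g_{\mathrm{end}}=q-1$. Two small points you gloss over that the paper spells out: (i) the branch guard only gives $\arrayA[k_*]\ge q$, and passing to the strict inequality $\arrayA[q]>q$ requires observing that the pivot value $q$ stays parked at position $1$ or $n$ (the loop never writes outside $[2,n-1]$), together with ruling out $k_*=n$; (ii) your phrase ``arriving at $g=q+1$'' in the skip-over argument should read ``arriving at $g=q$'', since the guard you then invoke is $\arrayA[q]>q$. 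Neither affects the soundness of the argument.
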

\begin{proof}[of Lemma~\ref{lem:k-stops-with-q+delta}]
Between two consecutive “$k \le g$”-checks in block~\ref{bb:k-le-g},
we move $k$ and $g$ towards each other by at most one position each;
so we always have $k\le g+2$ and we exit the loop as soon as $k>g$ holds.
Therefore, we always leave the loop with $k=g+1+\delta$ for some
$\delta\in\{0,1\}$.
In the end, $q$ is moved to position~$g$ in block~\ref{bb:recursive-calls}.
Just above in the same block, $g$ has been incremented, so we have $g=q-1$ upon
entrance of block~\ref{bb:recursive-calls}.

For the ``moreover'' part, we show both implications separately.
Assume first that $\delta=1$, i.\,e., the loop is left with a difference
of $\delta+1=2$ between $k$ and $g$.
This difference can only show up when both $k$ is incremented \emph{and} $g$ is
decremented in the last iteration.
Hence, in this last iteration we must have gone from
block~\ref{bb:comp-2} to~\ref{bb:comp-3} and accordingly $\arrayA[k]\ge q$ must have
held there\,---\,and by Fact~\ref{fact:fresh-elements} $\arrayA[k]$ still holds its
initial value.

In case $k<n$, even strict inequality $\arrayA[k]>q$ holds since we then have
$\arrayA[k] \ne \arrayA[n] = q$ by the assumption of distinct elements.
Now assume towards a contradiction, $k=n$ holds in the last execution of
block~\ref{bb:comp-2}.
Since $g$ is initialized in block~\ref{bb:init-k-l-g} to
$\mathit{right}-1=n-1$ and is only decremented in the loop, we have $g\le n-1$.
But this is a contradiction to the loop condition “$k\le g$”:
$n = k\le g\le n-1$.
So, $\arrayA[k] > q$ holds for the last execution of block~\ref{bb:comp-2}.

By assumption, $\delta=1$, so $k=q+1$ upon termination of the loop.
As $k$ has been incremented exactly once since the last test in
block~\ref{bb:comp-2}, we find $\arrayA[q] > q$ there,
as claimed.

\smallskip
Now, assume conversely that initially $\arrayA[q] > q$ holds.
As $g$ stops at $q-1$ and is decremented in block~\ref{bb:outer-g--},
we have $g=q$ for the last execution of block~\ref*{bb:comp-3}.
Using the assumption yields
$\arrayA[g] = \arrayA[q] > q$, since by Fact~\ref{fact:fresh-elements},
$\arrayA[q]$ still holds its initial value.
Thus, we take the transition to block~\ref{bb:inner-k-less-g}.
Execution then proceeds with block~\ref{bb:comp-4}, otherwise we would enter
block~\ref{bb:comp-3} again, contradicting the assumption that we just finished
its \emph{last} execution.
The transition from block~\ref{bb:inner-k-less-g} to~\ref{bb:comp-4} is only
taken if $k \ge g = q$.
With the following decrement of $g$ and increment of $k$, we leave the loop with
$k\ge g+2$, so $\delta=1$.
\end{proof}

\smallskip
\begin{corollary}
	\label{cor:probability-for-delta-1}
	Let $\delta\in\{0,1\}$ be the random variable from
	Lemma~\ref{lem:k-stops-with-q+delta}.\\ 
	It holds $\E[\delta] = \tfrac13$ and 
	$\E[\delta \given (P,Q) = (p,q)] = \tfrac{n-q}{n-2}$.
\end{corollary}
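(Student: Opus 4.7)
The plan is to reduce the corollary to a counting exercise via Lemma~\ref{lem:k-stops-with-q+delta}, which identifies $\delta$ with the indicator of $\{\arrayA[q]>q\}$. Here $q$ denotes simultaneously the value and the rank of the large pivot (we sort a permutation of $\{1,\ldots,n\}$), and $\arrayA[q]$ refers to the \emph{initial} entry at position~$q$; by Fact~\ref{fact:fresh-elements} this is the value the algorithm actually inspects at the moment the characterization of Lemma~\ref{lem:k-stops-with-q+delta} triggers.

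First, I would obtain the conditional expectation given $(P,Q)=(p,q)$. Under the random permutation model, conditioning on the pivot ranks being $p<q$ means positions $1$ and $n$ carry the values $p$ and $q$ in some order, while the remaining $n-2$ positions $2,\ldots,n-1$ hold a uniformly random arrangement of $\{1,\ldots,n\}\setminus\{p,q\}$. For the generic case $q\in\{2,\ldots,n-1\}$, position~$q$ falls within this random portion, so $\arrayA[q]$ is uniform on the $n-2$ non-pivot values, exactly $n-q$ of which strictly exceed $q$. This gives $\Prob[\arrayA[q]>q\given(P,Q)=(p,q)]=\tfrac{n-q}{n-2}$, and since $\delta\in\{0,1\}$ this probability is its conditional expectation. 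For the edge case $q=n$ I would separately note that position $n$ carries a pivot value, hence $\arrayA[n]\in\{p,q\}$ and $\arrayA[n]>q$ is impossible, in agreement with $(n-q)/(n-2)=0$.

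For the unconditional result I would invoke the law of total expectation using that $(P,Q)$ is uniform over the $\binom{n}{2}$ pairs $1\le p<q\le n$. Summing out $p$ contributes a factor $q-1$, reducing the task to $\sum_{q=2}^{n}(q-1)(n-q)=\binom{n}{3}$, a standard identity, after which a one-line simplification of $\binom{n}{3}\big/[(n-2)\binom{n}{2}]$ yields $\tfrac13$. The only real care needed throughout is to keep ``value'', ``rank'', and ``position'' distinct when applying the Crossing Point Lemma; with that bookkeeping in place, the remainder is elementary combinatorics.
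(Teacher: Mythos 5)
Your proposal is correct and follows essentially the same route as the paper: reduce $\delta$ to the indicator of $\{\arrayA[q]>q\}$ via the Crossing Point Lemma, compute the conditional probability $\tfrac{n-q}{n-2}$ by observing that position $q$ holds a uniformly random non-pivot value (with the $q=n$ edge case treated separately), and then average over the uniform pivot pair. The only cosmetic difference is in the final summation, where you use $\sum_{q=2}^{n}(q-1)(n-q)=\binom n3$ while the paper splits the sum and uses $\E[Q]=\tfrac23(n+1)$; both are one-line computations.
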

\begin{proof}
We first compute the conditional expectation.
As $\delta\in\{0,1\}$, we have
$\E[\delta\given P,Q] = \Prob[\delta = 1\given P,Q]$, so it suffices to compute
this probability.
Now by Lemma~\ref{lem:k-stops-with-q+delta}, we have
$\Prob[\delta = 1\given P,Q] = \Prob[\arrayA[q] > q\given (P,Q)=(p,q)]$.
We do a case distinction.
\begin{itemize}
\item
	For $q<n$, $\arrayA[q]$ is one of the non-pivot elements.
	(We have $1\le p<q<n$.)
	Any of the $n-2$ non-pivot elements can take
	position $\arrayA[q]$, and among those, $n-q$ elements are strictly greater than $q$.
	This gives a probability of $\frac{n-q}{n-2}$ for $\arrayA[q] > q$.
\item
	For $q=n$, $q$ is the maximum of all elements in the list, so we
	cannot possibly have $\arrayA[q] > q$. This implies a probability
	of $0=\frac{n-q}{n-2}$.
\end{itemize}
By the \weakemph{law of total expectation}, the unconditional expectation is
given by:
\begin{align*}
	\E [\delta]
		&\wwrel=	\sumpq\!\!\!\
						\Prob[(P,Q)=(p,q)]\cdot
						\E[\delta \given (P,Q)=(p,q)]
		 \wwrel=	\probsumpq \tfrac{n-q}{n-2}
\\&\wwrel=	\frac{1}{\tbinom n2 (n-2) }\biggl(
						  		\sumpq\!\!\!\!\!\!	n\;
						\bin- 	\sumpq\!\!\!\!\!\! q\;
					\biggr)
		 \wwrel=			\frac{1}{\tbinom n2 (n-2) } n \tbinom{n}{2}
					\wbin-	\frac{\tfrac{2}{3}(n+1)}{n-2}
		 \wwrel =		\frac{n - \tfrac{2}{3}(n+1)}{n-2}
		 \wwrel=	\tfrac{1}{3}\;.
\end{align*}
\end{proof}

\smallskip\noindent
The following expectations are used several times below, so we collect them
here.

\begin{lemma}
	\label{lem:expectation-p-q}
	$\E[P]=\tfrac13 (n+1)$ and $\E[Q] = \tfrac23 (n+1)$.
\end{lemma}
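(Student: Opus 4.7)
Recall that the pivots are chosen as $p = \min(\arrayA[1],\arrayA[n])$ and $q = \max(\arrayA[1],\arrayA[n])$, and under the random permutation model the unordered set $\{\arrayA[1],\arrayA[n]\}$ is a uniformly random $2$-subset of $\{1,\dots,n\}$. Consequently the (ordered) pair of ranks $(P,Q)$ with $P<Q$ is uniformly distributed over $\{(p,q) : 1\le p<q\le n\}$. From here there are two natural routes; the cleanest is a symmetry argument, which I would use.

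\textbf{Main argument (symmetry).} The $n-2$ non-pivot elements split into three groups according to whether they are less than $P$, strictly between $P$ and $Q$, or greater than $Q$. Call the sizes of these groups $X_1$, $X_2$, $X_3$, so that $X_1+X_2+X_3 = n-2$. Since $\{P,Q\}$ is a uniform $2$-subset of $\{1,\dots,n\}$, the joint distribution of $(X_1,X_2,X_3)$ is invariant under any permutation of the three coordinates (for instance, one can exhibit explicit bijections on $2$-subsets that cyclically shift the three group sizes). Hence $\E[X_1]=\E[X_2]=\E[X_3]=(n-2)/3$ by linearity. Now the definitions of the groups give $P = X_1+1$ and $Q = X_1+X_2+2$, so
\begin{align*}
  \E[P] &\wwrel= \tfrac{n-2}{3}+1 \wwrel= \tfrac{n+1}{3}, &
  \E[Q] &\wwrel= 2\cdot\tfrac{n-2}{3}+2 \wwrel= \tfrac{2(n+1)}{3},
\end{align*}
as claimed.

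\textbf{Alternative (direct computation).} If the symmetry step feels too informal, one can instead compute
$\E[P] = \binom{n}{2}^{-1}\sum_{1\le p<q\le n} p = \binom{n}{2}^{-1}\sum_{p=1}^{n-1} p(n-p)$,
expand $\sum p(n-p) = n\binom{n}{2} - \sum p^2 = n(n-1)(n+1)/6$ using standard power-sum formulas, and divide to obtain $(n+1)/3$; the computation for $\E[Q]$ is analogous via $\sum_{1\le p<q\le n} q = \sum_{q=2}^{n} q(q-1)$.

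\textbf{Obstacles.} There is essentially no obstacle here: the claim is a standard fact about order statistics of a uniform $2$-subset, and the only choice is between the symmetry proof (short, conceptual) and the raw summation (short, mechanical). I would present the symmetry proof, since the three-group decomposition is exactly the partition structure used throughout the subsequent analysis of Yaroslavskiy's algorithm, so introducing it here costs nothing and pays off later.
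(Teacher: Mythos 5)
Your proof is correct. The paper itself takes the route you relegate to the "alternative": it conditions on $(P,Q)=(p,q)$ and evaluates $\E[Q]=\tbinom{n}{2}^{-1}\sum_{1\le p<q\le n} q$ directly by interchanging the summation order, then notes that $P$ is handled analogously — exactly your mechanical computation. Your primary symmetry argument is a genuinely different (and sound) route: the vector of subproblem sizes $(P-1,\,Q-P-1,\,n-Q)$ is uniform over the compositions of $n-2$ into three nonnegative parts (each composition corresponds to exactly one pair $(p,q)$), hence exchangeable, giving each coordinate expectation $(n-2)/3$ and the claimed values of $\E[P]$ and $\E[Q]$ immediately. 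The symmetry proof is more conceptual and, as you note, introduces the three-way spacing decomposition that the paper later uses heavily (e.g.\ the vector $(I_1,I_2,I_3)$ in the distributional analysis); the paper's summation is shorter to state and requires no justification of the exchangeability claim. Either is acceptable here; if you present the symmetry version, do spell out the composition-counting step rather than gesturing at "explicit bijections," since that one line is what makes the exchangeability rigorous.
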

\begin{proof}
Conditioning on $(P,Q) = (p,q)$, we find
\begin{align*}
	\E[Q]	&\wwrel= \sumpq \frac1{\tbinom n2} \cdot q
			 \wwrel= \frac1{\tbinom n2}\sum_{q=2}^n q \sum_{p=1}^{q-1} 1
			 \wwrel= \tfrac23(n+1) \;.
\end{align*}
A similar calculation for $P$ proves the lemma.
\end{proof}

\subsubsection{Frequency $A$}

The frequency $A=A_n$ equals the number of partitioning steps or
equivalently the number of (recursive) calls with
$\id{right}-\id{left} \ge M$ when initially calling
$\proc{QuicksortYaroslavskiy}(\arrayA,1,n)$ with a random permutation
stored in~\arrayA.
Therefore, the contribution $\toll A$ of \emph{one} partitioning step is
$\toll[n]A=1$.
By Proposition~\ref{pro:Cn-explicit-linear-pc} with $\pc_n = 1$ and
$\insertsortcost_n=0$, we obtain the closed form
\begin{align}
		\E[A_n] 
	&\wwrel= 
		\tfrac{6}{5(M+2)}(n+1) \bin- \tfrac12
		\wwbin+ \tfrac3{10} \tbinom{M+1}4 \bigl/ \tbinom n4
		\;.
\end{align}

\subsubsection{Frequency $R$}

By $R=R_n$, we denote the number of calls to
\proc{QuicksortYaroslavskiy} including those directly passing control to
\proc{InsertionSort} for small subproblems.
Every partitioning step entails three
additional recursive calls on subarrays (see block~\ref{bb:recursive-calls}).
Moreover, we have one additional initial call to the procedure.
Together, this implies
\begin{align}
	R_n &\wwrel= 3\.A_n + 1\;.
\end{align}

\subsubsection{Frequency $B$}

Frequency $B$ counts how often we execute block~\ref{bb:p-right-q-left}.
This block is reached at most once per partitioning step, namely \weakemph{iff}
$\arrayA[\id{left}] > \arrayA[\id{right}]$.
For random permutations, the probability for that is exactly $1\mathbin/2$, so
we find
\begin{align}
	\E[B_n]  &\wwrel=  \tfrac12 \E[A_n]\;.
\end{align}

\subsubsection[Frequency $C^1$]{Frequency $\totalcomparisonmarker1$}

$\totalcomparisonmarker1(n)$ denotes the execution frequency of
block~\ref{bb:comp-1} of Yaroslavskiy's algorithm.
Block~\ref{bb:comp-1} is the first statement in the outer loop and the last
block of this loop (block~\ref{bb:k++}) is the only place where $k$ is
incremented.
Therefore, $\comparisonmarker{1}$ is the number of different values
that $k$ attains during the first partitioning step.
The following corollary quantifies this number as 
$\comparisonmarker{1} = Q-2+\delta$.

\begin{corollary}
\label{cor:positions-K-and-G}
	Let us denote by $\mathcal K$ the set of values that pointer $k$ attains at
	block~\ref{bb:comp-1}.
	Similarly, let $\mathcal G$ be the set of values of $g$ in
	block~\ref{bb:comp-3}.
	We have
	\begin{align*}
		\like{\mathcal G}{\mathcal K}
			&\wwrel= \bigl\{ 2,3,\ldots,Q-1+\delta \bigr\} \,,
			& \like{|\mathcal G|}{|\mathcal K|}
			&\wwrel= Q-2 + \delta\,, \\
		\mathcal G &\wwrel= \bigl\{ n-1,n-2,\ldots,Q+1,Q \bigr\} \,,
			& |\mathcal G| &\wwrel= n-Q\;.
	\end{align*}
\end{corollary}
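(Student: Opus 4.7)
The plan is to handle the two assertions separately; both reduce to tracking a single pointer ($k$ or $g$), noting where it is modified, and applying Lemma~\ref{lem:k-stops-with-q+delta} to pin down its exit value. For $\mathcal K$, I would observe that $k$ is initialized to $2$ in block~\ref{bb:init-k-l-g} and modified only by the single increment $k \gets k+1$ in block~\ref{bb:k++} at the end of each outer-loop iteration. Block~\ref{bb:comp-1} is the first statement of the loop body, and is thus executed exactly once per iteration at the current value of $k$. So if the body runs $T$ times, $k$ takes the values $2, 3, \ldots, T+1$ at comp-1 and leaves the loop at $k = T+2$. Lemma~\ref{lem:k-stops-with-q+delta} pins this exit value to $Q + \delta$, forcing $T = Q + \delta - 2$ and $\mathcal K = \{2, 3, \ldots, Q - 1 + \delta\}$.

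For $\mathcal G$ I would argue in two steps. First, $g$ is initialized to $n-1$ in block~\ref{bb:init-k-l-g} and modified only by the decrements in blocks~\ref{bb:inner-g--} and~\ref{bb:outer-g--}; by Lemma~\ref{lem:k-stops-with-q+delta} it terminates at $g = Q - 1$, so the total number of decrements across the partitioning step equals $(n-1) - (Q-1) = n - Q$. Second, I would establish a bijection between visits to block~\ref{bb:comp-3} and these decrements. Consider an outer iteration that enters comp-3 (which occurs iff block~\ref{bb:comp-2} returned ``yes'') and let $m$ denote its number of comp-3 visits. The first $m-1$ visits return ``yes'' together with a subsequent ``yes'' at block~\ref{bb:inner-k-less-g} and each trigger one execution of block~\ref{bb:inner-g--}; the $m$-th visit causes the inner while to exit, either via ``no'' at comp-3 directly, or via a final ``yes'' at comp-3 followed by ``no'' at inner-k-less-g, after which block~\ref{bb:outer-g--} decrements $g$ exactly once more. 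So this iteration contributes exactly $m$ comp-3 visits and $m$ decrements of $g$, while iterations not entering comp-3 contribute zero of each. Summing yields $|\mathcal G| = n - Q$; and because a decrement is always interposed between any two consecutive comp-3 visits (within an iteration via inner-g--, across iterations via the intervening outer-g--, with non-comp-3 iterations leaving $g$ untouched), the visited values are pairwise distinct and downward-contiguous from $n-1$, giving $\mathcal G = \{Q, Q+1, \ldots, n-1\}$.

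The main obstacle will be keeping the case analysis of the two inner-while exit modes straight, so that the bijection between comp-3 visits and $g$-decrements is genuinely one-to-one and no off-by-one error creeps in around the final visit of an iteration; once that bookkeeping is nailed down, everything reduces to arithmetic and a direct appeal to the Crossing Point Lemma.
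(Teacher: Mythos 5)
Your proposal is correct and follows essentially the same route as the paper: both arguments track where each pointer is modified and use the Crossing Point Lemma to fix the exit values $k=Q+\delta$ and $g=Q-1$, from which the sets follow because the final pointer value is never observed at the comparison block. Your treatment of the correspondence between visits to block~\ref{bb:comp-3} and decrements of $g$ is just a more explicit spelling-out of what the paper states tersely ("after the last execution of block~\ref{bb:comp-3}, we always pass block~\ref{bb:outer-g--}").
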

\begin{proof}
By Lemma~\ref{lem:k-stops-with-q+delta}, we leave the outer loop with
$k=Q+\delta$ and $g=Q-1$.
Since the last execution of block~\ref{bb:comp-1}, $k$ has been incremented
exactly once (in block~\ref{bb:k++}), so the last value of $k$, namely
$Q+\delta$, is not observed at block~\ref{bb:comp-1}.
Similarly, after the last execution of block~\ref{bb:comp-3}, we always pass
block~\ref{bb:outer-g--} where $g$ is decremented. So the last value $Q-1$ for
$g$ is not attained in block~\ref{bb:comp-3}.
\end{proof}

Continuing with frequency $\totalcomparisonmarker{1}$, note that $Q$ and
$\delta$ and hence $\comparisonmarker{1} = Q-2+\delta$ are random
variables.
By linearity of the expectation
$\E[\comparisonmarker{1}] = \E[Q]-2+\E[\delta]$ holds, so with
Lemma~\ref{lem:expectation-p-q} and Corollary~\ref{cor:probability-for-delta-1},
we find
\begin{align}
		\E[\comparisonmarker[n]{1}]
	&\wwrel=
		\tfrac13 (n+1)  \wbin-  2  \wbin+  \tfrac13
	 \wwrel= \tfrac23 n - 1
	 \;.	
\end{align}

\subsubsection[Frequency $S^1$]{Frequency $\totalswapmarker{1}$}

Frequency $\totalswapmarker{1}$ corresponds to block~\ref{bb:swap-1}.
Block~\ref{bb:swap-1} is executed as often as block~\ref{bb:comp-1} is
reached with $\arrayA[k]<p$.
This number depends on the input permutation:
$\swapmarker[n]{1}$ is exactly the number of elements
smaller than $p$ that happen to be located at positions in $\mathcal K$\!,
the range that pointer $k$ scans.
Denote this quantity by $\satK$\!.

\begin{lemma}
\label{lem:s-at-K}
	Conditional on the pivot ranks
	$P$ and $Q$, $\satK$\! is hypergeometrically
	$\hypergeometric(P-1,Q-2,n-2)$ distributed.
\end{lemma}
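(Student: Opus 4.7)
The plan is to reduce $\satK$ to a quantity that no longer depends on the random variable $\delta$, and then to apply a direct hypergeometric counting argument under the random permutation model.

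First, using Corollary~\ref{cor:positions-K-and-G}, I will decompose
\[
    \satK \wwrel= \underbrace{\bigl|\{\,i \in \{2,\ldots,Q-1\} : \arrayA[i] < P\,\}\bigr|}_{=:\, X} \wbin+ \delta \cdot \indicator{\arrayA[Q] < P}\,.
\]
The crucial observation is that the extra boundary term vanishes almost surely: by Lemma~\ref{lem:k-stops-with-q+delta}, $\delta = 1$ forces $\arrayA[Q] > Q$, and since $P < Q$ implies $Q \ge P+1 > P$, this yields $\arrayA[Q] > P$; hence whenever $\delta = 1$, the indicator $\indicator{\arrayA[Q] < P}$ equals zero. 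Therefore $\satK = X$ almost surely, and it suffices to determine the conditional distribution of $X$ given $(P,Q)=(p,q)$.

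For this I will invoke the random permutation model. Conditional on $(P,Q)=(p,q)$, the pivot values $p$ and $q$ occupy positions $1$ and $n$, while the remaining $n-2$ non-pivot values (namely the set $\{1,\ldots,n\}\setminus\{p,q\}$) fill positions $\{2,\ldots,n-1\}$ in uniformly random order. Exactly $p-1$ of those non-pivot values are \emph{small} (i.e., strictly less than $p$), and $X$ counts how many of them land in the $q-2$ positions $\{2,\ldots,q-1\}$. A standard urn analogy identifies this count with the number of red balls drawn when sampling $p-1$ balls without replacement from an urn containing $q-2$ red and $n-q$ black balls, which yields exactly the $\hypergeometric(p-1,q-2,n-2)$ distribution claimed.

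The main obstacle lies in the first step: since $\delta$ is itself a function of the input permutation (specifically of $\arrayA[Q]$), one cannot naively drop the dependence of $\mathcal{K}$ on $\delta$ when computing a distribution. Resolving this hinges on the precise characterization of $\delta$ provided by the Crossing Point Lemma, which ensures that in the case $\delta = 1$ the element at position $Q$ is strictly larger than both pivots, so that the boundary position contributes nothing to $\satK$ either way. Once this reduction is achieved, what remains is a routine hypergeometric counting argument.
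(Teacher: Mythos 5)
Your proposal is correct and follows essentially the same route as the paper: both resolve the dependence of $\mathcal K$ on $\delta$ via the Crossing Point Lemma (the extra position $Q$ belongs to $\mathcal K$ only when $\arrayA[Q]>Q>P$, so it can never contribute a small element), and both then reduce the count to a standard urn/hypergeometric argument over the $n-2$ non-pivot positions. The only cosmetic difference is that the paper phrases the final step as a two-stage generation process for the permutation, whereas you count directly; the substance is identical.
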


\begin{proof}
This is seen by considering the following (imaginary) generation process of the
current input permutation:
Assuming fixed pivots $(P,Q) = (p,q)$, we have to generate a random
permutation of the remaining $n-2$ elements $E \ce \{1,\ldots,n\} \setminus \{p,q\}$.
To do so, we first choose a random subset $S$ of the free positions
$F \ce \{2,\ldots,n-1\}$ with $|S|={p-1}$.
Then we put a random permutation of $\{1,\ldots,p-1\}$ into positions $S$ and a
random permutation of $E \setminus \{1,\ldots,p-1\}$ into positions
$F \setminus S$.
It is easily checked that this generates all permutations of $E$ with equal
probability, if all choices are done uniformly.

Then by definition, $\satK = |S \cap \mathcal K|$.
This seemingly innocent equation hides a subtle intricacy not to be
overlooked:
$\mathcal K = \{2,\ldots,q-1+\delta\}$ (Corollary~\ref{cor:positions-K-and-G})
is itself a random variable which depends on the permutation via $\delta$.
Luckily, the characterization of $\delta$
from Lemma~\ref{lem:k-stops-with-q+delta} allows to resolve this
inter-dependence.
$\mathcal K = \{2,\ldots,q\}$ if $\arrayA[q] > q$ and
$\mathcal K = \{2,\ldots,q-1\}$ otherwise.
Stated differently, we get the additional position $q$ in $\mathcal K$
\weakemph{iff} the element at that position is large, which means position $q$
\emph{never} contributes towards \emph{small} elements at positions in $\mathcal
K$\!. 
As a result, $\satK = \numberat s{\!K'} = |S \cap \mathcal K'|$ for
$\mathcal K' = \{2,\ldots,q-1\}$, which is constant for fixed pivot values
$p$ and~$q$.

Drawing positions $S$ for small elements one by one is then equivalent
to choosing $|S|$ balls out of an urn with $n-2$ balls without replacement.
If $|\mathcal K'|$ of the $n-2$ balls are red, then $\satK$ equals the
number of red balls drawn, which is hypergeometrically
$$
	\hypergeometric(|S|,|\mathcal K'|,n-2)
	\wwrel=
	\hypergeometric(p-1,q-2,n-2)
$$
distributed by definition.
\end{proof}

\noindent
The mean of hypergeometric distributions
from~\eqref{eq:mean-variance-hypergeometric} translates into the
\emph{conditional} expectation
$\E[\satK \given P,Q] = (P-1)(Q-2)\big/(n-2)$.
By the \weakemph{law of total expectation}, we can compute the unconditional
expected value:
\begin{align}
\label{eq:exp-of-swapmarker1}
		\E[\swapmarker[n]{1}]
	&\wwrel=	\E[\satK]
	 \wwrel=	\E_{(P,Q)}\bigl[ \E[\satK\given P,Q] \bigr]
	 \wwrel=	\probsumpq \!\!\!\!\! \tfrac{(p-1)(q-2)}{n-2}
	 \wwrel=	\tfrac14 n - \tfrac5{12}\;.
\end{align}

\subsubsection[Frequency $C^3$]{Frequency $\totalcomparisonmarker3$}
\label{sec:freq-C3}

Block~\ref{bb:comp-3}%
\,---\,whose executions are counted in $\totalcomparisonmarker{3}$\,---\,%
compares $\arrayA[g]$ to $q$.
After every execution of block~\ref{bb:comp-3}, pointer $g$ is decremented:
depending on whether we leave the loop or not, either in
block~\ref{bb:inner-g--} or in block~\ref{bb:outer-g--}.
Therefore, we execute block~\ref{bb:comp-3} for every value that $g$ attains
at block~\ref{bb:comp-3}, which by Corollary~\ref{cor:positions-K-and-G} amounts
to
$\comparisonmarker[n]{3} = |\mathcal G| = n-Q$.
Using Lemma~\ref{lem:expectation-p-q}, we find
\begin{align}
	\E[\comparisonmarker[n]{3}] \wwrel = \tfrac13 n -\tfrac23 \;.
\end{align}

\subsubsection{Frequency $F$}
\label{sec:freq-F}
Frequency $F$ counts how often we take the transition from
block~\ref{bb:inner-k-less-g} to block~\ref{bb:comp-4}.
This transition is taken when we exit the inner loop of
Yaroslavskiy's algorithm because the second part  of its loop
condition, “$k<g$”, is violated, which means we had $k\ge g$.

After this has happened, we always execute blocks~\ref{bb:outer-g--}
and~\ref{bb:k++}, where we decrement $g$ \emph{and} increment $k$.
Moreover by Lemma~\ref{lem:k-stops-with-q+delta}, $k$ is at most $g+2$
\emph{after} the loop, and equality holds \weakemph{iff} $\delta=1$.
So at block~\ref{bb:inner-k-less-g}, we always have $k \le g$, which means the
violation of the loop condition occurs for $k=g$ and can only happen in case
$\delta=1$.

We can also show that it \emph{must} happen whenever $\delta=1$:
By Lemma~\ref{lem:k-stops-with-q+delta}, we have $\arrayA[q]>q$, and $k=q+1=g+2$
\emph{after} the loop.
Therefore, \emph{during} the last iteration of the loop, $g=k=q$ and hence
$\arrayA[k]=\arrayA[g]=\arrayA[q]>q$ holds.
As a consequence, execution always proceeds through blocks~\ref{bb:comp-1},
\ref{bb:comp-2} and~\ref{bb:comp-3} to block~\ref{bb:inner-k-less-g}.
There, “$k<g$” is not fulfilled, so we take the transition to
block~\ref{bb:comp-4}.
Together, we obtain $\toll F = \delta$ and
Corollary~\ref{cor:probability-for-delta-1} gives
\begin{align}
	\E[\toll[n]F] &\wwrel= \tfrac13 \;.
\end{align}

\subsubsection[Frequency $C^4$]{Frequency $\totalcomparisonmarker{4}$}

Frequency $\totalcomparisonmarker{4}$ corresponds to block~\ref{bb:comp-4},
which compares $\arrayA[g]$ to~$p$.
From the control flow graph, it is obvious that
$\totalcomparisonmarker{4}$ is the sum of the frequencies of the two
incoming transitions, namely block~\ref{bb:comp-3} to~\ref{bb:comp-4} and
block~\ref{bb:inner-k-less-g} to~\ref{bb:comp-4}.
The latter is exactly~$F$.

For the former, recall from above that block~\ref{bb:comp-3}
is executed once for all values $\mathcal G = \{n-1,n-2,\ldots,Q\}$ that pointer
$g$ attains there.
The transition from block~\ref{bb:comp-3} to block~\ref{bb:comp-4} is taken
\weakemph{iff}~\mbox{$\arrayA[g]\le q$}.
As $1 < g < n$ holds and all elements are distinct, $\arrayA[g]=p$ cannot occur.
Therefore, exactly the small and medium elements that are located at positions
in $\mathcal G$ cause this transition;
denote their number by $\numberat{sm}G$.
A very similar argument as in the proof
of Lemma~\ref{lem:s-at-K} shows that conditional on $P$ and $Q$,
$\numberat{sm}G$ is hypergeometrically $\hypergeometric(Q-2,n-Q,n-2)$
distributed.

Adding both contributions yields
$\comparisonmarker{4} = \delta + (\numberat{sm}G)$.
Using Corollary~\ref{cor:probability-for-delta-1} and
equation~\eqref{eq:mean-variance-hypergeometric} shows
\begin{align}
		\E[\comparisonmarker[n]{4}]
	&\wwrel= 	\tfrac13 \bin+ \probsumpq \!\!\!\!\! \tfrac{(q-2)(n-q)}{n-2}
	 \wwrel=	\tfrac16n-\tfrac16\;.
\end{align}

\subsubsection[Frequency $S^3$]{Frequency $\totalswapmarker{3}$}

Frequency $\totalswapmarker{3}$ counts executions of block~\ref{bb:swap-3}.
Key to its analysis are the following two observations:
\begin{longenum}
\item Block~\ref{bb:swap-3} and block~\ref{bb:swap-1} (with frequency
	$\totalswapmarker{1}$) are the only locations inside the loop where pointer
	$\ell$ is changed.
	Therefore, $\swapmarker{1} + \swapmarker{3} = |\mathcal L|-1$, where $\mathcal
	L$ is the set of values pointer that $\ell$ attains inside the loop (minus one
	as we leave the loop after the last increment of $\ell$ without executing
	blocks~\ref{bb:swap-1} and~\ref{bb:swap-3} again).
\item In block~\ref{bb:recursive-calls}, we move the small pivot to $\arrayA[\ell]$,
	so $\ell=P$ must hold there. Just above the swap, $\ell$ is decremented, so the
	last value of $\ell$ in the loop has been $P+1$.
	Moreover, $\ell$ is initialized to $2$ (block~\ref{bb:init-k-l-g}), so
	$\mathcal L = \{ 2,\ldots,P+1 \}$.
\end{longenum}
Together, this implies
$\swapmarker{3} = P-1 - (\satK)$ and by~\eqref{eq:exp-of-swapmarker1} and
Lemma~\ref{lem:expectation-p-q}:
\begin{align}
		\E[\swapmarker[n]{3}]
	&\wwrel=	\E[P] \bin- 1
				\bin- \E[\satK]
	 \wwrel= 	\tfrac13(n+1)\bin-1
	 			\bin- \bigl( \tfrac14 n -\tfrac5{12} \bigr)
	 \wwrel=	\tfrac1{12} n - \tfrac14
	 \;.
\end{align}

\begin{table}
	\tbl{
		Expected values for the toll functions for execution frequencies that
		characterize the block execution frequencies of all blocks in
		Algorithm~\ref{alg:yaroslavskiy} (top) and the derived toll functions for the
		expected number of comparison, swaps, write accesses and Bytecode instructions
		during the first partitioning step (bottom).
		\label{tab:expectation-of-toll-functions}
	}{
	\newcommand\anminusb[2]{{#1 n}-{#2}}
	\newcommand\anplusb[2]{{#1 n}+{#2}}
	\newcommand\zeronplusb[1]{#1}
	\begin{tabular}{lccccccc}
		\toprule
			Toll function &
			$\toll A$ &
			$\toll F$ &
			$\comparisonmarker1$ &
			$\comparisonmarker3$ &
			$\comparisonmarker4$ &
			$\swapmarker1$ &
			$\swapmarker3$ \\
		\midrule
			Expected value ($n\ge3$)&
			$1$ &
			$\zeronplusb{\tfrac{1}{3}}$ &
			$\anminusb{\tfrac{2}{3}}1$ &
			$\anminusb{\tfrac{1}{3}}{\tfrac{2}{3}}$ &
			$\anminusb{\tfrac{1}{6}}{\frac{1}{6}}$ &
			$\anminusb{\tfrac{1}{4}}{\tfrac{5}{12}}$ &
			$\anminusb{\tfrac{1}{12}}{\tfrac{1}{4}}$ \\[4pt]
			Special value for $n=2$ &
			no &
			0 &
			0 &
			0 &
			0 &
			0 &
			0 \\			
		\bottomrule
		\\[1ex]
		\toprule
			Toll function &
			$\comparisonmarker{\textit{QS}}$ &
			$\swapmarker{\textit{QS}}$ &
			$\toll {\totalwritesmarker{\mathit{QS}}}$ &
			$\toll {\bytecodes{\mathit{QS}}}$
			\\
		\midrule
			Expected value ($n\ge3$)&
			$\anminusb{\tfrac{19}{12}}{\tfrac{17}{12}}$ &
			$\anplusb{\tfrac{1}{2}}{\frac{7}{6}}$ &
			$\anplusb{\tfrac{11}{12}}{\tfrac{31}{12}}$ &
			$\anplusb{\tfrac{217}{12}}{\tfrac{265}{4}}$ \\[4pt]
			Special value for $n=2$ &
			1 &
			2 &
			4 &
			$\tfrac{189}2$ \\			
		\bottomrule
	\end{tabular}
	}
\end{table}

\begin{table}
\tbl{%
	Expected execution frequencies characterizing all
	block execution frequencies of Figure~\ref{fig:control-flow-graph}.
	Those immediately follow from Proposition~\ref{pro:Cn-explicit-linear-pc} and
	the toll functions of Table~\ref{tab:expectation-of-toll-functions}.
	For $M=1$, we give exact expectations (valid for $n \ge 4)$, for $M\ge2$ we
	confine ourselves to (extremely precise) asymptotics.
	Note that exact values can be computed
	using equation~\eqref{eq:Cn-exact-solution-linear-pc} if needed.
	\label{tab:expected-frequencies}
}{
	\let\C\totalcomparisonmarker
	\let\S\totalswapmarker
	\def\annHnnplusbnnplusc#1#2#3{
		  \like{\tfrac1{10}}{#1}(n+1)\bigl(\harm{n+1}-\harm{M+2}\bigr)
		+ \bigl(\like{\tfrac{19}{300}-\tfrac{2}{5(M+2)}}{#2}\bigr)(n+1)
		+ \like{\tfrac56}{#3}
	}
	\def\anHnminusbnplusaHnplusc#1#2#3{
		  \like{\tfrac1{10}}{#1}(n+1)\harm n
		- \like{\tfrac{127}{200}}{#2}n
		+ \like{\tfrac{13}{600}}{#3}
	}
	\def\anHnminusbnplusaHnminusc#1#2#3{
		  \like{\tfrac1{10}}{#1}(n+1)\harm n
		- \like{\tfrac{127}{200}}{#2}n
		- \like{\tfrac{13}{600}}{#3}
	}
	\def\myrowsep{4pt}
	\begin{tabular}{ccc}
		\toprule
		 Frequency & $M=1$ (exact solution for $n \ge 4$) &
		   $M\ge2$ (asymptotic with error term $\bo\bigl(\tfrac{1}{n^{4}}\bigr)$)\\
		\midrule
		$\E[A]$ &
			$\tfrac{2}{5}n-\tfrac{1}{10}$ &
			$\tfrac{6}{5(M+2)}(n+1) - \tfrac12$ \\[\myrowsep]
		$\E[B]$ &
			$\tfrac{1}{5}n-\tfrac{1}{20}$ &
			$\tfrac{3}{5(M+2)}(n+1) - \tfrac14$ \\[\myrowsep]
		$\E[R]$ &
			$\tfrac{6}{5}n+\tfrac{7}{10}$ &
			$\tfrac{18}{5(M+2)}(n+1) - \tfrac12$ \\[\myrowsep]
		$\E[F]$ &
			$\tfrac{1}{10}n-\tfrac{1}{15}$ &
			$\tfrac{2}{5(M+2)}(n+1) - \tfrac16$ \\[\myrowsep]
		$\E[\C1]$ &
			$\anHnminusbnplusaHnminusc{\tfrac{4}{5}}{\tfrac{83}{50}}{\tfrac{2}{75}}$ &
			$\annHnnplusbnnplusc{\tfrac{4}{5}}{\tfrac{38}{75}-\tfrac{2}{M+2}}{\tfrac{5}{6}}$
			\\[\myrowsep]
		$\E[\C3]$ &
			$\anHnminusbnplusaHnplusc{\tfrac{2}{5}}{\tfrac{22}{25}}{\tfrac{1}{50}}$ &
			$\annHnnplusbnnplusc{\tfrac{2}{5}}{\tfrac{19}{75}-\tfrac{6}{5(M+2)}}{\tfrac{1}{2}}$
			\\[\myrowsep]
		$\E[\C4]$ &
			$\anHnminusbnplusaHnminusc{\tfrac{1}{5}}{\tfrac{39}{100}}{\tfrac{7}{300}}$ &
			$\annHnnplusbnnplusc{\tfrac{1}{5}}{\tfrac{19}{150}-\tfrac{2}{5(M+2)}}{\tfrac{1}{6}}$
			\\[\myrowsep]
		$\E[\S1]$ &
			$\anHnminusbnplusaHnminusc{\tfrac{3}{10}}{\tfrac{127}{200}}{\tfrac{1}{600}}$&
			$\annHnnplusbnnplusc{\tfrac{3}{10}}{\tfrac{19}{100}-\tfrac{4}{5(M+2)}}{\tfrac{1}{3}}$
			\\[\myrowsep]
		$\E[\S3]$ &
			$\anHnminusbnplusaHnplusc{\tfrac{1}{10}}{\tfrac{49}{200}}{\tfrac{13}{600}}$ &
			$\annHnnplusbnnplusc{\tfrac{1}{10}}{\tfrac{19}{300}-\tfrac{2}{5(M+2)}}{\tfrac{1}{6}}$
			\\[2pt]
		\midrule
		\pbox{\linewidth}{$\E[\pc_n]=an+b$,\\ $\E[\like{\pc_n}{\pc_2}]=0$} &
			\multicolumn{2}{c}{
				$\tfrac{6}{5}a(n+1)(\harm n-\harm{M+2})
				+ \bigl(
					\tfrac{19}{25}a
					-\tfrac{6}{5}\tfrac{a-b}{M+2}
					-\delta_{M1}\tfrac{2a+b}{10}
				  \bigr)(n+1)
				+ \tfrac{a-b}{2} \wbin+ \bo\bigl(\tfrac1{n^4}\bigr)$
			}\\
		\bottomrule
	\end{tabular}
}
\end{table}

\newcommand\annHnnplusbnnc[4][+]{
		  #2 (n+1)\bigl(\harm{n+1}-\harm{M+2}\bigr)
		\bin+ \bigl( #3 \bigr)(n+1)
		\bin{#1} #4
		\wbin+ \bo\bigl( \tfrac1{n^4} \bigr)\,,
}
\newcommand\annHnnplusbnnclinebreak[4][+]{
	\begin{aligned}
		&\bin{\phantom{+}} #2 (n+1)\bigl(\harm{n+1}-\harm{M+2}\bigr) \\*[1ex]
		&\bin+ \bigl( #3 \bigr)(n+1)
		\bin{#1} #4
		\wbin+ \bo\bigl( \tfrac1{n^4} \bigr)\,,
	\end{aligned}
}
\newcommand\annHnminusbnc[4][+]{
		  #2(n+1)\harm n
		\bin- #3n
		\bin{#1} #4\,,
}
\newcommand\annHnplusbnc[4][+]{
		  #2(n+1)\harm n
		\bin+ #3n
		\bin{#1} #4\,,
}

\subsection{Key Comparisons}
\label{sec:expected-comparisons}

\begin{theorem}
\label{thm:expected-comparisons}
	In expectation, Yaroslavskiy's algorithm (Algorithm~\ref{alg:yaroslavskiy})
	uses
	\begin{align}
		\E[\totalcomparisonmarker[n]{}]
		&\wwrel=
		\begin{cases}
			\annHnnplusbnnclinebreak{\tfrac{19}{10}}%
				{  \tfrac{124}{75}
				 + \tfrac{3}{20}M
				 - \tfrac{9}{5(M+2)}
				 - \tfrac{12}{5(M+2)} \harm{M+1}}%
				{\tfrac32}
					& \text{for } M \ge 2; \\[3ex]
			\annHnminusbnc[-]{\tfrac{19}{10}}%
				{\tfrac{711}{200}}{\tfrac{31}{200}}
					& \text{for } M=1,\; n\ge4,
		\end{cases}
		\label{eq:expected-comparisons-total}
	\end{align}
	key comparisons to sort a random permutation of size~$n$.
\end{theorem}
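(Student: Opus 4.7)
The plan is to invoke Proposition~\ref{pro:Cn-explicit-linear-pc}: the recurrence \eqref{eq:recurrence} for the expected number of key comparisons is exactly of the form it treats, so the only missing ingredient is the expected contribution of a single partitioning step, which I call $\E[\pc^{\mathit{QS}}_n]$.

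I would assemble $\pc^{\mathit{QS}}_n$ by identifying the five key-comparison locations of Algorithm~\ref{alg:yaroslavskiy}: the pivot-ordering test $\arrayA[\id{left}]>\arrayA[\id{right}]$ (contributing $1$ per step); the two opposite ``$\mathtt{C}_k$'' locations of the main loop (line~\ref*{lin:yaroslavskiy-comp-1} with frequency $\comparisonmarker{1}$, and line~\ref*{lin:yaroslavskiy-comp-2} with frequency $\comparisonmarker{1}-\swapmarker{1}$, since we reach the latter only when $\arrayA[k]\ge p$); and the two ``$\mathtt{C}_g$'' locations (line~\ref*{lin:yaroslavskiy-comp-3} with frequency $\comparisonmarker{3}$, and line~\ref*{lin:yaroslavskiy-comp-4} with frequency $\comparisonmarker{4}$). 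Therefore $\pc^{\mathit{QS}}_n = 1 + 2\comparisonmarker{1} - \swapmarker{1} + \comparisonmarker{3} + \comparisonmarker{4}$, and taking expectations using Table~\ref{tab:expectation-of-toll-functions} yields $\E[\pc^{\mathit{QS}}_n] = \tfrac{19}{12}n - \tfrac{17}{12}$ for all $n\ge 3$. For the special case $n=2$ (relevant only if $M=1$), a direct look at the algorithm shows that immediately after the initialization of $k,\ell,g$ we have $k = 2 > 1 = g$, so the main loop is never entered and $\pc^{\mathit{QS}}_2 = 1 \neq \tfrac{7}{4} = 2a+b$.

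Now I plug $a = \tfrac{19}{12}$ and $b = -\tfrac{17}{12}$ into Proposition~\ref{pro:Cn-explicit-linear-pc}. When $M\ge 2$, the value $\pc^{\mathit{QS}}_2$ never enters the recurrence (subproblems of size~$2$ are handled by Insertionsort), and the Insertionsort sum $\tfrac{n+1}{5}\sum_{k=0}^{M}\tfrac{3M-2k}{\binom{M+2}{3}}\E[\insertsortcost_k]$ is evaluated using the exact expected Insertionsort comparison count derived in Appendix~\ref{app:insertionsort}; this is precisely what produces the $\tfrac{3}{20}M$ and $-\tfrac{12}{5(M+2)}\harm{M+1}$ summands of the claimed $(n+1)$-coefficient. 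When $M=1$, both $\insertsortcost_0$ and $\insertsortcost_1$ vanish, so the Insertionsort sum disappears; but $\pc^{\mathit{QS}}_2 \neq 2a+b$ forces the exceptional-case correction from the proposition, which by linearity in the toll function generalizes to an extra summand $\tfrac{1-(2a+b)}{10}(n+1) = -\tfrac{3(n+1)}{40}$. A short direct computation further shows that with these specific values of $a$ and $b$ the quantity $R_1 = \tfrac{6}{5}a + \tfrac{2(a-b)}{4} + \tfrac{5b-17a}{10}$ vanishes, so the $\bo(1/n^4)$ tail of the proposition disappears and the $M=1$ formula is exact for all $n\ge 4$.

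The only real obstacle is routine algebra: collecting the $\harm{M+1}$ and $\tfrac{1}{M+2}$ contributions from the Insertionsort sum into the precise $(n+1)$-coefficient claimed for $M\ge 2$, and verifying both $R_1 = 0$ and the $\pc_2$-correction for the case $M=1$. No further ideas beyond the linearity statement of Proposition~\ref{pro:Cn-explicit-linear-pc} and the basic-block toll frequencies of Table~\ref{tab:expectation-of-toll-functions} are required.
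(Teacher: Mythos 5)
Your proposal is correct and follows essentially the same route as the paper: both decompose the comparison count over the same five comparison locations (with the same frequencies $A$, $\comparisonmarker1$, $\comparisonmarker1-\swapmarker1$, $\comparisonmarker3$, $\comparisonmarker4$), invoke Proposition~\ref{pro:Cn-explicit-linear-pc}, and add the Insertionsort contribution from Appendix~\ref{app:insertionsort}. The only (immaterial, by linearity) difference is that you aggregate the toll functions first and solve the recurrence once with $a=\tfrac{19}{12}$, $b=-\tfrac{17}{12}$, whereas the paper sums the already-solved per-frequency totals of Table~\ref{tab:expected-frequencies}; your explicit checks that $R_1=0$ and that the $\pc_2\ne 2a+b$ correction equals $-\tfrac{3}{40}(n+1)$ are correct and justify the exactness of the $M=1$ formula.
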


\begin{proof}
Key comparisons in the partitioning loop happen in basic
blocks~\ref{bb:comp-0}, \ref{bb:comp-1}, \ref{bb:comp-2}, \ref{bb:comp-3}
and~\ref{bb:comp-4}.
Together this amounts to
\begin{align*}
		\totalcomparisonmarker[n]{\mathit{QS}}
	&\wwrel=
				\totalcomparisonmarker[n]{1}
		\wbin+	(\totalcomparisonmarker[n]{1} - \totalswapmarker[n]{1})
		\wbin+ \totalcomparisonmarker[n]{3}
		\wbin+ \totalcomparisonmarker[n]{4}
		\wbin+	A_n 
		\quad\text{and, in expectation,}
\\
		\E[\totalcomparisonmarker[n]{\mathit{QS}}]
	&\wwrel=	
		\begin{cases}
			\annHnnplusbnnc{\tfrac{19}{10}}%
				{\tfrac{361}{300}-\tfrac{18}{5(M+2)}}{\tfrac32}
					& \text{for } M \ge 2; \\[1ex]
			\annHnminusbnc[-]{\tfrac{19}{10}}%
				{\tfrac{711}{200}}{\tfrac{31}{200}}
					& \text{for } M=1,\; n\ge4,
		\end{cases}
\end{align*}
comparisons,
where the second equation follows by summing the results
from Table~\ref{tab:expected-frequencies}.

For $M\ge 2$, we get additional comparisons from
\proc{InsertionSort}, see Appendix~\ref{app:insertionsort} for details:
\begin{align*}
		\E[\totalcomparisonmarker[n]{\mathit{IS}}]
	&\wwrel= \E[E_n] + \E[D_n]
	 \wwrel=	\bigl(
					\tfrac{3}{20}(M+3) + \tfrac{9}{5(M+2)}
					- \tfrac{12}{5(M+2)} \harm{M+1}
				\bigr) (n+1)\;.
\end{align*}
Summing both contributions yields~\eqref{eq:expected-comparisons-total}.
\end{proof}

\subsection{Swaps \& Write Accesses}
\label{sec:expected-swaps}

\begin{theorem}
\label{thm:expected-swaps}
	In expectation, Yaroslavskiy's algorithm
	performs
	\begin{align}
			\E[\totalswapmarker[n]{}] 
		&\wwrel=
		\begin{cases}
			\annHnnplusbnnc[-]{\tfrac35}%
				{\tfrac{19}{50}+\tfrac{4}{5(M+2)}}{\tfrac13}
						& \text{for } M \ge 2; \\[1ex]
			\annHnminusbnc[-]{\tfrac35}%
				{\tfrac{47}{100}}{\tfrac{61}{300}}
						& \text{for } M=1,\; n\ge4,
		\end{cases}
		\label{eq:expected-swaps-partitioning}
	\end{align}
	\emph{swaps} in partitioning steps while sorting a random permutation of
	size~$n$.
	
	Including the ones done in \proc{InsertionSort} on small subproblems,
	Yaroslavskiy's algorithm uses
	\begin{align}
		\E[\totalwritesmarker[n]{}] &\wwrel=
		\begin{cases}
			\annHnnplusbnnclinebreak[-]{\tfrac{11}{10}}%
				{\tfrac{86}{75} + \tfrac3{20} M + \tfrac{18}{5(M+1)}
						- \tfrac{26}{5(M+2)}}{\tfrac56}
					& \text{for } M \ge 2; \\[3ex]
			\annHnminusbnc[-]{\tfrac{11}{10}}%
				{\tfrac{139}{200}}{\tfrac{257}{600}}
					& \text{for } M=1,\; n\ge4,
		\end{cases}
		\label{eq:expected-writes-total}
	\end{align}
	\emph{write accesses} to the array to sort a random permutation.
\end{theorem}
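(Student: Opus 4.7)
The plan is to mirror the strategy of Theorem~\ref{thm:expected-comparisons}: identify toll functions for the first partitioning step in terms of the basic block frequencies already analyzed in Section~\ref{sec:expected-frequencies}, verify that they are linear in $n$ (with special values at $n=2$), and then invoke Proposition~\ref{pro:Cn-explicit-linear-pc} to solve the Quicksort recurrence~\eqref{eq:recurrence}. The base-case contribution of \proc{InsertionSort} enters through the $\E[\insertsortcost_n]$ terms, drawing on the separate analysis in Appendix~\ref{app:insertionsort}.

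For swaps, I would identify the blocks in Algorithm~\ref{alg:yaroslavskiy} that perform swap operations: block~\ref{bb:swap-1} (one swap, frequency $\totalswapmarker{1}$), block~\ref{bb:swap-2} (one swap, frequency $\totalcomparisonmarker{4}-\totalswapmarker{3}$), block~\ref{bb:swap-3} (two swaps, frequency $\totalswapmarker{3}$) and block~\ref{bb:recursive-calls} (two swaps for pivot placement, frequency $A$). Summing gives a toll function whose expectation collapses to $\tfrac12 n + \tfrac76$ for $n\ge 3$ using the basic-block results from Table~\ref{tab:expectation-of-toll-functions}. Since~\eqref{eq:expected-swaps-partitioning} counts only partitioning swaps, I would apply Proposition~\ref{pro:Cn-explicit-linear-pc} with $a=\tfrac12$, $b=\tfrac76$ and $\E[\insertsortcost_n]=0$, from which equation~\eqref{eq:Cn-exact-solution-linear-pc} directly yields the $M\ge 2$ branch of~\eqref{eq:expected-swaps-partitioning}.

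The write-access toll is constructed the same way, but counting \emph{array} writes per block: two per individual swap, plus four for the pivot-placement code in block~\ref{bb:recursive-calls}; a subtle point is that the two successive swaps in block~\ref{bb:swap-3} can be executed as a single three-cycle via one temporary, so they contribute only three array writes rather than four. This yields expected toll $\tfrac{11}{12}n+\tfrac{31}{12}$, matching Table~\ref{tab:expectation-of-toll-functions}. Invoking Proposition~\ref{pro:Cn-explicit-linear-pc} with $a=\tfrac{11}{12}$, $b=\tfrac{31}{12}$ and $\E[\insertsortcost_n]$ equal to the expected number of writes performed by \proc{InsertionSort} on $n$ elements (obtained from Appendix~\ref{app:insertionsort}), the finite sum $\tfrac15 (n+1)\sum_{k=0}^M \tfrac{3M-2k}{\binom{M+2}{3}}\E[\insertsortcost_k]$ in~\eqref{eq:Cn-exact-solution-linear-pc} collapses into the $\tfrac{3}{20}M+\tfrac{18}{5(M+1)}$ contribution visible in~\eqref{eq:expected-writes-total}.

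The main obstacle is handling the $M=1$ subcase cleanly. Here the Quicksort recurrence is invoked down to $n=2$, but on two-element inputs none of the partitioning-loop blocks execute, so the toll functions take the special values $\toll[2]{\totalswapmarker{\mathit{QS}}}=2$ and $\toll[2]{\totalwritesmarker{\mathit{QS}}}=4$ rather than the linear extrapolations $\tfrac{13}{6}$ and $\tfrac{53}{12}$. I would treat this via the $\delta_{M1}$ correction branch of Proposition~\ref{pro:Cn-explicit-linear-pc}, adapted to an arbitrary non-zero value of $\E[\pc_2]$ by decomposing $\pc_n$ into its linear part (handled directly by the proposition) and a single indicator contribution at $n=2$ whose effect on the closed form is a term of order $\bo(1)$; combined with the asymptotic $\bo(1/n^4)$ remainder folded in, this gives the separate exact expressions for $M=1$ stated in~\eqref{eq:expected-swaps-partitioning} and~\eqref{eq:expected-writes-total}. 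Once the toll functions are correctly identified, everything else is a direct substitution into equation~\eqref{eq:Cn-exact-solution-linear-pc}.
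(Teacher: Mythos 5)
Your proposal is correct and follows essentially the same route as the paper: the same decomposition of swaps and writes into the block frequencies $\totalswapmarker{1}$, $\totalcomparisonmarker{4}-\totalswapmarker{3}$, $\totalswapmarker{3}$ and $A$ (with the same three-write realization of block~\ref{bb:swap-3} and four writes for pivot placement), the same aggregate tolls $\tfrac12 n+\tfrac76$ and $\tfrac{11}{12}n+\tfrac{31}{12}$, and the same appeal to Proposition~\ref{pro:Cn-explicit-linear-pc} together with the Appendix~\ref{app:insertionsort} analysis for the \proc{InsertionSort} write contribution. The only cosmetic difference is that you solve the recurrence once for the summed toll (handling the $n=2$ special value by a correction term), whereas the paper sums the already-solved individual frequencies from Table~\ref{tab:expected-frequencies}; these coincide by the linearity noted in Theorem~\ref{thm:recurrence-solution}.
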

\begin{proof}
We find swaps in the partitioning loop of Yaroslavskiy's algorithm in basic
blocks~\ref{bb:swap-1}, \ref{bb:swap-2}, \ref{bb:swap-3}
and~\ref{bb:recursive-calls},
where blocks~\ref{bb:swap-3} and~\ref{bb:recursive-calls} each contain two
swaps.%
\footnote{%
	Note that \citeN{Wild2012} included an additional contribution of
	$B$ for swapping the pivots if they are out of order.
	However, this swap can be done with local variables only, we do not need to
	write the swapped values back to the array.
	Therefore, this swap is \weakemph{not} counted in this paper.	
}
Hence, the total number of swaps during all partitioning
steps is given by
\begin{align*}
		\totalswapmarker{\mathit{QS}}
	&\wwrel=
				\totalswapmarker{1}
		\wbin+	(\totalcomparisonmarker{4}-\totalswapmarker{3})
		\wbin+	2\totalswapmarker{3}
		\wbin+	2 A
	\;.
\end{align*}
Now, \eqref{eq:expected-swaps-partitioning} follows by inserting the terms from
Table~\ref{tab:expected-frequencies}.

A clever implementation realizes the two consecutive swaps in
block~\ref{bb:swap-3} with only three write operations, see for example
Appendix~\ref{app:low-level}.
This yields an overall number of
\begin{align*}
		\totalwritesmarker[n]{\mathit{QS}}
	&\wwrel=
				2\totalswapmarker[n]{1}
		\wbin+	2(\totalcomparisonmarker[n]{4}-\totalswapmarker[n]{3})
		\wbin+	3\totalswapmarker[n]{3}
		\wbin+	4 A_n
		\quad\text{and, in expectation,}
	\\
	\E[\totalwritesmarker[n]{\mathit{QS}}]
	&\wwrel=
	\begin{cases}
		\annHnnplusbnnc[-]{\tfrac{11}{10}}%
			{\tfrac{209}{300}+\tfrac{2}{M+2}}{\tfrac56}
				& \text{for } M \ge 2; \\[1ex]
		\annHnminusbnc[-]{\tfrac{11}{10}}%
			{\tfrac{139}{200}}{\tfrac{257}{600}}
				& \text{for } M=1,\; n\ge4,
	\end{cases}
\end{align*}
write operations during the partitioning steps.
The contribution from \proc{InsertionSort} is (cf.\
Appendix~\ref{app:insertionsort}):
\begin{align*}
		\E[\totalwritesmarker[n]{\mathit{IS}}]
	&\wwrel= \E[E_n] \bin+ (\E[G_n] - \E[I_n])
	 \wwrel=
		\bigl(
			\tfrac{3}{20}(M+3) + \tfrac{18}{5(M+1)} - \tfrac{36}{5(M+2)}
		\bigr) (n+1)\;.
\end{align*}
Adding both together we obtain~\eqref{eq:expected-writes-total}.
\end{proof}

\subsection{Executed Java Bytecode Instructions}
\label{sec:expected-instruction-count}

\begin{theorem}
\label{thm:expected-bytecodes}
	In expectation, the Java implementation of Yaroslavskiy's algorithm given in
	Appendix~\ref{app:low-level} executes
	\begin{align}
	\label{eq:expected-bytecodes-total}
		\E[\bytecodes[n]{}]
		&\wwrel=	\begin{cases}
					\annHnnplusbnnclinebreak[-]{\tfrac{217}{10}}%
						{\tfrac{4259}{150} + \tfrac{51}{20}M
							+ \tfrac{72}{M+1}
							- \tfrac{317}{5(M+2)}
							- \tfrac{48}{5(M+2)}\harm{M+1}}%
						{\tfrac{181}{12}}
							& M \ge 2; \\[4ex]
					\annHnminusbnc[-]{\tfrac{217}{10}}%
						{\tfrac{1993}{200}}{\tfrac{2009}{600}}
							& \mathllap{n\ge4,\;} M = 1,
				\end{cases}
	\end{align}
	Java Bytecode instructions to sort a random permutation of size $n$.
\end{theorem}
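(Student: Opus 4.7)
The plan is to follow exactly the pattern used for Theorems~\ref{thm:expected-comparisons} and~\ref{thm:expected-swaps}, but now tallying bytecodes rather than comparisons or swaps. First I would inspect the concrete Java implementation of Algorithm~\ref{alg:yaroslavskiy} given in Appendix~\ref{app:low-level} and, for each of the twenty basic blocks in the control-flow graph of Figure~\ref{fig:control-flow-graph}, read off the (constant) number of Bytecode instructions contributed by one execution of the block. Weighting each block's bytecode count by its symbolic execution frequency (the labels of Figure~\ref{fig:control-flow-graph}) and summing yields a linear combination
\[
\bytecodes{\mathit{QS}}
   \wwrel= \alpha_A A
        + \alpha_B B
        + \alpha_R R
        + \alpha_F F
        + \alpha_1 \totalcomparisonmarker{1}
        + \alpha_3 \totalcomparisonmarker{3}
        + \alpha_4 \totalcomparisonmarker{4}
        + \beta_1 \totalswapmarker{1}
        + \beta_3 \totalswapmarker{3}
\]
with explicit integer coefficients read from the bytecode listing.

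Taking expectations, plugging in the frequency expectations from
Table~\ref{tab:expected-frequencies}, and collecting terms would give the toll function $\E[\toll{\bytecodes{\mathit{QS}}}]=\tfrac{217}{12}n+\tfrac{265}{4}$ for $n\ge 3$, together with the special value $\E[\toll[2]{\bytecodes{\mathit{QS}}}]=\tfrac{189}2$ already listed in Table~\ref{tab:expectation-of-toll-functions}. This verifies the toll function. I would then invoke Proposition~\ref{pro:Cn-explicit-linear-pc} with $a=\tfrac{217}{12}$ and $b=\tfrac{265}{4}$, noting the additional $-\delta_{M1}\cdot\tfrac1{10}(2a+b)(n+1)$ correction that accounts for the fact that $\E[\toll[2]{\bytecodes{\mathit{QS}}}]\ne 2a+b$; this single application yields the bytecode cost contributed by all partitioning steps in closed asymptotic form.

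For $M\ge 2$, we must finally add the bytecode cost of the \proc{InsertionSort} calls triggered by block~\ref{bb:call-insertionsort}. I would take the exact expected bytecode count for one \proc{InsertionSort} call on a random permutation of $k$ elements from Appendix~\ref{app:insertionsort} (a linear expression in the execution frequencies of its own basic blocks, analogous to the partitioning case) and feed that as an additional toll via the $\E[\insertsortcost_k]$-sum in~\eqref{eq:Cn-exact-solution-linear-pc}. Summing the partitioning contribution and the Insertionsort contribution and matching the $M$-dependent coefficient of $n$, together with the $M=1$ case where the Insertionsort term vanishes and we use the exact solution from Table~\ref{tab:expected-frequencies}, produces the two cases in~\eqref{eq:expected-bytecodes-total}.

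The conceptual steps are all routine given the previous theorems; the only real obstacle is bookkeeping. The Java Bytecode count of each basic block must be extracted correctly from the implementation in Appendix~\ref{app:low-level}, the frequency expansion of $\bytecodes{\mathit{QS}}$ must match the top part of Table~\ref{tab:expectation-of-toll-functions} at both the generic value $n\ge 3$ and the anomalous value $n=2$ (which forces the $\delta_{M1}$-correction), and the Insertionsort bytecode toll must be expressed in the same form $an+b$ so that Proposition~\ref{pro:Cn-explicit-linear-pc} applies term-by-term. Once these computations are carried out carefully, the claimed formula~\eqref{eq:expected-bytecodes-total} follows, and the leading coefficient $\tfrac{217}{10}$ of $(n+1)\bigl(\harm{n+1}-\harm{M+2}\bigr)$ is the expected $\tfrac65\cdot\tfrac{217}{12}$ predicted by~\eqref{eq:Cn-explicit-linear-pc}, giving a welcome sanity check.
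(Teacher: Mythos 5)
Your strategy coincides with the paper's: read off the Bytecode count of each basic block of the implementation in Appendix~\ref{app:low-level}, weight by the execution frequencies of Figure~\ref{fig:control-flow-graph}, and push the resulting linear combination through the recurrence machinery. The paper's proof is exactly this one step: it writes
$\bytecodes[n]{} = 71A - B + 6R + 15\totalcomparisonmarker{1} + 10\totalcomparisonmarker{3} + 11\totalcomparisonmarker{4} + 9\totalswapmarker{1} + 8\totalswapmarker{3} + 3F + 4D + 17E + 20G - 7I$
and inserts the already-solved expectations of the \emph{total} frequencies (Table~\ref{tab:expected-frequencies} and Appendix~\ref{app:insertionsort}), whereas you combine the per-step tolls first and apply Proposition~\ref{pro:Cn-explicit-linear-pc} a single time. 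By the linearity stated in Theorem~\ref{thm:recurrence-solution} the two orderings are equivalent, so the plan is sound and your toll values $\tfrac{217}{12}n+\tfrac{265}{4}$ and $\tfrac{189}{2}$ agree with Table~\ref{tab:expectation-of-toll-functions}.

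Two bookkeeping steps would, as written, give a wrong answer. First, the correction you quote, $-\delta_{M1}\tfrac1{10}(2a+b)(n+1)$, is the one Proposition~\ref{pro:Cn-explicit-linear-pc} provides for tolls with $\E[\pc_2]=0$; inspecting its proof, the correction is proportional to the gap $2a+b-\E[\pc_2]$. Since $\E[\toll[2]{\bytecodes{\mathit{QS}}}]=\tfrac{189}{2}\neq0$, the correct summand is $-\delta_{M1}\tfrac1{10}\bigl(2a+b-\tfrac{189}{2}\bigr)(n+1)=-\delta_{M1}\tfrac{19}{24}(n+1)$; using your version overshoots by $\tfrac{189}{20}(n+1)$ and ruins the $M=1$ case. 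Second, the $6R=18A+6$ Bytecodes of blocks~\ref{bb:1} and~\ref{bb:20} are executed by \emph{every} call, including the $R-A$ base-case calls that never enter the partitioning loop; they belong neither to the per-partitioning-step toll nor to the \proc{InsertionSort} procedure's own blocks, so in your single-toll formulation $\E[\insertsortcost_k]$ for $k\le M$ must include this call overhead (and, for $M=1$, one must also adopt the paper's convention of stripping the 13 Bytecodes \proc{InsertionSort} would execute on lists of size at most one). The paper's route through the total frequencies sidesteps both pitfalls; with these two repairs your derivation reproduces~\eqref{eq:expected-bytecodes-total}.
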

\begin{proof}
By counting the number of Bytecode instructions in each basic block
and multiplying it with this block's frequency,
we obtain:
\begin{align*}
		\bytecodes[n]{}
	&\wwrel=	 71 A
				- 1 B
				+ 6 R
				+ 15 \totalcomparisonmarker{1}
				+ 10 \totalcomparisonmarker{3}
				+ 11 \totalcomparisonmarker{4}
				+  9 \totalswapmarker{1}
				+  8 \totalswapmarker{3}
				+  3 F
				+  4 D
				+ 17 E
				+ 20 G
				-  7 I
	\;.
\end{align*}
For details, see Appendix~\ref{app:low-level}.
Inserting the expectations from Table~\ref{tab:expected-frequencies}
results in~\eqref{eq:expected-bytecodes-total}.

The Bytecode count for $M=1$ corresponds to entirely removing
\proc{InsertionSort} from the code.
\proc{InsertionSort} would execute 13 Bytecodes even on empty and one-element
lists until it finds out that the list is already sorted.
These obviously superfluous instructions are removed for $M=1$.
\end{proof}

\needspace{5cm}
\subsection{Optimal Choice for $M$}
\label{sec:optimal-M}

\begin{figure}
	\def\plotdatafile{plots/bytecodes-small-n-normalized-by-n.tab}
	\mbox{}\hfill%
	\tikzset{external/export=true}\tikzsetnextfilename{bytecodes-insertionsort-small-n}
	\begin{tikzpicture}
		\begin{axis}[
		    xlabel={$n$},
		    ylabel={$\#\mathrm{Bytecodes} \mathbin/ n$},
		    legend style={ at={(0.05,0.95)}, anchor=north west },
		    mark size=1.5pt,
 		    cycle list name=bw,
		   ]
			\addplot plot table[x=n, y=Insertionsort] {\plotdatafile};
			\addlegendentry{Insertionsort}
			\label{plot:bytecodes-small-n-IS}
			\addplot plot table[x=n, y=Yaroslavskiy-noIS] {\plotdatafile};
			\addlegendentry{Yaroslavskiy $M=1$}
			\label{plot:bytecodes-small-n-M1}
			\addplot plot table[x=n, y={Yaroslavskiy-M=7}] {\plotdatafile};
			\addlegendentry{Yaroslavskiy $M=7$}
			\label{plot:bytecodes-small-n-M7}
		\end{axis}
	\end{tikzpicture}%
	\hfill\mbox{}%
	\caption{%
		The expected number of executed Bytecodes for \proc{InsertionSort} and
		Yaroslavskiy's algorithm with different choices for $M$. The numbers of
		Bytecodes shown are normalized by $n$, i.\,e., we show the number
		of executed Bytecode instructions per element to be sorted.
		The data was obtained by naïvely evaluating the recurrence.
		\label{fig:bytecodes-small-n}
	}
\end{figure}
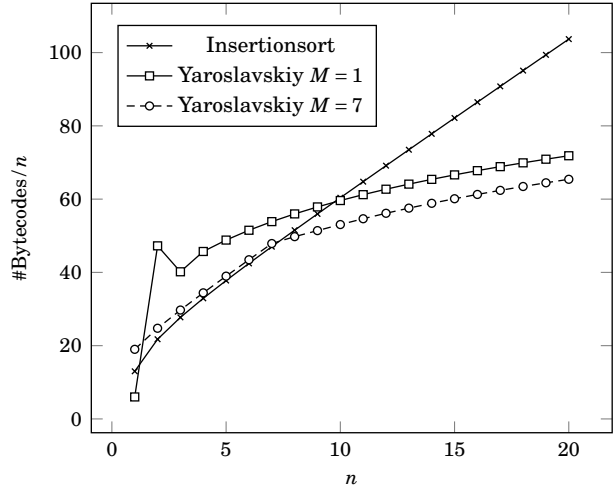

The overall linear term for the expected number of comparisons is
$$
	\bigl(
		  \tfrac{124}{75} + \tfrac{3}{20} M - \tfrac{37}{10(M+2)}
		  - \tfrac{62+19M}{10(M+2)} \harm{M+1}
	\bigr) (n+1)
	\;.
$$
This coefficient has a proper minimum at $M=5$ with value $-3.62024\ldots$\,.
Compared with $-711 \mathbin/ 200 = -3.555$ for $M=1$ this a minor
improvement though.

For the number of write operations, the linear term is
$$
	\bigl(
		  \tfrac{86}{75} + \tfrac{3}{20} M
		  - \tfrac{26}{5(M+2)} + \tfrac{18}{5(M+1)}
		  - \tfrac{11}{10} \harm{M+2}
	\bigr) (n+1)
	\;.
$$
The minimum $-1.098\overline3$ of this coefficient is also located at $M=5$.
The corresponding coefficient for $M=1$ is $-0.695$. This improvement is
more satisfying than the one for comparisons.

The linear term for the number of executed Bytecodes of Yaroslavskiy's algorithm
with $M\ge 2$ attains its minimum $-16.0887\ldots$ at $M=7$. This is a
significant reduction over $-9.965\ldots$, the linear term without
\proc{InsertionSort}\kern.5pt ing.
Figure~\ref{fig:bytecodes-small-n} shows the resulting expected number of
Bytecodes for small lists.
For $n \le 20$, using \proc{InsertionSort} results in an improvement of over $10\,\%$.
For $n=100$ we save $6.3\,\%$, for $n=1000$ it is $4.2\,\%$ and for $n=10\,000$,
we still execute $3.1\,\%$ less Bytecode instructions than the basic version of
Yaroslavskiy's algorithm.

It is interesting to see that both elementary operations favor $M=5$, but
the overall Bytecode count is minimized for ``much'' larger $M=7$.
This shows that focusing on elementary operations can skew the view of an
algorithm's performance.
Only explicitly taking the overhead of partitioning into account reveals that
\proc{InsertionSort} is significantly faster on small subproblems.

\paragraph{Remark}
The actual Java 7 runtime library implementation uses $M=46$, which seems
far from optimal at first sight.
Note however that the implementation uses the more elaborate
pivot selection scheme \weakemph{tertiles of five}
\cite{Wild2013Alenex}, which implies additional constant overhead per
partitioning step.

\needspace{5cm}
\section{Distribution of Costs}
\label{sec:distributional-analysis}

In this section we study the asymptotic \emph{distributions} of our cost
measures.
We derive limit laws after normalization and identify the order of
variances and covariances.
In particular, we find that all costs are asymptotically concentrated around
their mean.

As we confine ourselves to asymptotic statements of first order (leading terms
in the expansions of variances and covariances), it turns out that the choice
of $M$ does not affect the results of this section:
All results hold for any (constant) $M$ (see \cite[proof of
Corollary\,5.5]{neininger2001multivariate} for similar universal behavior of
standard Quicksort).
Appendix~\ref{app:experimental-validation} shows that the asymptotic results are
good approximations for practical input sizes~$n$.
A general survey on distributional analysis of various sorting
algorithms covering many classical results is found in
\cite{mahmoud2000sorting}.

\subsection{The Contraction Method}
\label{sss:contraction-method}

Our tool to identify asymptotic variances, correlations and limit laws is the
\textsl{contraction method}, which is applicable to many divide-and-conquer
algorithms.
Roughly speaking, the idea is to appropriately normalize a recurrence equation
for the \emph{distribution} of costs such that we can hope for convergence
to a limit distribution.
If we then replace all terms that depend on $n$ by their limits for
$n\to\infty$, we obtain a \emph{map} within the space of probability
distributions that approximates the recurrence.

Next a (complete) metric between probability distributions is chosen such that 
this map becomes a contraction; 
then the \textsl{Banach fixed-point theorem} implies the existence of a unique
fixed point for this map.
This fixed point is the candidate for the limit distribution of the normalized
costs and the underlying contraction property is then exploited to also show
convergence of the normalized costs towards the fixed point. 
This convergence is shown within the same complete metric. 
If the metric is sufficiently strong, it may imply more than convergence in law; 
in our case we additionally obtain convergence of the first two moments,
i.\,e., convergence of mean and variance.
This enables us to compute asymptotics for the variance of the cost as well. 
Note that a fixed-point representation for a limit distribution is implicit, but
it is suitable to compute moments of the limit distribution and to identify
further properties such as the existence of a (Lebesgue) density.

\smallskip
For the reader's convenience we formulate a general
convergence theorem from the contraction method that is used repeatedly below
and sufficient for our purpose.
Let $(X_n)_{n\ge 0}$ denote a sequence of centered and square
integrable random variables either in $\mathbb{R}$ or $\mathbb{R}^2$ whose distributions satisfy the recurrence
\begin{align}
\label{eq:general-contraction-recurrence}
	X_n &\wwrel\eqdist
	\sum_{r=1}^K \ui{A}{n}_r \ui{X}{r}_{I^{(n)}_r} + \ui{b}{n},
	\qquad n\ge n_0,
\end{align}
where the random variables
$(\ui{A}{n}_1,\ldots,\ui{A}{n}_K,\ui{b}{n},\ui{I}{n})$ and
$(\ui{X}{1}_n)_{n\ge 0}, \ldots,(\ui{X}{K}_n)_{n\ge 0}$ are independent, and
$\ui{X}{r}_{i}$ is distributed as $X_i$ for all $r=1,\ldots,K$ and $i\ge 0$.
Furthermore, $\ui{I}{n}=(\ui{I}{n}_1,\ldots,\ui{I}{n}_K)$ is a vector of random
integers in $\{0,\ldots,n-1\}$ and $K$ and $n_0$ are fixed integers.

The coefficients $\ui{A}{n}_r$ and $\ui{b}{n}$ are real random variables in the
univariate case, respectively random $2\times 2$ matrices and a $2$-dimensional
random vector in the bivariate case.
We assume also that the coefficients are square integrable and that the
following conditions hold:
\begin{enumerate}[itemsep=1ex,leftmargin=3em,label=(\Alph*)]
\item \label{cond:convergence-of-recurrence-coeffs}
	$(\ui{A}{n}_1,\ldots,\ui{A}{n}_K,\ui{b}{n})
	\wrel{\stackrel{\ell_2}{\longrightarrow}}
	(A_1,\ldots,A_k,b)$,	
\item \label{cond:recurrence-contraction}
	$\sum_{r=1}^K \E\left[ \| A_r^tA_r\|_\mathrm{op}\right] \wrel< 1$,
\item \label{cond:recurrence-small-sublists-seldom}
	$\sum_{r=1}^K \E\bigl[
		{\bf 1}_{\{\ui{I}{n}_r \le\ell\}}
		\|(\ui{A}{n}_r)^t\ui{A}{n}_r\|_\mathrm{op}
	\bigr] \wwrel\to 0 \mbox{ as } n\to\infty \mbox{ for all constants } \ell \ge
	0$.
\end{enumerate}
Here \smash{$\|A\|_\mathrm{op} \ce \sup_{\|x\|=1} \|Ax\|$} denotes the
\textsl{operator norm} of a matrix and $A^t$ the transposed matrix.
Note that in the univariate case we just have $\| A_r^tA_r\|_\mathrm{op}=A_r^2$.
In \ref{cond:convergence-of-recurrence-coeffs} we denote by 
\smash{$\overset{\mbox{\smash{$\scriptscriptstyle\ell_2$}}}{\longrightarrow}$} 
convergence in the Wasserstein-metric of order 2 which here is equivalent to the existence of
vectors $(\ui{\tilde{A}}n_1,\ldots,\ui{\tilde{A}}n_K,\ui{\tilde{b}}n)$
with the distribution of $(\ui{A}{n}_1,\ldots,\ui{A}{n}_K,\ui{b}{n})$ such that
we have the $L_2$ convergence
\begin{align*}
	(\ui{\tilde{A}}n_1,\ldots,\ui{\tilde{A}}n_K,\ui{\tilde{b}}n)
	&\wwrel{\stackrel{L_2}{\longrightarrow}}
	(A_1,\ldots,A_k,b) \;.
\end{align*}
Note in particular that $A_1,\ldots,A_k,b$ are square integrable as well.
Then we consider distributions of $X$ such that
\begin{align}
\label{eq:general-fix-point-eq}
	X &\wwrel\eqdist \sum_{r=1}^K A_r \ui{X}{r} + b\,,
\end{align}
where $(A_1,\ldots,A_K,b)$, $\ui{X}{1},\ldots,\ui{X}{K}$ are independent and
$\ui{X}{r}$ are distributed as $X$ for $r=1,\ldots, K$.
The following two results from the contraction method are used:
\begin{enumerate}[itemsep=.5ex,leftmargin=3em,label=(\Roman*)]
\item \label{res:unique-fixpoint}
	Under \ref{cond:recurrence-contraction}, among all centered, square
	integrable distributions there is a unique solution ${\cal L}(X)$ to 
	\eqref{eq:general-fix-point-eq}.
\item \label{res:convergence-of-coeffs-gives-fixpoint}
	Assuming \ref{cond:convergence-of-recurrence-coeffs},
	\ref{cond:recurrence-contraction}
	and~\ref{cond:recurrence-small-sublists-seldom},
	the sequence $(X_n)_{n\ge 0}$ converges in distribution to the
	solution~${\cal L}(X)$ from \ref{res:unique-fixpoint}.
	The convergence holds as well for the second (mixed) moments of~$X_n$.
\end{enumerate}
These results are given by \citeN[Theorem 3]{rosler2001analysis} for
the univariate case and \citeN[Theorem 4.1]{neininger2001multivariate}
for the multivariate case.

%
%

\subsection{Distributional Analysis of Yaroslavskiy's Algorithm}
\label{sec:distributional-analysis-yaroslavskiy}

We come back to Yaroslavskiy's algorithm (Algorithm~\ref{alg:yaroslavskiy}).
To apply the contraction method, we have to
characterize the full distribution of costs~$\pc_n$ in one partitioning step
and then formulate a \emph{distributional recurrence} for the resulting
distribution of costs~$C_n$ for the complete sorting. 
To obtain a contracting mapping, we rewrite the derived recurrence for~$C_n$ in
terms of suitably normalized costs $\tCsn$; 
here it will suffice to subtract the
expected values computed in the last section and then to divide by $n$.

For the distributional analysis, it proves more convenient to consider
i.\,i.\,d.\ uniformly on $[\.0,1]$ distributed random variables $U_1,\ldots,U_n$ as input.
Note that $U_1,\dots,U_n$ are thus pairwise different almost surely.
As the actual element values do not matter, this input model is the same as
considering a random permutation.

\smallskip
Yaroslavskiy's algorithm chooses $U_1$ and
$U_n$ as pivot elements.
Denote by $D=(D_1,D_2,D_3)$ the \emph{spacings} induced by $U_1$ and $U_n$ on
the interval $[\.0,1]$; formally we have
\begin{align*}
	(D_1,D_2,D_3)  &\wwrel=
	( U_{(1)} ,\, U_{(2)}-U_{(1)} ,\, 1-U_{(2)} ) \,,
\end{align*}
for $U_{(1)} \ce \min\{U_1,U_n\}$ and $U_{(2)} \ce \max\{U_1,U_n\}$.
It is well-known that $D$ is uniformly
distributed in the standard $2$-simplex \cite[p.\,133f]{David2003},
i.\,e., $(D_1,D_2)$ has density
\begin{align*}
	f_D(x_1,x_2) &\wwrel=
		\begin{cases}
			2, & \text{for } x_1,x_2 \ge 0 \rel\wedge x_1+x_2 \le 1; \\
			0, & \text{otherwise},
		\end{cases}
\end{align*}
and $D_3 = 1-D_1-D_2$ is fully determined by $(D_1,D_2)$.
Hence, for a measurable function $g:[\.0,1]^3
\to \Rset$ such that $g(D_1,D_2,D_3)$ is integrable, we have that
\begin{align}
\label{eq:Dirichlet-integral}
	\E[g(D_1,D_2,D_3)] &\wwrel=
		2 \int_0^1 \mkern-6mu \int_0^{1-x_1} \!\!\!\!\!	
			g(x_1,x_2,1-x_1-x_2)
		\; d x_2 \: d x_1 \;.
\end{align}

Further we denote the sizes of the three subproblems generated in the first
partitioning phase by $\ui{I}{n}=(\ui{I}{n}_1,\ui{I}{n}_2,\ui{I}{n}_3)$.
Then we have $\ui{I}{n}_1+\ui{I}{n}_2+\ui{I}{n}_3=n-2$.
Moreover, the spacings $D_1$, $D_2$ and $D_3$ are exactly the probabilities for
an element $U_i$ ($1<i<n$) to be small, medium or large, respectively.
As all these elements are independent, the vector~$\ui{I}{n}$, conditional on
$D$, has a multinomial distribution:
\begin{align*}
	\ui{I}{n} &\wwrel\eqdist \multinomial(n-2;\,D_1,D_2,D_3).
\end{align*}
We will use the short notation $\ui{I}{n}=I=(I_1,I_2,I_3)$ when the dependence
on $n$ is obvious.
From the strong law of large numbers and dominated convergence we have in
particular for $r\in\{1,2,3\}$
\begin{align}
\label{eq:asymptotic-I}
	\frac{\ui{I_r}n}{n}
		&\wwrel{\stackrel{L_p}{\longrightarrow}}  D_r
	\qquad (n\to\infty),
	\quad \; 1\le p<\infty
	\;.
\end{align}

The advantage of this random model is that we can \emph{decouple} values from
ranks of pivots.
With $D_1=U_{(1)}$ and $D_1 + D_2=U_{(2)}$, we choose the \emph{values} of the
two pivots; however, the ranks $P$ and $Q$ are not yet fixed.
Therefore given fixed pivot values, we can still \emph{independently} draw
non-pivot elements 
(with probabilities $D_1$, $D_2$ and $D_3$ to become small, medium and large,
resp.), without having to fuzz with a priori restrictions on the overall number
of small, medium and large elements.
This makes it much easier to compute cost contributions uniformly in pivot
values than in pivot ranks.	
If we operate on random permutations of $\{1,\ldots,n\}$, values and ranks
coincide, so fixing pivot values there implies strict bounds on the number of
small, medium and large elements.

\subsubsection{Distribution of Toll Functions}

\begin{table}
\tbl{%
	Exact distributions of the toll functions introduced in
	Section~\ref{sec:expected-frequencies} or equivalently the distributions
	of block execution frequencies in the first partitioning step.
	\label{tab:distribution-of-tolls}
}{%
	\let\C\comparisonmarker
	\let\S\swapmarker
	\def\rowsep{0pt}
	\def\vv{$\vphantom{\Big|}$}
	\def\ed{\eqdist}
	\def\pd{\vphantom{\ed}}
	\begin{tabular}{ >{\(}c<{\)} >{\(}c<{\)} >{\(}l<{\)} >{\(}c<{\)} >{\(}l<{\)} }
		\toprule
		\multicolumn3{c}{Quantity}	&	\multicolumn2{c}{Distribution given $I=(I_1,I_2,I_3)$}	        \\
		\midrule
 		\vv	P		& = & I_1 + 1                   & \pd &                                                 \\[\rowsep]
		\vv	Q		& = & n - I_3                   & \pd &                                                 \\[\rowsep]
		\vv	\delta	& = & \indicator{\arrayA[Q] > Q}      & \ed &		\bernoulli\bigl(\tfrac{I_3}{n-2}\bigr)          \\[\rowsep] 
		\vv \toll A& = & 1                         & \pd &                                                 \\[\rowsep]
 		\vv	\toll B	& = & \indicator{\arrayA[\id{left}] > \arrayA[\id{right}]}
                                                & \ed & \bernoulli\bigl(\tfrac12\bigr)                  \\[\rowsep]
 		\vv	\toll F	& = & \delta                    & \ed & \bernoulli\bigl(\tfrac{I_3}{n-2}\bigr)          \\[\rowsep]
 		\vv	\C{1}	& = & Q - 2 + \delta            & \ed &  I_1 + I_2
 		                                                + \bernoulli\bigl(\tfrac{I_3}{n-2}\bigr)        \\[\rowsep]
 		\vv	\C{3}	& = & n - Q                     & \ed & I_3                                             \\[\rowsep]
 		\vv	\C{4}	& = & \delta + (\numberat{sm}G)   & \ed & \bernoulli\bigl(\tfrac{I_3}{n-2}\bigr) + \hypergeometric(I_1+I_2,I_3,n-2)              \\[\rowsep]
 		\vv	\S{1}   & = & \satK                     & \ed & \hypergeometric(I_1,I_1+I_2,n-2)                \\[\rowsep]
 		\vv	\S{3}   & = & P-1 - (\satK)               & \ed & I_1 - \hypergeometric(I_1,I_1+I_2,n-2)          \\[\rowsep]
		\bottomrule
	\end{tabular}
}
\end{table}

In Section~\ref{sec:expected-frequencies}, we determined for each basic block of
Yaroslavskiy's algorithm, how often it is executed in one partitioning
step.
There, we only used the expected values in the end, but we already characterized
the full distributions in passing.
They are summarized in
Table~\ref{tab:distribution-of-tolls} for reference.

Most of those distributions are in fact \emph{mixed} distributions, i.\,e.,
their parameters depend on the random variable $I=(I_1,I_2,I_3)$, namely the
sizes of the subproblems for recursive calls.
For example, we find that $\delta$ conditional on the
event $(I_1,I_2,I_3) = (i_1,i_2,i_3)$ is Bernoulli
$\bernoulli(i_3\mathbin/(n-2))$ distributed, which we briefly write as
$\delta \mathrel{\smash\eqdist}
\bernoulli(I_3\mathbin/(n-2))$.
Note that since $I$ has itself a mixed distribution\,---\,namely conditional on
$D$\,---\,we actually have three layers of random variables: spacings,
subproblem sizes and toll functions.
The key technical lemmas for dealing with these three-layered distributions are
given in Section~\ref{sec:asymptotics-lemmas}.

\needspace{2cm}
\subsubsection{Distributional Recurrence}
\label{sec:distributional-recurrence}

Denote by $\pc_n$ the (random) costs of the first partitioning step of
Yaroslavskiy's algorithm.
By Property~\ref{pro:randomness-preservation}, subproblems generated in
the first partitioning phase are, conditional on their sizes,
again uniformly random permutations and independent of each other.
Hence, we obtain the distributional recurrence for the (random) total
costs~$C_n$:
\begin{align}
\label{eq:distributional-recurrence-generic}
		C_n
	\wwrel\eqdist
		C'_{I_1} + C''_{I_2} + C'''_{I_3} 
		\wbin+	\pc_n\,, \qquad (n\ge 3),
\end{align}
where $(I_1,I_2,I_3,\pc_n)$, 
$(\tCpj)_{j\ge 0}$, 
$(\tCppj)_{j\ge 0}$, 
$(\tCpppj)_{j\ge0}$ 
are independent and
$\tCpj$, $\tCppj$, $\tCpppj$ 
are identically distributed as
$C_j$ for $j\ge 0$.
By Theorems~\ref{thm:expected-comparisons}, \ref{thm:expected-swaps}
and~\ref{thm:expected-bytecodes}, we know the expected costs $\E[C_n]$, 
so with \smash{$C^*_0 \ce 0$} and
\begin{align}
\label{eq:def-normalized-costs}
	C^*_n &\wwrel\ce
		\frac{C_n-\E[C_n]}{n}\,,
		\qquad \text{for } n\ge 1,
\end{align}
we have a sequence $(\tCsn)_{n\ge 0}$ of centered, square integrable
random variables. 
Using \eqref{eq:distributional-recurrence-generic} we
find, cf.~\cite[eq.~(27), (28)]{hwne02},
that $(\tCsn)_{n\ge 0}$  satisfies~\eqref{eq:general-contraction-recurrence}
with
\begin{align}
\label{eq:general-coefficients-contraction}
	\ui{A}{n}_r &\wwrel= \frac{I_r}{n}, &
	\ui{b}{n}   &\wwrel=
		\frac{1}{n}\biggl(\pc_n - \E[C_n]
			\wbin+ \sum_{r=1}^3 \E[C_{I_r}\given I_r]
		\biggr)\,,
\end{align}
so we can apply the framework of the contraction method.
It remains to check the conditions 
\ref{cond:convergence-of-recurrence-coeffs},
\ref{cond:recurrence-contraction} and
\ref{cond:recurrence-small-sublists-seldom} to prove that $\tCsn$ indeed
converges to a limit law;
the detailed computations are given in Appendix~\ref{app:contraction-proofs}.
The key results needed therein are presented in the following section
as technical lemmas.

\subsection{Asymptotics of Mixed Distributions}
\label{sec:asymptotics-lemmas}

The following convergence results for mixed distributions are essential for
proving condition~\ref{cond:convergence-of-recurrence-coeffs}.

\begin{lemma}
\label{lem:asymptotics-hypergeometric}
	Let $(V_1,\ldots,V_b)$ be a vector of random probabilities, i.\,e.,
	$0\le V_r\le 1$ for all $r=1,\ldots,b$ and $\sum_{r=1}^b V_r=1$ almost surely.
	Let
	\begin{align}
		(L_1,\ldots,L_b) \wwrel\ce
		(\ui Ln_1,\ldots,\ui Ln_b) \wwrel\eqdist \multinomial(n;V_1,\ldots,V_b),
	\end{align}
	be mixed multinomially distributed.
	Furthermore for $J_1,J_2 \subset \{1,\ldots,b\}$ let
	\begin{align}
		Z_n \wwrel\eqdist
		\hypergeometric\biggl(
			\sum_{j\in J_1} L_j ,\; \sum_{j\in J_2} L_j ,\; n
		\biggr)
	\end{align}
	be mixed hypergeometrically distributed.
	Then we have the $L_2$-convergence, as $n\to\infty$,
	\begin{align}
		\frac{Z_n}{n}
			&\wwrel{\stackrel{L_2}{\longrightarrow}}
		\biggl(
			\sum_{j\in J_1} \!V_j
		\biggr)\cdot\biggl(
			\sum_{j\in J_2} \!V_j
		\biggr).
	\end{align}
\end{lemma}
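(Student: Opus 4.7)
The plan is to exploit the three-layer structure (spacings $V$, multinomial counts $L$, hypergeometric draw $Z_n$) and successively ``peel off'' the conditioning. Write $W_r \ce \sum_{j\in J_r} V_j$ and $K_r \ce \sum_{j\in J_r} L_j$ for $r\in\{1,2\}$, so that the claim becomes $Z_n/n \stackrel{L_2}\longrightarrow W_1 W_2$.

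First I would prove $K_r/n \stackrel{L_2}\longrightarrow W_r$. A marginal of a multinomial is binomial, so $K_r \given V \eqdist \binomial(n,W_r)$, which gives the conditional bound
\begin{align*}
	\E\bigl[(K_r/n - W_r)^2 \given V\bigr] &\wwrel= \frac{W_r(1-W_r)}{n} \wwrel\le \frac{1}{4n} \;.
\end{align*}
Taking expectations yields $\|K_r/n - W_r\|_2 = \bo(n^{-1/2})$. Since $K_r/n$ and $W_r$ take values in $[0,1]$, the elementary estimate $|a_1 b_1 - a_2 b_2| \le |a_1-a_2| + |b_1-b_2|$ for $a_i,b_i\in[0,1]$ then gives the product convergence $K_1 K_2/n^2 \stackrel{L_2}\longrightarrow W_1 W_2$.

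Second, I would reduce $Z_n/n$ to $K_1 K_2/n^2$ via the decomposition
\begin{align*}
	\frac{Z_n}{n} &\wwrel= \frac{Z_n - \E[Z_n \given K_1, K_2]}{n} \wbin+ \frac{K_1 K_2}{n^2}\,,
\end{align*}
using $\E[Z_n \given K_1, K_2] = K_1 K_2/n$ from~\eqref{eq:mean-variance-hypergeometric}. The same equation together with the classical domination ``hypergeometric variance $\le$ binomial variance'' yields $\V(Z_n\given K_1, K_2) \le K_1 (K_2/n)(1-K_2/n) \le n/4$, and hence
\begin{align*}
	\E\left[\left(\frac{Z_n - K_1 K_2/n}{n}\right)^{\!2}\right]
		&\wwrel= \frac{\E[\V(Z_n\given K_1,K_2)]}{n^2}
		\wwrel\le \frac{1}{4n}
		\wwrel\to 0\;.
\end{align*}
Combining with the first step via the triangle inequality in $L_2$ proves the lemma.

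I do not anticipate a real obstacle; the only care needed is in sequencing the tower property over the three layers of randomness ($V$, then $L\given V$, then $Z_n\given L$), which the decomposition above handles cleanly. Everything else reduces to the mean and variance formulas in~\eqref{eq:mean-variance-hypergeometric} and the boundedness of all quantities in $[0,1]$ (resp.\ $[0,n]$).
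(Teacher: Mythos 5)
Your proof is correct, but it takes a genuinely different route from the paper's. The paper first establishes a deterministic-parameter estimate: for a hypergeometric variable $\Upsilon_n$ whose parameters deviate from $\alpha n$ and $\beta n$ by at most $n^{2/3}$, one has $\|\Upsilon_n/n-\alpha\beta\|_2\le 4n^{-1/3}$; it then conditions on $V=v$, invokes Chernoff's bound to show that the binomial sums $\sum_{j\in J_i}L_j$ lie in such $n^{2/3}$-windows around $nw_i$ with probability $1-4\exp(-2n^{1/3})$, and finally splits the squared $L_2$-norm into the contribution from this good event and a crude bound on its complement. You instead avoid concentration inequalities and event-splitting altogether: you prove $K_r/n\to W_r$ in $L_2$ directly from the conditional binomial variance, pass to the product via the elementary bound $|a_1b_1-a_2b_2|\le|a_1-a_2|+|b_1-b_2|$ on $[0,1]$, and then control $Z_n/n-K_1K_2/n^2$ by the tower property and the hypergeometric variance bound $\V(Z_n\given K_1,K_2)\le n/4$. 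Your argument is shorter, uses only first and second moments, and in fact yields the sharper rate $\bo(n^{-1/2})$ in place of the paper's $\bo(n^{-1/3})$; the paper's exponential tail bound buys nothing extra here since only $L_2$-convergence is needed. Both proofs rest on the same two pillars (binomial marginals of the multinomial, and the mean/variance formulas~\eqref{eq:mean-variance-hypergeometric}), so this is a legitimate streamlining rather than a fundamentally new idea, but the sequencing of the tower property in your version is cleaner.
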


The proof exploits that the binomial and the hypergeometric
distributions are both strongly concentrated around their means.
The full-detail computations to lift this to conditional expectations
are given in Appendix~\ref{app:contraction-proofs}.
%

\smallskip
\begin{lemma}
\label{lem:asymptotics-xlnx}
	For $L_1,\ldots,L_b$ from Lemma~\ref{lem:asymptotics-hypergeometric}, we
	have for $1\le i\le b$ the $L_2$-convergence
	\begin{align}
			\tfrac{L_i}n \ln \bigl( \tfrac{L_i}n \bigr)
		&\wwrel{\stackrel{L_2}{\longrightarrow}}
			V_i \ln (V_i)\,,
		\qquad\text{as } n \to \infty.
	\end{align}
\end{lemma}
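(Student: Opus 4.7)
The plan is to derive the $L_2$-convergence in two moves: first upgrade the distributional information of the previous lemma to $L_i/n \xrightarrow{L_2} V_i$ (this is essentially the binomial special case of Lemma~\ref{lem:asymptotics-hypergeometric}), and then lift this to the claim by invoking continuity together with uniform boundedness of $\phi(x) \ce x\ln x$ on $[0,1]$.

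First I would observe that, by the definition of the mixed multinomial distribution, $L_i$ \emph{conditional} on $V=(V_1,\ldots,V_b)$ is $\binomial(n,V_i)$-distributed. Conditioning and applying the tower property gives
\begin{align*}
\E\!\left[\bigl(\tfrac{L_i}{n}-V_i\bigr)^{\!2}\right]
\wwrel= \E\!\left[\tfrac{V_i(1-V_i)}{n}\right]
\wwrel\le \tfrac{1}{4n}
\wwrel{\longrightarrow} 0 \qquad (n\to\infty),
\end{align*}
so in particular $L_i/n \to V_i$ in probability. (Alternatively, this is the special case $b=2$, $J_1=J_2=\{i\}$ of Lemma~\ref{lem:asymptotics-hypergeometric} up to the hypergeometric/binomial swap, or can be read off from the strong law of large numbers applied conditionally on $V$.)

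Next, the function $\phi\colon[0,1]\to\R$ with $\phi(x)=x\ln x$ (and $\phi(0)=0$ by the convention fixed in Section~\ref{sec:preliminaries}) is continuous on the compact interval $[0,1]$, hence uniformly continuous and bounded; elementary calculus yields $|\phi(x)| \le 1/e$ on $[0,1]$, attained at $x=1/e$. Consequently $\phi(L_i/n)$ and $\phi(V_i)$ both take values in $[-1/e,0]$, so the difference is almost surely bounded: $|\phi(L_i/n)-\phi(V_i)|\le 2/e$. The continuous mapping theorem together with $L_i/n\to V_i$ in probability yields $\phi(L_i/n) \to \phi(V_i)$ in probability.

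Finally, convergence in probability together with the uniform bound $2/e$ permits dominated convergence on the squared differences, giving
\begin{align*}
\E\!\left[\left(\tfrac{L_i}{n}\ln\tfrac{L_i}{n}-V_i\ln V_i\right)^{\!2}\right]
\wwrel{\longrightarrow} 0,
\end{align*}
which is precisely the claimed $L_2$-convergence. I do not anticipate any real obstacle: the argument is routine once one notices that $\phi$ is bounded on $[0,1]$, which is exactly what makes the dominating constant $(2/e)^2$ available without any moment hypotheses beyond what is already encoded by $L_i\in\{0,\ldots,n\}$ and $V_i\in[0,1]$. The only point requiring mild care is the boundary value at~$0$, which is handled by the continuous extension $0\ln 0\ce 0$ adopted in the preliminaries.
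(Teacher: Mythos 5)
Your proof is correct and follows essentially the same route as the paper's: continuity of $x\mapsto x\ln x$ on $[0,1]$ (with the convention at $0$), uniform boundedness by $1/e$, and dominated convergence to upgrade to $L_2$. The only cosmetic difference is that you obtain the intermediate convergence $L_i/n\to V_i$ in probability via a conditional variance/Chebyshev computation, whereas the paper invokes the strong law of large numbers to get almost sure convergence; both feed into the same bounded-convergence conclusion.
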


(Recall that we set $x\ln(x)\ce0$ for $x=0$.)\\
The proof is directly obtained by combining the \textsl{law of large
numbers} with the \textsl{dominated convergence theorem}; see
Appendix~\ref{app:contraction-proofs} for details.

\subsection{Key Comparisons}

We have the following asymptotic results on the variance and distribution of
the number of key comparisons of Yaroslavskiy's algorithm:

\begin{theorem}
\label{thm:limiting-dist-comparisons}
	For the number $C_n$ of key comparisons used by Yaroslavskiy's Quicksort when
	operating on a uniformly at random distributed permutation we have
	\begin{align}
			\frac{C_n - \E[C_n]}{n}
		&\wwrel\to
			C^*,
		\qquad (n\to \infty),
	\end{align}
	where the convergence is in distribution and with second moments.
	The distribution of~$C^*$ is determined as the unique fixed point,
	subject to $\E[X] = 0$ and $\E[X^2] < \infty$, of
	\begin{align} \label{eq:fix-point-eq-comparisons}
		X &\wwrel\eqdist
			1 \bin+ (D_1+D_2)(D_2+2D_3)
			  \wbin+ \sum_{j=1}^3 \left(D_j \ui{X}{j}
			  +	\tfrac{19}{10} D_j \ln D_j \right)\,,
	\end{align}
	where $(D_1,D_2,D_3)$, $\ui{X}{1}$, $\ui{X}{2}$ and $\ui{X}{3}$ are
	independent and $\ui{X}{j}$ has the same distribution as $X$ for $j\in\{1,2,3\}$.
	Moreover, we have, as $n\to \infty$,
	\begin{align}
		\V(C_n) &\wwrel\sim  \sigma_C^2 n^2
		\qquad\quad\text{with}\qquad\quad
		\sigma_C^2 \wwrel= \tfrac{2231}{360}-\tfrac{361}{600}\pi^2=
		0.25901\!\ldots
	\end{align}
\end{theorem}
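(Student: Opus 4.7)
The plan is to apply the general contraction framework (results \ref{res:unique-fixpoint} and \ref{res:convergence-of-coeffs-gives-fixpoint}) to the normalized recurrence \eqref{eq:general-contraction-recurrence} induced by \eqref{eq:distributional-recurrence-generic} and \eqref{eq:general-coefficients-contraction}, where $\pc_n$ now denotes the \emph{random} number of key comparisons in the first partitioning step. To do so I need to (i) identify the $L_2$-limits $A_r$ and $b$ of the coefficients $A^{(n)}_r = I_r/n$ and $b^{(n)}$, and (ii) verify conditions \ref{cond:convergence-of-recurrence-coeffs}--\ref{cond:recurrence-small-sublists-seldom}.

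For the partitioning toll I would decompose $\pc_n = \ui{C}{1}_n + (\ui{C}{1}_n - \ui{S}{1}_n) + \ui{C}{3}_n + \ui{C}{4}_n + A_n$ exactly as in the proof of Theorem \ref{thm:expected-comparisons}, then read off the conditional distributions given $I$ from Table \ref{tab:distribution-of-tolls}. Lemma \ref{lem:asymptotics-hypergeometric} gives $\ui{S}{1}_n/n \to D_1(D_1+D_2)$ and $\ui{C}{4}_n/n \to (D_1+D_2)D_3$ in $L_2$; the bounded Bernoulli summands and $A_n=1$ contribute nothing after division by $n$; and $\ui{C}{3}_n/n = I_3/n \to D_3$, $(I_1+I_2)/n \to D_1+D_2$ are immediate from \eqref{eq:asymptotic-I}. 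Summing and simplifying yields the $L_2$ limit $\pc_n/n \to 1 + (D_1+D_2)(D_2+2D_3)$.

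Next, inserting the expansion $\E[C_n] = \tfrac{19}{10}n\ln n + \kappa n + \bo(\log n)$ from Theorem \ref{thm:expected-comparisons} into \eqref{eq:general-coefficients-contraction} and writing $(I_r/n)\ln I_r = (I_r/n)\ln(I_r/n) + (I_r/n)\ln n$, the $\tfrac{19}{10}\ln n$ and $\kappa$ contributions from $\sum_r \E[C_{I_r}\given I_r]/n$ cancel those of $\E[C_n]/n$ up to $o(1)$ in $L_2$, since $\sum_r I_r = n-2$. The residual $\tfrac{19}{10}\sum_r (I_r/n)\ln(I_r/n)$ converges in $L_2$ to $\tfrac{19}{10}\sum_r D_r\ln D_r$ by Lemma \ref{lem:asymptotics-xlnx}. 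Combined with the partitioning-toll limit above, this proves
\begin{align*}
	b^{(n)} \wwrel{\stackrel{L_2}{\longrightarrow}} b \wwrel\ce 1 + (D_1+D_2)(D_2+2D_3) + \tfrac{19}{10} \sum_{r=1}^3 D_r\ln D_r,
\end{align*}
establishing \ref{cond:convergence-of-recurrence-coeffs}. Since $(D_1,D_2,D_3)$ is uniform on the $2$-simplex, $D_r \sim \text{Beta}(1,2)$ with $\E[D_r^2] = \tfrac{1}{6}$, whence $\sum_r \E[A_r^2] = \tfrac12 < 1$, giving \ref{cond:recurrence-contraction}. Condition \ref{cond:recurrence-small-sublists-seldom} follows from $\indicator{I_r\le\ell}(I_r/n)^2 \le (\ell/n)^2 \to 0$ uniformly. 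Applying \ref{res:unique-fixpoint} and \ref{res:convergence-of-coeffs-gives-fixpoint} delivers the asserted convergence in distribution and of second moments, with $C^*$ the unique centered $L_2$ fixed point of \eqref{eq:fix-point-eq-comparisons}.

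For the variance I would square \eqref{eq:fix-point-eq-comparisons} and take expectations. Independence of $(D_1,D_2,D_3)$ from the $\ui{X}{j}$ together with $\E[\ui{X}{j}]=0$ kills all cross terms involving distinct $\ui{X}{j}$'s or a single power of an $\ui{X}{j}$, yielding $\sigma_C^2 = \E[b^2] + \tfrac{1}{2}\sigma_C^2$, hence $\sigma_C^2 = 2\E[b^2]$ (a sanity check $\E[b]=0$ uses $\E[D_1\ln D_1] = -\tfrac{5}{18}$). Expanding $b^2$, every resulting term is a Dirichlet moment $\E[D_i^a D_j^c (\ln D_i)^p (\ln D_j)^q]$ with $p,q\in\{0,1,2\}$; by \eqref{eq:Dirichlet-integral}, those with at most one $\ln$-factor reduce to elementary integrals against the $2$-simplex density and give rational values. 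The sole source of $\pi^2$ is the cross term $\E[D_i D_j \ln D_i \ln D_j]$ with $i\neq j$, which via $\int_0^1 \ln x \ln(1-x)\,dx = 2 - \pi^2/6$ produces the $-\tfrac{361}{600}\pi^2$ summand; collecting everything yields $\sigma_C^2 = \tfrac{2231}{360}-\tfrac{361}{600}\pi^2$. The main obstacle is the tedious but mechanical bookkeeping of the Dirichlet integrals and the cancellations between polynomial and logarithmic pieces of $b^2$.
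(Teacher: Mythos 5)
Your proposal follows essentially the same route as the paper's proof: the same decomposition of the comparison toll into the block frequencies of Table~\ref{tab:distribution-of-tolls}, the same use of Lemmas~\ref{lem:asymptotics-hypergeometric} and~\ref{lem:asymptotics-xlnx} to verify condition~\ref{cond:convergence-of-recurrence-coeffs}, the same verification of \ref{cond:recurrence-contraction} via $\sum_r\E[D_r^2]=\tfrac12$, and the same variance identity $\sigma_C^2=2\E[b^2]$ evaluated with the Dirichlet integral \eqref{eq:Dirichlet-integral}. The only difference is that you make explicit where the $\pi^2$ arises (the cross moments $\E[D_iD_j\ln D_i\ln D_j]$, $i\neq j$), a detail the paper delegates to a computer algebra system; the argument is correct.
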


For the proof, we apply the contraction method to $X_n = \tCsn$ as defined by
\eqref{eq:def-normalized-costs}, where the toll function
$\comparisonmarker[n]{}$ is used.
Details on checking conditions 
\ref{cond:convergence-of-recurrence-coeffs},
\ref{cond:recurrence-contraction} and
\ref{cond:recurrence-small-sublists-seldom}, as well as the 
derivation of the resulting fixed-point equation for $C^*$ and the computation
of the variance are given in Appendix~\ref{app:contraction-proofs}.

\subsection{Swaps}

For the number of swaps in Yaroslavskiy's algorithm we have the following
asymptotic behavior of variance and distribution.

\begin{theorem}
\label{thm:limiting-dist-swaps}
	For the number $\totalswapmarker[n]{}$ of swaps used by Yaroslavskiy's
	algorithm when operating on a random permutation we	have
	\begin{align}
			\frac{S_n - \E[S_n]}{n}
		&\wwrel\to
			S^*\,,
			\qquad (n\to \infty)\,,
	\end{align}
	where the convergence is in distribution and with second moments.
	The distribution of $S^*$ is determined as the unique fixed point,
	subject to $\E[X]=0$ and $\E[X^2]<\infty$, of
	\begin{align}
		X &\wwrel\eqdist
			  D_1 \bin+ (D_1 + D_2) \, D_3
			\wbin+ \sum_{j=1}^3 \left(
					D_j \ui{X}{j} + \tfrac{3}{5} D_j \ln D_j
			  \right)\,,
	\end{align}
	where $(D_1,D_2,D_3)$, $\ui{X}{1}$, $\ui{X}{2}$ and $\ui{X}{3}$ are independent
	and $\ui{X}{j}$ has the same distribution as $X$ for $j\in\{1,2,3\}$.
	Moreover, we have, as $n\to \infty$,
	\begin{align}
		\V(S_n) &\wwrel\sim  \sigma_S^2 n^2,
		\qquad\text{with}\qquad
			\sigma_S^2
		\wwrel=
			\tfrac{7}{10} - \tfrac{3}{50}\pi^2
		\wwrel= 0.10782\!\ldots
	\end{align}
\end{theorem}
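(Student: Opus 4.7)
The proof will mirror that of Theorem~\ref{thm:limiting-dist-comparisons}, with the comparison toll function replaced by the swap toll function. I would start from the distributional recurrence~\eqref{eq:distributional-recurrence-generic} for $S_n$, whose first-step contribution decomposes as $\pc_n = \swapmarker{1} + \comparisonmarker{4} + \swapmarker{3} + 2A$ (the two physical swaps in block~\ref{bb:swap-3} absorbing the $-\swapmarker{3}$ from $\comparisonmarker{4}-\swapmarker{3}$), with the distributions of each summand read off from Table~\ref{tab:distribution-of-tolls}. Normalizing by $S^*_n = (S_n-\E[S_n])/n$ then puts the recurrence into the form~\eqref{eq:general-contraction-recurrence} with coefficients $A^{(n)}_r = I_r/n$ and $b^{(n)}$ as in~\eqref{eq:general-coefficients-contraction}. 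The plan is then to check the hypotheses~\ref{cond:convergence-of-recurrence-coeffs}--\ref{cond:recurrence-small-sublists-seldom} and invoke results~\ref{res:unique-fixpoint}--\ref{res:convergence-of-coeffs-gives-fixpoint}.

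The main technical work lies in verifying~\ref{cond:convergence-of-recurrence-coeffs}. The convergence $A^{(n)}_r \to D_r$ in $L_2$ is immediate from~\eqref{eq:asymptotic-I}. For $b^{(n)}$, I would split it into (i)~the toll-function part $\pc_n/n$, for which Lemma~\ref{lem:asymptotics-hypergeometric} applied to each summand yields
$\swapmarker{1}/n \to D_1(D_1+D_2)$, $\comparisonmarker{4}/n \to (D_1+D_2)D_3$, $\swapmarker{3}/n \to D_1 D_3$, and $2A/n \to 0$, which sum to $D_1 + (D_1+D_2)D_3$; and (ii)~the compensation term $\bigl(\sum_r \E[S_{I_r}\mid I_r]-\E[S_n]\bigr)/n$, which, after plugging in the asymptotics $\E[S_n]=\tfrac35(n+1)\harm{n+1}+\Theta(n)$ from Theorem~\ref{thm:expected-swaps} and using $\sum I_r=n-2$, reduces to $\tfrac35\sum_r (I_r/n)\ln(I_r/n)+o_{L_2}(1)$, which converges to $\tfrac35\sum_{j=1}^{3} D_j\ln D_j$ by Lemma~\ref{lem:asymptotics-xlnx}. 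Adding both pieces gives the claimed right-hand side of the fixed-point equation.

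Conditions~\ref{cond:recurrence-contraction} and~\ref{cond:recurrence-small-sublists-seldom} depend only on the $A_r$ and $I_r$, so they are identical to the comparison case: a direct Dirichlet computation gives $\sum_{r=1}^{3}\E[D_r^2]=3\cdot\tfrac16=\tfrac12<1$, and condition~\ref{cond:recurrence-small-sublists-seldom} follows from $\Prob[I_r\le\ell]\to 0$ together with the uniform bound $(I_r/n)^2\le 1$. Result~\ref{res:unique-fixpoint} then yields a unique fixed-point law $\mathcal L(S^*)$ and~\ref{res:convergence-of-coeffs-gives-fixpoint} gives the convergence in distribution and in second moments.

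For the variance I would apply the conditional variance decomposition to the fixed-point equation: given $D$, the three copies $X^{(j)}$ are independent with variance $\sigma_S^2$, so $\V(X\mid D)=\sigma_S^2\sum_r D_r^2$ and, after checking that $\E[b]=0$ (an easy Dirichlet integral, using $\E[D_j\ln D_j]=-\tfrac{5}{18}$), one obtains $\sigma_S^2 = \tfrac12\sigma_S^2+\V(b)$, i.e.\ $\sigma_S^2=2\E[b^2]$. I expect this last step to be the main computational obstacle: one has to evaluate several Dirichlet integrals of the shape $\E[D_iD_jD_kD_l]$, $\E[D_iD_j\ln D_k]$ and $\E[D_i D_j\ln D_k\ln D_l]$ over the $2$-simplex via~\eqref{eq:Dirichlet-integral}; the cross-terms involving two logarithms are responsible for the $\pi^2$ contribution (through $\int_0^1 x^a(1-x)^b\ln^2 x\,dx$ and the standard value $\sum_{k\ge 1} k^{-2}=\pi^2/6$), ultimately yielding $\sigma_S^2=\tfrac{7}{10}-\tfrac{3}{50}\pi^2$.
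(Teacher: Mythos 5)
Your proposal is correct and follows essentially the same route as the paper: the same distributional recurrence and normalization, verification of conditions \ref{cond:convergence-of-recurrence-coeffs}--\ref{cond:recurrence-small-sublists-seldom} via Lemmas~\ref{lem:asymptotics-hypergeometric} and~\ref{lem:asymptotics-xlnx}, and the variance identity $\sigma_S^2 = 2\E[b^2]$ evaluated by Dirichlet integrals over the simplex. The only cosmetic difference is that you pass to the $L_2$-limit of the toll function term by term, whereas the paper first simplifies the exact sum (using $\swapmarker{1}+\swapmarker{3}=I_1$) to $I_1 + \numberat{sm}{G} + \delta + 2$ before taking limits; both yield $D_1+(D_1+D_2)D_3$.
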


The proof is similar to the one for Theorem~\ref{thm:limiting-dist-comparisons},
details are given in Appendix~\ref{app:contraction-proofs}.

\needspace{2cm}
\subsection{Executed Bytecode Instructions}

For the number of executed Java Bytecode instructions in Yaroslavskiy's
algorithm we have the following asymptotic variance and
distribution.

\begin{theorem}
\label{thm:limiting-dist-bytecodes}
	For the number $\bytecodes[n]{}$ of executed Java Bytecodes used by
	Yaroslavskiy's algorithm when sorting a random permutation, we have
	\begin{align}
			\frac{\bytecodes[n]{} - \E[\bytecodes[n]{}]}{n}
		&\wwrel\to
			\bytecodes{}^*\,,
			\qquad (n\to \infty)\,,
	\end{align}
	where the convergence is in distribution and with second moments.
	The distribution of $\bytecodes{}^*$ is determined as the unique fixed
	point, subject to $\E[X]=0$ and $\E[X^2]<\infty$, of
	\begin{align}
		X &\wwrel\eqdist
			  24 + (D_3-9)D_2 - 2D_3(5D_3+2)
			\wbin+ \sum_{j=1}^3 \left(
					D_j \ui{X}{j} + \tfrac{217}{10} D_j \ln D_j
			  \right)\,,
	\end{align}
	where $(D_1,D_2,D_3)$, $\ui{X}{1}$, $\ui{X}{2}$ and $\ui{X}{3}$ are independent
	and $\ui{X}{j}$ has the same distribution as $X$ for $j\in\{1,2,3\}$.
	Moreover, we have, as $n\to \infty$,
	\begin{align}
		\V(\bytecodes[n]{}) &\wwrel\sim  \sigma_{\bytecodes{}}^2 n^2,
		\qquad\text{with}\qquad
			\sigma_{\bytecodes{}}^2
		\wwrel=
			\tfrac{1\,469\,983}{1\,800} - \tfrac{47\,089}{600}\pi^2
		\wwrel= 42.0742\!\ldots
	\end{align}
\end{theorem}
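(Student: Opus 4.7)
The proof proceeds exactly along the lines of Theorems~\ref{thm:limiting-dist-comparisons} and~\ref{thm:limiting-dist-swaps}, so I only sketch the steps particular to the Bytecode cost measure.

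My starting point is the distributional recurrence~\eqref{eq:distributional-recurrence-generic} with $C_n$ replaced by $\bytecodes[n]{}$ and with toll function $\toll{\bytecodes{}_n}$ collecting the Bytecode contribution of the first partitioning step; its expansion in terms of the basic-block frequencies is exactly the linear combination used in the proof of Theorem~\ref{thm:expected-bytecodes}. Passing to the normalization $X_n \ce (\bytecodes[n]{}-\E[\bytecodes[n]{}])/n$ puts the recurrence in the form~\eqref{eq:general-contraction-recurrence} with $\ui{A}{n}_r = I_r/n$ and $\ui{b}{n}$ as in~\eqref{eq:general-coefficients-contraction}.

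The first task is to identify the $L_2$-limit $b$ of $\ui{b}{n}$. Using $\E[\bytecodes[k]{}] = \tfrac{217}{10} k\ln k + ck + \bo(\ln k)$ from Theorem~\ref{thm:expected-bytecodes}, the linear-in-$n$ terms cancel between $\E[\bytecodes[n]{}]$ and $\sum_r \E[\bytecodes[I_r]{}\mid I_r]$, and the $n\ln n$ terms combine into $\tfrac{217}{10}\sum_r (I_r/n)\ln(I_r/n)$, which converges to $\tfrac{217}{10}\sum_r D_r\ln D_r$ in $L_2$ by Lemma~\ref{lem:asymptotics-xlnx}. For the random toll function $\toll{\bytecodes{}_n}/n$, only the linear-in-$n$ contributions $15\comparisonmarker{1}+10\comparisonmarker{3}+11\comparisonmarker{4}+9\swapmarker{1}+8\swapmarker{3}$ survive division by $n$; all other terms are $\bo(1)$. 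Applying Lemma~\ref{lem:asymptotics-hypergeometric} to the hypergeometric components in $\comparisonmarker{4}$ and $\swapmarker{1}$, together with~\eqref{eq:asymptotic-I} for the multinomial components, yields $L_2$ convergence of $\toll{\bytecodes{}_n}/n$ to
\[
    g(D) \wwrel= 15(D_1+D_2)+10 D_3 + 11(D_1+D_2)D_3 + 9 D_1(D_1+D_2) + 8 D_1 D_3.
\]
Substituting $D_1 = 1-D_2-D_3$ and simplifying gives $g(D) = 24 + (D_3-9)D_2 - 2 D_3(5 D_3+2)$, which is exactly the deterministic part of the claimed fixed-point equation.

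Having established condition~\ref{cond:convergence-of-recurrence-coeffs} with the prescribed $b$, condition~\ref{cond:recurrence-contraction} follows from $\sum_r \E[D_r^2] = 3\cdot\tfrac16 = \tfrac12 < 1$ since each $D_r$ is $\mathrm{Beta}(1,2)$ distributed, and condition~\ref{cond:recurrence-small-sublists-seldom} is immediate because $I_r^{(n)}/n \to D_r$ pushes all mass to sublists of size $\to\infty$. Results~\ref{res:unique-fixpoint} and~\ref{res:convergence-of-coeffs-gives-fixpoint} then yield both the distributional limit and convergence of second moments, identifying the limit $\bytecodes{}^*$ with the unique $L_2$-centered fixed point of~\eqref{eq:fix-point-eq-comparisons} with the stated coefficients.

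For the asymptotic variance, I square the fixed-point equation and take expectations. Independence of $(D,b)$ from the $\ui X r$ together with $\E[\ui X r]=0$ collapses all mixed terms, leaving the linear equation $\sigma^2_{\bytecodes{}} = (\sum_r \E[D_r^2])\.\sigma^2_{\bytecodes{}} + \E[b^2]$, i.e.\ $\sigma^2_{\bytecodes{}} = 2\.\E[b^2]$. The main remaining obstacle is then purely computational: evaluating $\E[b^2]$ over the Dirichlet density~\eqref{eq:Dirichlet-integral} expands into polynomial integrals on the simplex plus integrals of the form $\E[D_i^a D_j^b(\ln D_i)^p(\ln D_j)^q]$; the $\pi^2$-contribution in the final constant $\tfrac{1\,469\,983}{1\,800}-\tfrac{47\,089}{600}\pi^2$ comes from the latter via standard identities such as $\int_0^1 (1-x)\ln^2 x\,dx = \tfrac74 - \tfrac{\pi^2}{6}$ combined with Euler-type double-log integrals already used in the proofs of Theorems~\ref{thm:limiting-dist-comparisons} and~\ref{thm:limiting-dist-swaps}; no new analytic ingredient is required beyond careful bookkeeping.
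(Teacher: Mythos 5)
Your proposal matches the paper's proof essentially step for step: the same distributional recurrence and normalization, the same identification of the toll-function limit (your $8D_1D_3$ equals the paper's $8\bigl(D_1-D_1(D_1+D_2)\bigr)$, and both simplify to $24+(D_3-9)D_2-2D_3(5D_3+2)$), the same verification of conditions \ref{cond:convergence-of-recurrence-coeffs}--\ref{cond:recurrence-small-sublists-seldom}, and the same reduction $\sigma_{\bytecodes{}}^2=2\,\E[b^2]$ followed by integration against the Dirichlet density \eqref{eq:Dirichlet-integral}. One cosmetic slip worth noting: $\int_0^1(1-x)\ln^2x\,dx=\tfrac74$ with no $\pi^2$ term --- the $\pi^2$ in the final constant arises from cross terms $\E[D_i\ln D_i\cdot D_j\ln D_j]$ via integrals of the type $\int_0^1\ln x\ln(1-x)\,dx=2-\tfrac{\pi^2}6$ --- and the fixed-point equation you should cite at the end is the generic \eqref{eq:general-fix-point-eq}, not \eqref{eq:fix-point-eq-comparisons}; neither affects the argument.
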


Again, the proof is similar to the one for
Theorem~\ref{thm:limiting-dist-comparisons} and details are given in
Appendix~\ref{app:contraction-proofs}.

\subsection{Covariance of Comparisons and Swaps}

In this section we study the asymptotic \emph{covariance} $\Cov(C_n,S_n)$
between the number of key comparisons $C_n$ and the number of swaps $S_n$ in
Yaroslavskiy's algorithm.
\begin{theorem}\label{thm:limit-covariance}
	For the number $C_n$ of key comparisons and the number $S_n$ of swaps used by
	Yaroslavskiy's algorithm on a random permutation, we have for
	$n\to\infty$
	\begin{align}
		\Cov(C_n,S_n) &\wwrel\sim
			\sigma_{C,S} \, n^2
		\qquad\text{with}\qquad
	 	\sigma_{C,S} \wwrel=
	 		\tfrac{28}{15} - \tfrac{19}{100}\pi^2
	 		\wwrel = -0.00855817\!\ldots
	\label{corcs}
	\end{align}
	The \textsl{correlation coefficient} of $C_n$ and $S_n$ consequently is
	$$
			\frac{\Cov(C_n,S_n)}{\sqrt{\V(C_n)}\sqrt{\V(S_n)}}
		\wwrel\sim \rho
		\wwrel\approx	-0.0512112\!\ldots
	$$
\end{theorem}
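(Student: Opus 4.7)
The plan is to obtain the limit covariance by running the contraction method in its bivariate form, so that one can jointly characterize the limits of $(C_n,S_n)$ after centering and normalization. Because Theorems~\ref{thm:limiting-dist-comparisons} and~\ref{thm:limiting-dist-swaps} already give $L_2$--convergence of each marginal $\tCsn = (C_n - \E[C_n])/n$ and $\tSsn = (S_n - \E[S_n])/n$, the remaining task is to show that the joint vector $(\tCsn,\tSsn)$ converges in $L_2$ as well, and then to identify its limit covariance.

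First, I would apply the $\R^2$-version of the contraction framework in Section~\ref{sss:contraction-method} to the vector $X_n = (\tCsn,\tSsn)^t$. The distributional recurrence~\eqref{eq:distributional-recurrence-generic} for the cost vector, combined with $\E[C_n]$ and $\E[S_n]$ from Theorems~\ref{thm:expected-comparisons} and~\ref{thm:expected-swaps}, yields~\eqref{eq:general-contraction-recurrence} with diagonal coefficient matrices $A_r^{(n)} = (I_r/n)\,\mathrm{Id}$ and an $\R^2$-valued toll vector $b^{(n)}$ assembled from the two toll contributions $\comparisonmarker{}$ and $\swapmarker{}$ (minus the relevant expected corrections). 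Conditions~\ref{cond:convergence-of-recurrence-coeffs}--\ref{cond:recurrence-small-sublists-seldom} then follow from the arguments already used for each marginal in Appendix~\ref{app:contraction-proofs}: Lemmas~\ref{lem:asymptotics-hypergeometric} and~\ref{lem:asymptotics-xlnx} furnish the $L_2$-convergence of all ingredients of $b^{(n)}$ to their deterministic spacing-limits, while $\sum_r \E[(I_r/n)^2] \to \tfrac12 < 1$ (using $\E[D_j^2] = \tfrac16$ for the uniform simplex) controls the operator-norm contraction. By~\ref{res:convergence-of-coeffs-gives-fixpoint} the bivariate sequence $(\tCsn,\tSsn)$ therefore converges in distribution and with mixed second moments to a vector $(C^*,S^*)$ whose joint law is the unique centered, square-integrable fixed point of the $\R^2$-valued map whose marginals are precisely the equations of Theorems~\ref{thm:limiting-dist-comparisons} and~\ref{thm:limiting-dist-swaps}, driven by a \emph{common} spacing vector $(D_1,D_2,D_3)$ and jointly-distributed copies $(\ui{X}{j},\ui{Y}{j})_{j=1,2,3}$.

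Convergence of mixed second moments immediately gives $\Cov(C_n,S_n)/n^2 \to \Cov(C^*,S^*) = \E[C^*S^*]$, so it remains to compute this limit. Write the toll parts as
\begin{align*}
	b_C &\wwrel= 1 + (D_1+D_2)(D_2+2D_3) + \tfrac{19}{10}\sum_{j=1}^3 D_j\ln D_j,\\
	b_S &\wwrel= D_1 + (D_1+D_2)D_3 + \tfrac{3}{5}\sum_{j=1}^3 D_j\ln D_j,
\end{align*}
so that $C^* \eqdist b_C + \sum_j D_j \ui{X}{j}$ and $S^* \eqdist b_S + \sum_j D_j \ui{Y}{j}$. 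Multiplying, using the independence of $(D,b_C,b_S)$ from each copy $(\ui{X}{j},\ui{Y}{j})$, the independence of distinct copies, and $\E[\ui{X}{j}] = \E[\ui{Y}{j}] = 0$, all mixed terms collapse except those with matching index; setting $\kappa \ce \E[C^* S^*]$ this gives
\begin{align*}
	\kappa &\wwrel= \E[b_C b_S] + \Bigl(\sum_{j=1}^3 \E[D_j^2]\Bigr)\,\kappa
	 \wwrel= \E[b_C b_S] + \tfrac12\kappa,
\end{align*}
so $\kappa = 2\,\E[b_C b_S]$.

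The remaining, and main, obstacle is the explicit evaluation of $\E[b_C b_S]$: expanding the product produces a list of monomial expectations of the form $\E[D_1^{a_1}D_2^{a_2}D_3^{a_3}]$, $\E[D_1^{a_1}D_2^{a_2}D_3^{a_3}\ln D_k]$ and $\E[D_i D_j \ln D_k \ln D_\ell]$ against the density $f_D$ from~\eqref{eq:Dirichlet-integral}. The polynomial moments are standard Dirichlet moments; the single-log moments $\E[D_1^{a_1}D_2^{a_2}D_3^{a_3}\ln D_k]$ are obtained by differentiating the Dirichlet normalizing constant in the exponent of $D_k$, which introduces harmonic-number differences; and the double-log moments $\E[D_i D_j \ln D_i \ln D_j]$ and $\E[D_i^2 (\ln D_i)^2]$ (the sources of the $\pi^2$-contributions via the identity $\sum_{k\ge 1}1/k^2 = \pi^2/6$) arise from second partial derivatives of the same Dirichlet constant. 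Collecting all these contributions, multiplying by $2$, and simplifying should yield $\sigma_{C,S} = \tfrac{28}{15} - \tfrac{19}{100}\pi^2$. Finally, the stated correlation coefficient is immediate from $\sigma_{C,S}/(\sigma_C\sigma_S)$ using $\sigma_C^2$ and $\sigma_S^2$ from Theorems~\ref{thm:limiting-dist-comparisons} and~\ref{thm:limiting-dist-swaps}.
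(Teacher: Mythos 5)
Your proposal is correct and follows essentially the same route as the paper: the paper also passes to the bivariate vector $Y_n=\bigl({C_n\atop S_n}\bigr)$, applies the $\R^2$-version of the contraction framework with diagonal coefficient matrices $(I_r/n)\,\mathrm{Id}$ and the combined toll vector, verifies conditions \ref{cond:convergence-of-recurrence-coeffs}--\ref{cond:recurrence-small-sublists-seldom} exactly as for the marginals, and then extracts $\sigma_{C,S}=\E[\Lambda_1\Lambda_2]=2\,\E[b_C b_S]$ from the fixed-point equation via the Dirichlet integral \eqref{eq:Dirichlet-integral}. Your extra detail on evaluating the log-moments is just an explicit account of the computation the paper delegates to a computer algebra system.
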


For the proof of Theorem~\ref{thm:limit-covariance}, we consider the bivariate
random variables
\smash{$Y_n \ce \bigl({C_n \atop S_n}\!\bigr)$} 
and show that their normalized versions
$\tYsn \ce \tfrac1n \bigl( Y_n - \E[Y_n] \bigr)$ 
converge to the (bivariate)
distribution $\mathcal L (\Lambda_1,\Lambda_2)$, which is,
as before, characterized as the unique fixed point of a distributional equation.
The covariance of $C_n$ and $S_n$ is then obtained by multiplying both
components of $\tYsn$, which converges to $\E[\Lambda_1 \cdot \Lambda_2]$.
Full computations are found in Appendix~\ref{app:contraction-proofs}.

\smallskip
Note that Theorem~\ref{thm:limit-covariance} and its proof directly imply the asymptotic
variance and limit distribution of \emph{all} linear combinations
$\alpha C_n + \beta S_n$, for $\alpha,\beta \in \Rset$, which are, for
$\alpha,\beta > 0$, natural cost measures when weighting key comparisons against
swaps.
The reason is that in the proof of Theorem \ref{thm:limit-covariance} we show the bivariate
limit law
\begin{align*}
		\left( \frac{C_n-\E[C_n]}{n} , \frac{S_n-\E[S_n]}{n} \right)
	&\wwrel\to	
		(\Lambda_1,\Lambda_2),
\end{align*}
which holds in distribution and with second mixed moments.
Hence, the \weakemph{continuous mapping theorem} implies, as $n\to\infty$,
\begin{align*}
		\frac{\alpha C_n + \beta S_n \wbin-
		(\alpha\E[C_n] + \beta\E[S_n])}{n}
	&\wwrel\to
		\alpha\Lambda_1 + \beta\Lambda_2,
\end{align*}
in distribution and with second  moments. Thus, we obtain, as $n\to\infty$,
\begin{align*}
		\V(\alpha C_n + \beta S_n)
	&\wwrel=
		\alpha^2 \V(C_n) \bin+ \beta^2 \V(S_n)
		\bin+ 2\alpha\beta \, \Cov(C_n,S_n)\\
	&\wwrel\sim
		( \alpha^2 \sigma^2_C
		+ \beta^2\sigma^2_S
		+ 2\alpha\beta \,	\sigma_{C,S}) \, n^2.
\end{align*}
Note that by this approach also the covariances between all the single
contributions from Table~\ref{tab:distribution-of-tolls} that contribute with
linear order in the first partitioning step to the number of executed Java
Bytecodes used by Yaroslavskiy's algorithm can be identified asymptotically in
first order.

\section{Conclusion}
\label{sec:conclusion}

\noindent
In this paper, we conducted a fully detailed analysis of Yaroslavskiy's
dual-pivot Quicksort\,---\,including the optimization of using Insertionsort on
small subproblems\,---\,in the style of Knuth's book series 
\weakemph{The Art of Computer Programming}.
We give the exact expected number of executed Java Bytecode instructions for
Yaroslavskiy's algorithm.%
\footnote{
	Bytecode instructions serve merely as a sample of one possible detailed cost
	measure; implementations in different low level languages can easily be analyzed
	using the our block execution frequencies.
}
On top of the exact average case results, we establish existence and
fixed-point characterizations of limiting distribution of normalized costs.
From this, we compute moments of the limiting distributions, in
particular the asymptotic variance of the number of executed Bytecodes.
The mere fact that such a detailed average and distributional analysis is
tractable, seems worth noting.
For the reader's convenience, we summarize the main results of this paper in
Table~\ref{tab:executive-summary}, where we also cite corresponding results
on classic single-pivot Quicksort for comparison.

\begin{table}
\tbl{%
	Summary of the results of this paper and comparison with corresponding results
	for classic Quicksort. $M$~is chosen such that the number of executed Bytecodes in
	minimized. 
	(For swaps, results for $M=1$ are given as Insertionsort is not	swap-based.)
	Some asymptotic approximations have been weakened for conciseness of
	presentation, see the main text for full precision.
	The asymptotics use the well-known expansion of Harmonic Numbers [e.\,g.\
	\protect\citeNP[eq.\,(6.66)]{ConcreteMathematics}]
	\label{tab:executive-summary}
}{%
	\newcommand\nlnnminusbn[3][-]{$#2 n \ln n \mathbin{#1} #3 n$}
	\def\an#1{$#1 n$}
	\def\stdev{std.\,dev.}
	\begin{threeparttable}
	\begin{tabular}{r@{\quad}l@{\;\;}lcc}
		\toprule
			\multicolumn{2}{c}{\multirow2*{Cost Measure}}
			&& Yaroslavskiy's Quicksort	& Classic Quicksort \\
			&& error
			 &  with $M=7$ 				& with $M=6$ \\
		\midrule
		\multirow{2}*{Comparisons}
			& expectation & $\bo(\log n)$ 	
				& \nlnnminusbn{1.9}{2.49976}
				& \nlnnminusbn{2}{2.3045}\tnotex{tn:sedgewick77}  \\
			& \stdev & $o(n)$	
				& \an{0.50893}
				& \an{0.648278}\tnotex{tn:hennequin89} \\[1ex]
		\multirow{2}*{Swaps (for $M=1$)}
			& expectation & $\bo(\log n)$
				& \nlnnminusbn{0.6}{0.107004}
				& \nlnnminusbn{0.\overline 3}{0.585373}\tnotex{tn:sedgewick77} \\
			& \stdev & $o(n)$		
				& \an{0.328365}
				& \an{0.0237251}\tnotex{tn:hennequin89} \\[1ex]
		\multirow{1}*{Writes Accesses}
			& expectation & $\bo(\log n)$
				& \nlnnminusbn{1.1}{0.408039}
				& \nlnnminusbn[+]{0.\overline 6}{0.316953}\tnotex{tn:sedgewick77} \\[1ex]
		\multirow{2}*{Executed Bytecodes}
			& expectation & $\bo(\log n)$
				& \nlnnminusbn{21.7}{3.56319}
				& \nlnnminusbn[+]{18}{6.21488}\tnotex{tn:bytecodes} \\
			& \stdev & $o(n)$		
				& \an{6.48646}
				& \an{3.52723}\tnotex{tn:stdevBytecodes} \\[1.5ex]
		\multicolumn{2}{r}{Correlation Coefficient for}
			& \multirow2*{$o(1)$}
				& \multirow2*{$-0.0512112$}
				& \multirow2*{$-0.86404$\tnotex{tn:corr}} \\
		\multicolumn{2}{r}{Comparisons and Swaps} \\
		\bottomrule
	\end{tabular}
	\begin{scriptsize}
	\begin{multicols}2
	\begin{tablenotes}
		\smallskip
		\item[*] \label{tn:sedgewick77}
			see \cite[p.\,334]{Sedgewick1977}.
		\item[\dag] \label{tn:hennequin89}
			see \cite[p.\,330]{hennequin1989combinatorial}.
		\item[\ddag] \label{tn:bytecodes}
			see \cite[p.\,123]{wild2012thesis}.
		\item[\S] \label{tn:stdevBytecodes}
			as in \cite[p.\,515]{neininger2001multivariate} for MIX, \\
			but with Bytecode costs of \cite{wild2012thesis}.
		\item[\$] \label{tn:corr}
			see \cite[Table\,1]{neininger2001multivariate}.
	\end{tablenotes}
	\end{multicols}
	~\\[-3\baselineskip]
	\end{scriptsize}
	\end{threeparttable}
}
\end{table}

As observed by \citeN{Wild2012}, Yaroslavskiy's algorithm uses $5$\,\% less key
comparisons, but $80$\,\% more swaps in the asymptotic average than classic
Quicksort.
Unless comparisons are very expensive, one should expect classic Quicksort to be
more efficient in total.
This intuition is confirmed by our detailed analysis:
In the asymptotic average, the Java implementation of Yaroslavskiy's algorithm
executes $20$\,\% more Java Bytecode instructions than a corresponding
implementation of classic Quicksort.

Strengthening confidence in expectations, we find that
asymptotic standard deviations of all costs remain linear in $n$;
by \textsl{Chebyshev's inequality}, this implies concentration around the mean.
Whereas the number of comparisons in Yaroslavskiy's algorithm shows slightly
less variance than for classic Quicksort, swaps exhibit converse behavior.
In fact, the number of swaps in classic Quicksort is highly
concentrated because it already achieves close to optimal average behavior:
In the partitioning step of classic Quicksort, every swap puts \emph{both}
elements into the correct partition and we never revoke a placement during
one partitioning step.
In contrast, in Yaroslavskiy's algorithm every swap puts only one element into
its final location (for the current partitioning step);
the other element might have to be moved a second time later.

Another facet of this difference is revealed by considering the correlation
coefficient between swaps and comparisons.
In classic Quicksort, swaps and comparisons are almost perfectly
\emph{negatively} correlated.
A ``good'' run w.\,r.\,t.\ comparisons needs balanced partitioning, but the more
balanced partitioning becomes, the higher is the potential for misplaced
elements that need to be moved.
In Yaroslavskiy's partitioning method, such a clear dependency does not exist
for several reasons.
First of all, even if pivots have extreme ranks, sometimes many swaps are done;
e.\,g.\ if $p$ and $q$ are the two largest elements, \emph{all} elements are
swapped in our implementation.
Secondly, for some pivot ranks, comparisons and swaps behave covariantly:
For example if $p$ and $q$ are the two smallest elements, no swap is done and
every element's partition is found with one comparison only.
In the end, the number of comparisons and swaps is almost
uncorrelated in Yaroslavskiy's algorithm.

The asymptotic standard deviation of the total number of executed Bytecode
instructions is about twice as large in Yaroslavskiy's algorithm as in
classic Quicksort.
This might be a consequence of the higher variability in the number of swaps
just described.

Concerning practical performance, asymptotic behavior is not the full story.
Often, inputs in practice are of moderate size and only the massive number of
calls to a procedure makes it a bottleneck of overall execution.
Then, lower order terms are not negligible. For Quicksort, this means in
particular that constant overhead per partitioning step has to be taken into
account.
For tiny $n$, this overhead turns out to be so large, that it pays to switch
to a simpler sorting method instead of Quicksort.
We showed that using \proc{InsertionSort} for subproblems of size at most $M$
speeds up Yaroslavskiy's algorithm significantly for moderate $n$.
The optimal choice for $M$ w.\,r.\,t.\ the number of executed Bytecodes is
$M=7$.

Combining the results for \proc{InsertionSort} from
Appendix~\ref{app:insertionsort} and a corresponding Bytecode count analysis of
a Java implementation of classic Quicksort \cite{wild2012thesis}, we can compare
classic Quicksort and Yaroslavskiy's algorithm exactly.
As striking result we observe that in expectation, Yaroslavskiy's algorithm
needs more Java Bytecodes than classic Quicksort \emph{for all $n$}.
Thus, the efficiency of classic Quicksort in terms of executed Bytecodes
is not just an effect of asymptotic approximations, it holds for realistic input
sizes, as well.

These findings clearly contradict corresponding running time experiments
\cite[Chapter\,8]{wild2012thesis}, where Yaroslavskiy's algorithm was
significantly faster across implementations and programming environments.
One might object that the poor performance of Yaroslavskiy's algorithm is a
peculiarity of counting Bytecode instructions.
\citeN[Section\,7.1]{wild2012thesis} also gives implementations and analyses
thereof in MMIX, the new version of Knuth's imaginary processor architecture.
Every MMIX instruction has well-defined costs, chosen to closely resemble actual
execution time on a simple processor.
The results show the same trend:
Classic Quicksort is more efficient.
Together with the Bytecode results of this paper, we see strong evidence
for the following conjecture:
\begin{conjecture}
The efficiency of Yaroslavskiy's algorithm in practice is caused by advanced
features of modern processors.
In models that assign constant cost contributions to single
instructions\,---\,i.\,e., locality of memory accesses and instruction
pipelining are ignored\,---\,classic Quicksort is more efficient.
\end{conjecture}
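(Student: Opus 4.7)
The plan is to attack the conjecture in two complementary directions: first, to strengthen the negative evidence by showing that Yaroslavskiy's algorithm is more expensive than classic Quicksort across a broad class of constant-cost-per-instruction models; and second, to isolate and quantify the specific hardware features responsible for the observed running-time advantage. Since the conjecture is a hybrid mathematical--empirical statement, a single closed proof is unlikely; instead one aims at a body of evidence that jointly rules out the obvious alternatives.

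For the first direction, I would generalize the Bytecode count of Theorem~\ref{thm:expected-bytecodes} to a structural statement: for any cost assignment that is a non-negative linear combination of the basic block frequencies of Figure~\ref{fig:control-flow-graph} and the analogous frequencies of classic Quicksort (using \citeN{Sedgewick1977}), Yaroslavskiy's algorithm has a strictly larger leading-order coefficient of $n\ln n$ for all sufficiently large constant weights on the ``per-partitioning-step overhead'' blocks. The key quantity to control is the coefficient $\tfrac65 a$ arising in Proposition~\ref{pro:Cn-explicit-linear-pc} versus its one-pivot counterpart $a$; combined with the MMIX analysis of \citeN{wild2012thesis}, this would show that \emph{every} cost measure built from fixed per-instruction weights favours classic Quicksort in leading order, which is precisely the second half of the conjecture.

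For the first half, I would formulate and analyse dedicated cost models that capture individual advanced features in isolation. A cache-oblivious analysis counting I/Os under the external-memory model is one natural candidate: dual-pivot partitioning traverses the array once to split it into three parts, while classic Quicksort must perform two such traversals to reach a comparable refinement, suggesting an I/O advantage provably linear in the tall-cache factor. A second candidate is a branch-misprediction model in which each conditional branch contributes a cost equal to the entropy of its outcome conditional on the history; here the analysis of $\totalcomparisonmarker{1}$, $\totalcomparisonmarker{3}$ and $\totalcomparisonmarker{4}$ in Section~\ref{sec:expected-frequencies} can be refined to track not just the frequency but the conditional bias of each comparison, exploiting the covariance structure identified in Theorem~\ref{thm:limit-covariance}. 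If in at least one such model Yaroslavskiy's algorithm comes out ahead by exactly the gap measured in the running-time experiments, the conjecture is essentially confirmed.

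The main obstacle is that ``advanced processor features'' is not a mathematical object. Any honest proof would therefore have to be accompanied by controlled experiments: measuring running times on processors with branch prediction, prefetching and out-of-order execution individually disabled, and checking that the running-time gap shrinks monotonically as features are removed, vanishing when only in-order single-issue execution remains. The hard part is thus not any single calculation but rather pinning down a formal model that is at once realistic enough to exhibit Yaroslavskiy's empirical advantage and tractable enough to admit a precise analogue of the Knuth-style analysis carried out in Sections~\ref{sec:average-case-analysis} and~\ref{sec:distributional-analysis} of this paper.
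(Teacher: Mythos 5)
The first thing to say is that the paper does not prove this statement, and does not claim to: it is stated explicitly as a \emph{conjecture}, supported only by the evidence assembled earlier in the paper (the exact Bytecode counts of Theorem~\ref{thm:expected-bytecodes}, the analogous MMIX counts cited from \cite{wild2012thesis}, and the contradiction with measured running times), and explicitly deferred to ``future investigations.'' Your recognition that the statement is a hybrid mathematical--empirical claim admitting no closed proof is therefore exactly right, and your second direction --- isolating individual hardware features in dedicated formal models, in particular an external-memory/I/O analysis --- is precisely the route the literature subsequently took (the paper's own footnote points to the later external-memory analysis of Kushagra et al., which shows Yaroslavskiy's algorithm needs fewer I/Os). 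As a research program your proposal is sound and well aligned with the paper's intent.

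There is, however, a concrete flaw in your first direction. The proposed structural lemma --- that \emph{every} non-negative linear combination of block frequencies gives Yaroslavskiy's algorithm a strictly larger $n\ln n$ coefficient, at least for sufficiently large weights on the per-partitioning-step overhead blocks --- cannot hold. The cost measure that puts weight $1$ on the comparison blocks and $0$ elsewhere is such a combination, and there Yaroslavskiy's algorithm wins in leading order ($\tfrac{19}{10}\,n\ln n$ versus $2\,n\ln n$); conversely, weighting only swaps favours classic Quicksort ($\tfrac35$ versus $\tfrac13$). Which algorithm wins in leading order thus depends on the ratio of weights assigned to comparison-type versus swap-type blocks, and no universal statement over all non-negative weightings is available. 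Moreover, the per-partitioning-step overhead blocks (frequency $A_n = \Theta(n)$) contribute only to the \emph{linear} term, not to the $n\ln n$ coefficient, so inflating their weights cannot rescue the claim. The honest form of the second half of the conjecture is the weaker one the paper actually supports: for the \emph{specific} weights arising from real instruction counts (Java Bytecode, MMIX), the swap-heavy blocks dominate and classic Quicksort comes out ahead --- a fact established by computation for those particular models, not by a structural argument over all constant-cost models.
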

It will be the subject of future investigations%
\footnote{%
	Indeed, progress has been made since this article was submitted.
	\citeN{Kushagra2014} analyzed Yaroslavskiy's algorithm in
	the \textsl{external memory model} and show that it needs significantly less
	I/Os than classic Quicksort. 
	Their results indicate that with modern memory hierarchies, using even more
	pivots in Quicksort might be beneficial, since intuitively speaking, more work
	is done in one ``scan'' of the input.
} 
to identify the true reason of
the success of Yaroslavskiy's dual-pivot Quicksort.

\clearpage
\appendix
\section*{APPENDIX}

\section{Solving the Dual-Pivot Quicksort Recurrence}
\label{app:recurrence-proofs}

The proof presented in the following is basically a generalization of the
derivation given by \citeN[p.\,156ff]{Sedgewick1975}. 
\citeN{hennequin1991analyse} gives an alternative approach based on
generating functions that is much more general. 
Even though the authors consider \person{Hennequin}'s
method elegant, we prefer the elementary proof, as it allows
a self-contained presentation.

Two basic identities involving binomials and Harmonic Numbers are used several
times below, so we collect them here. They are found as equations~(6.70)
and~(5.10) in \cite{ConcreteMathematics}.
\begin{align}
	\sum_{0\le k < n} \tbinom km \harm k 
	&\wwrel= \tbinom n{m+1} \bigl( \harm n - \tfrac1{m+1} \bigr)\,,
		&&\hspace*{-1cm}\text{integer }m\ge	0 
	\label{eq:sum-of-harmonics} 
	\,, \\
	\sum_{0\le k \le n} \tbinom km 
	&\wwrel= \tbinom{n+1}{m+1} \,,
		&&\hspace*{-1cm}\text{integers } n,\, m \ge 0\;.  
	\label{eq:summation-in-upper-index}
\end{align}

\begin{proof}[of Theorem~\ref{thm:recurrence-solution}]
The first step is to use symmetries of the sum in \eqref{eq:recurrence}.
\begin{align*}
	\sum_{1\le p<q\le n}\!\!\!
 		\left(C_{p-1}+C_{q-p-1}+C_{n-q}\right)
	\wwrel=&       \sum_{p=1}^{n-1} (n-p)   C_{p-1}
	      \;\;+\;\; \sum_{k=0}^{n-2} (n-1-k) C_{k}
	      \;\;+\;\; \sum_{q=2}^{n}   (q-1)   C_{n-q}\\
	\wwrel=&
	      3\sum_{k=0}^{n-2}(n-k-1)C_{k}\;.
\end{align*}
So, our recurrence to solve is
\begin{align}
	C_n &\wwrel=	\pc_n + \tfrac{6}{n(n-1)}\sum_{k=0}^{n-2} 
				(n-k-1)C_{k}\,,
		\qquad\mbox{for } n>M\;.
\end{align}
We first consider $D_{n}:=\tbinom{n+1}{2}C_{n+1}-\tbinom{n}{2}C_{n}$
to get rid of the factor in front of the sum:
\begin{align*}
	D_n &\wwrel= \overbrace{
				  \tbinom{n+1}2 \pc_{n+1} 
				- \tbinom{n}{2}\pc_n
			}^{d(n)\ce}
		\qquad\qquad(n\ge M+2)\\ 
		& \wwrel\pe {}+ 
			\tfrac{(n+1)n}{2}\tfrac{6}{(n+1)n}\sum_{k=0}^{n-1} (n-k) C_{k} 
		\wwbin- 
			\tfrac{n(n-1)}{2}\tfrac{6}{n(n-1)}\sum_{k=0}^{n-2} (n-k-1) C_{k} \\[-.5ex] 
		& \wwrel= d(n) \;+\; 3\sum_{k=0}^{n-1} C_k
	\;.
\end{align*}
The remaining full history recurrence is eliminated by taking ordinary
differences
\begin{align*} 
	E_{n} & \wrel\ce D_{n+1}-D_{n}
			\wrel= d(n+1)-d(n)+3C_{n}\;.\qquad(n\ge M+2)
\end{align*}
Towards a telescoping recurrence, we consider
\(
	F_n \wrel\ce C_n - \tfrac{n-4}{n}\cdot C_{n-1}\,,
\)
and compute%
\begin{align}
	F_{n+2}-F_{n+1} &\wwrel=	 
						  C_{n+2} 
						- \tfrac{n-2}{n+2}C_{n+1} 
						- \left(C_{n+1}-\tfrac{n-3}{n+1}C_{n}\right)
				\notag\\ 
					&\wwrel=	  
						  C_{n+2}
						- \tfrac{2n}{n+2}C_{n+1}
						+ \tfrac{n-3}{n+1}C_{n}
				\;.
	\label{eq:difference-Fn}
\end{align}
The expression on the right hand side in itself is not helpful. 
However, by expanding the definition of $E_{n}$, we find
\begin{align}
	\left(E_{n}-3C_{n}\right) \bigm/ \tbinom{n+2}{2} 
		&\wwrel=	\left(D_{n+1} - D_{n} - 3C_{n}\right) \bigm/ \tbinom{n+2}{2} \notag\\ 
		&\wwrel=	\biggl(
						\tbinom{n+2}{2} C_{n+2} - \tbinom{n+1}{2} C_{n+1} 
						- \left(\tbinom{n+1}{2}C_{n+1}
						- \tbinom{n}{2} C_{n}\right)
						- 3C_{n}
					\biggr) \Big/ \tbinom{n+2}{2}		\notag\\[-.5ex] 
		&\wwrel	=	  C_{n+2} - \tfrac{2n}{n+2}C_{n+1}
					+ \tfrac{\frac{1}{2}n(n-1)-3}{\frac{1}{2}(n+2)(n-1)} C_{n} \notag\\ 
		&\wwrel=	  C_{n+2}-\tfrac{2n}{n+2}C_{n+1}
					+ \tfrac{\frac{1}{2}(n-3)(n+2)}{\frac{1}{2}(n+2)(n-1)} C_{n} \notag\\ 
		&\wwrel=	C_{n+2} - \tfrac{2n}{n+2}C_{n+1} + \tfrac{n-3}{n+1}C_{n}
	\;.
	\label{eq:expansion-En-minus-three-Cn}
\end{align}
Equating \eqref{eq:difference-Fn} and \eqref{eq:expansion-En-minus-three-Cn}
yields
\[
			F_{n+2}-F_{n+1}
	\wrel=	\left(E_{n}-3C_{n}\right) \bigm/ \tbinom{n+2}{2}
	\wrel=	\underbrace{
				\bigl(d(n+1)-d(n)\bigr) \bigm/ \tbinom{n+2}{2}
			}_{f(n)\ce}
	\;.		\quad (n \ge M+2)
\]
This last equation is now amenable to simple iteration:
\begin{align*}
	F_n & \wrel= 	\underbrace{
						\sum_{i=M+4}^{n} \!\! f(i-2) \wbin+ F_{M+3}
					}_{g(n)}\;.
		\qquad (n \ge M+4)
\end{align*}
Plugging in the definition of $F_{n}=C_{n}-\frac{n-4}{n}\cdot C_{n-1}$
yields
\begin{align}
\label{eq:Cn-in-terms-of-gn}
	C_{n} \wrel= \tfrac{n-4}{n}\cdot C_{n-1} \wbin+ g(n)\;.
	\qquad (n \ge M+4)
\end{align}
Multiplying \eqref{eq:Cn-in-terms-of-gn} by $\tbinom n4$ and using
$\tbinom{n}{4}\cdot\frac{n-4}{n}=\tbinom{n-1}{4}$ 
gives a telescoping
recurrence:
\begin{align} 
	G_{n} \wrel\ce \tbinom{n}{4}C_{n}	
		&\wwrel=	G_{n-1} + \tbinom{n}{4}g(n) \notag\\
 	 	&\wwrel=	\mkern-10mu\sum_{i=M+4}^{n}\mkern-5mu
	 					\tbinom{i}{4}g(i) \bin+ G_{M+3} \notag\\
	 	&\wwrel=	\sum_{i=1}^{n} \tbinom{i}{4}g(i) \wbin+ G_{M+3}
	 				\wbin- \sum_{i=1}^{M+3} \tbinom i4 \biggl(
	 						\:
	 						\smash{\overbrace{
	 							\rule{0pt}{3.1ex}\smash{\sum_{j=M+4}^{i}} f(j-2)
	 						}^{=\.0}} 
	 						\wbin+ F_{M+3}
	 				  \biggr)  \notag\\
	 	&\wwrel=	\sum_{i=1}^{n} \tbinom{i}{4}g(i)
	 				\wbin+ G_{M+3} \wbin- \tbinom{M+4}5 F_{M+3} \,,
	 \label{eq:explicit-Gn}
\end{align}
where the last equation uses~\eqref{eq:summation-in-upper-index}.
Applying definitions, we find
\begin{align}
		\sum_{i=1}^{n} \tbinom i4 g(i)
	&\wwrel=	\sum_{i=1}^{n}\tbinom i4\biggl(
					  F_{M+3}
					\wbin+ \sum_{j=M+2}^{i-2}\frac{d(j+1)-d(j)}{\tbinom{j+2}{2}}
				\biggr)
			\notag \\ 
	&\wwrel=	\tbinom{n+1}5 F_{M+3}
			\wwbin+
				\!\!\!\sum_{i=M+4}^{n}\!\!\!\tbinom i4\!\!\!
				\sum_{j=M+2}^{i-2}\!\!\!\bigg(
					  \pc_{j+2}
					\bin- \tfrac{2j}{j+2} \, \pc_{j+1}
					\bin+ \frac{\tbinom{j}{2}}{\tbinom{j+2}{2}} \, \pc_j \!
				\biggr)
	\;. \label{eq:sum-of-gi}
\end{align}
Using \eqref{eq:sum-of-gi} in \eqref{eq:explicit-Gn}, we finally arrive at the
explicit formula for $C_n$ valid for $n\ge M+3$:
\begin{align*}
	C_{n} 
 	& \wrel=	\frac{\tbinom{n+1}{5}}{\tbinom{n}{4}} F_{M+3}
 		\wbin+ 	\frac1{\tbinom n4}
 				\sum_{i=M+4}^{n}\!\!\!\tbinom i4\!\!\! \sum_{j=M+2}^{i-2}
 				\bigg(
					  	    \pc_{j+2}
					\bin- \tfrac{2j}{j+2} \, \pc_{j+1}
					\bin+ \frac{\tbinom{j}{2}}{\tbinom{j+2}{2}} \, \pc_j \!
				\biggr) \\
	& \wrel\pe \wbin+ \frac{G_{M+3} \bin- \tbinom{M+4}5 F_{M+3}}{\tbinom n4} 
	\;.\\
\intertext{Expanding $F$ and $G$ according to their definition gives} 
	& \wrel=	\frac1{\tbinom n4}
 				\sum_{i=M+4}^{n}\!\!\!\tbinom i4\!\!\! \sum_{j=M+2}^{i-2}
 				\bigg(
					  	    \pc_{j+2}
					\bin- \tfrac{2j}{j+2} \, \pc_{j+1}
					\bin+ \frac{\tbinom{j}{2}}{\tbinom{j+2}{2}} \, \pc_j \!
				\biggr) \\
		& \wwrel\pe \wbin+ \biggl(
			  \! \tfrac{n+1}5
			+ \frac{\tbinom{M+3}4-\tbinom{M+4}5}{\tbinom n4}
		\biggr) C_{M+3}
		\wwbin- \tfrac{M-1}{M+3} \biggl(
			  \! \tfrac{n+1}5
			- \frac{\tbinom{M+4}5}{\tbinom n4}
		\biggr) C_{M+2}\,,
\end{align*}
which concludes the proof.
\end{proof}

\begin{proof}[of Proposition~\ref{pro:Cn-explicit-linear-pc}]
Of course, we start with the closed form \eqref{eq:Cn-explicit} from
Theorem~\ref{thm:recurrence-solution}, which consists of the double sum and two
terms involving “base cases” $C_{M+2}$ and $C_{M+3}$.
\begin{align*}
		C_n &\wwrel= \frac1{\tbinom n4} 
				\sum_{i=M+4}^{n}\!\!\!\tbinom i4\!\!\! \sum_{j=M+2}^{i-2}
					\overbrace{
					\biggl(
						      \pc_{j+2} 
						\bin- \tfrac{2j}{j+2} \pc_{j+1}
						\bin+ \tfrac{\tbinom j2}{\tbinom{j+2}2} \pc_j
					\biggr)}^{\eta_j} 
	\\*
			& \wwrel\pe \wbin+
				\underbrace{
					\biggl(
						  \! \tfrac{n+1}5
						+ \frac{\tbinom{M+3}4-\tbinom{M+4}5}{\tbinom n4}
					\biggr) C_{M+3}
				  \wbin- \tfrac{M-1}{M+3} \biggl(
						  \! \tfrac{n+1}5
						- \frac{\tbinom{M+4}5}{\tbinom n4}
					\biggr) C_{M+2}
				}_{\rho}
\end{align*}
We first focus on the sums. 
Assuming the even more general form
$$\pc_n \wwrel= a\.n + b + \tfrac{c_1}{n-1} + \tfrac{c_2}{n} +
				\tfrac{c_3}{n(n-1)}\,,$$ 
partial fraction decomposition of the innermost term yields  
\begin{align*} 
		\eta_j \wrel\ce \bigl(
			\pc_{j+2}
			\bin- \tfrac{2j}{j+2} \, \pc_{j+1}
			\bin+ \tbinom{j}{2} \big/ \tbinom{j+2}{2}  \, \pc_j
		\bigr)
	&\wwrel= \tfrac{8a-2b}{j+2} - \tfrac{2a-2b}{j+1}\;.
\end{align*}
Note that contributions from $\tfrac{c_1}{n-1}$, $\tfrac{c_2}{n}$ and
$\tfrac{c_3}{n(n-1)}$ cancel out. 
This allows to write the inner sum in terms of Harmonic Numbers:
\begin{align}
	\sum_{j=M+2}^{i-2} \eta_j
	&\wwrel= 			(8a-2b)(\harm{i} - \harm{M+3})
				\bin- 	(2a-2b)(\harm{i-1} - \harm{M+2}) 
				\notag\\[-1ex]
	&\wwrel=			6a \. (\harm{i}-\harm{M+3})
				\bin+	(2a-2b)(\tfrac1i-\tfrac1{M+3})
	\;.
\end{align}
(The second equation uses the basic fact $\harm{k-1} = \harm k - \tfrac1k$.) \\
Using~\eqref{eq:sum-of-harmonics}, \eqref{eq:summation-in-upper-index}
and the absorption property of binomials 
$\tbinom nk = \tbinom{n-1}{k-1} \tfrac nk$,
one obtains
\begin{align}
		\frac1{\tbinom n4}
		\sum_{i=M+4}^{n}\!\!\!\tbinom i4\!\!\! \sum_{j=M+2}^{i-2}	\eta_j 
	&\wwrel=	\frac{6a}{\tbinom n4}	\Bigl(
						  \tbinom{n+1}5 (\harm{n+1} - \tfrac15) 
						- \tbinom{M+4}5 (\harm{M+4} - \tfrac15) 
					\Bigr) \notag\\*
	&\wwrel\pe \wbin+ \frac{2a-2b}{\tbinom n4} \sum_{i=M+3}^{n-1} \Bigl(
						\tfrac14\tbinom{i-3}3 - \tfrac1{M+3}\tbinom i4
					\Bigr) \notag\\*	
	&\wwrel\pe \wbin- \frac{6a}{\tbinom n4}\harm{M+3}\Bigl( 
							\tbinom{n+1}5 - \tbinom{M+4}5 
						\Bigr)
		\notag\\
	&\wwrel=	\tfrac65 a (n+1) (\harm{n+1} - \tfrac15) 
				\bin- 6a\frac{\tbinom{M+4}5}{\tbinom n4} (\harm{M+4} - \tfrac15) 
			\notag\\*
	&\wwrel\pe \wbin+ \frac{2a-2b}{\tbinom n4} \Bigl(
								\tfrac14 \bigl( \tbinom n4 - \tbinom{M+3}4 \bigr)
						\bin- 	\tfrac1{M+3} \bigl( \tbinom{n+1}5 - \tbinom{M+4}5 \bigr)
					\Bigr) \notag\\*	
	&\wwrel\pe \wbin- 6a\,\harm{M+3}\Bigl( 
						\tfrac{n+1}5 - \tbinom{M+4}5 \big/ \tbinom	n4 
					\Bigr)
		\notag\\
	&\wwrel=	\tfrac65 a (n+1) \harm{n+1}
				\wbin-	\tfrac{n+1}5 \bigl(
							6a\harm{M+3} + \tfrac{2a-2b}{M+3} + \tfrac65 a
						\bigr)
				\wbin+	\tfrac{a-b}{2} 
				\notag\\*
	&\wwrel\pe	
				\!\wbin+	\frac{\tbinom{M+4}5}{\tbinom n4} \bigl(
							  \tfrac65 a - 6a (\harm{M+4} -\harm{M+3})
							- \tfrac{a-b}2 \tfrac{5}{M+4} 
							+ \tfrac{2a-2b}{M+3}
						\bigr)
		\notag\\
	&\wwrel=	\tfrac65 a (n+1) \harm{n+1}
				\wbin-	\tfrac{n+1}5 \bigl(
							6a\harm{M+3} + \tfrac{2a-2b}{M+3} + \tfrac65 a
						\bigr)
				\notag\\*
	&\wwrel\pe	\wbin+	\tfrac{a-b}{2}
				\wbin+	\frac{\tbinom{M+4}5}{\tbinom n4} \bigl(
							  \tfrac65 a + \tfrac{2a-2b}{M+3}
							+ \tfrac{5b-17a}{2(M+4)} 
						\bigr)\;.
\label{eq:sum-of-eta}
\end{align}

It remains to consider the second and third summands of \eqref{eq:Cn-explicit}
\begin{align*}
	\rho &\wwrel\ce
		\biggl(
			  \! \tfrac{n+1}5
			+ \frac{\tbinom{M+3}4-\tbinom{M+4}5}{\tbinom n4}
		\biggr) C_{M+3} 
	\wbin- \tfrac{M-1}{M+3} \biggl(
			  \! \tfrac{n+1}5
			- \frac{\tbinom{M+4}5}{\tbinom n4}
		\biggr) C_{M+2}
	\;.
\end{align*}
We start by applying definition~\eqref{eq:recurrence} twice and using 
$C_n=\insertsortcost_n$ for $n \le M$ to expand $C_{M+3}$
\begin{align}
	C_{M+3} &\wwrel=	T_{M+3} \wbin+ \frac3{\tbinom{M+3}2} 
						\sum_{k=0}^{M+1} (M+2-k) C_k 		\nonumber
	\\		&\wwrel=	T_{M+3} \wbin+
						\frac3{\tbinom{M+3}2} \biggl( T_{M+1} + \frac3{\tbinom{M+1}2} \sum_{k=0}^{M-1} (M-k) \insertsortcost_k
						\biggr) 
							\wbin+ \frac3{\tbinom{M+3}2} \sum_{k=0}^{M} (M+2-k) \insertsortcost_k
					\nonumber
	\\		&\wwrel=	T_{M+3} \wbin+ \frac3{\tbinom{M+3}2} T_{M+1} 
						\wbin+ \frac3{\tbinom{M+3}2} \sum_{k=0}^{M}\Bigl( 
							(M+2-k) + \frac{3(M-k)}{\tbinom{M+1}2}
						\Bigr) \insertsortcost_k
\label{eq:C-M+3} 
\end{align}
and $C_{M+2}$
\begin{align}
	C_{M+2}	&\wwrel=	T_{M+2} \wbin+ \frac3{\tbinom{M+2}2} \sum_{k=0}^M 
							(M+1-k) \insertsortcost_k\;.
\label{eq:C-M+2}
\end{align}
Equations~\eqref{eq:C-M+3} and~\eqref{eq:C-M+2} are now inserted into the
second and third summands of \eqref{eq:Cn-explicit}.
With $T_n = a\.n+b$ for $n \ge M+1$, this yields
\begin{align}
	\rho 	&\wwrel=	\tfrac{n+1}5\Bigl(
							  (M+3)a+b
							+ \frac{3}{\tbinom{M+3}2}\,\bigl((M+1)a+b\bigr)
							- \tfrac{M-1}{M+3}\,\bigl((M+2)a+b\bigr)
						\Bigr) 		\nonumber
	\\*		&\wwrel\ppe
						\wbin+ \tfrac{n+1}5 \sum_{k=0}^M
							\Bigl(
								  \frac{3(M+2-k)}{\tbinom{M+3}2} 
								+ \frac{9(M-k)}{\tbinom{M+1}2 \tbinom{M+3}2}
								- \tfrac{M-1}{M+3}\frac{3(M+1-k)}{\tbinom{M+2}2}
							\Bigr) \insertsortcost_k 		\nonumber
	\\*		&\wwrel\ppe
						\wbin+ \frac{\tbinom{M+4}5}{\tbinom n4}\Bigl(
							  (\tfrac5{M+4} - 1) C_{M+3}
							+ \tfrac{M-1}{M+3} C_{M+2}
						\Bigr) 		\nonumber
	\\		&\wwrel=	\tfrac{n+1}5\Bigl(
							  5a + \tfrac{6(b-a)}{M+2} + \tfrac{2(4a-b)}{M+3}
						\Bigr)
						\wbin+ \tfrac{n+1}5 \sum_{k=0}^M
							\frac{3M-2k}{\tbinom{M+2}3} \insertsortcost_k
	\label{eq:rho}
	\\*		&\wwrel\ppe
						\wbin+ \frac{\tbinom{M+4}5}{\tbinom n4}\bigl(
							  \tfrac{M-1}{M+3} C_{M+2}
							- \tfrac{M-1}{M+4} C_{M+3}
						\bigr)		\nonumber
\end{align}
Adding~\eqref{eq:sum-of-eta} and~\eqref{eq:rho} finally yields the claimed
representation
\begin{align*}
	C_n	&\wwrel=	\tfrac65 a (n+1) \harm{n+1}
					\wbin+ \tfrac{n+1}5\bigl(
						\tfrac{19}5 a + \tfrac{6(b-a)}{M+2}- 6 a \harm{M+2}
					\bigr)
					\wbin+ \tfrac{a-b}2
					\wbin+ \tfrac{n+1}5 \sum_{k=0}^M
						\frac{3M-2k}{\tbinom{M+2}3} \insertsortcost_k
	\\*	&\wwrel\ppe
					\wbin+ \frac{\tbinom{M+4}5}{\tbinom n4} \bigl(
						  \tfrac65 a + \tfrac{2(a-b)}{M+3} + \tfrac{5b-17a}{2(M+4)}
						- \tfrac{M-1}{M+4} C_{M+3} + \tfrac{M-1}{M+3} C_{M+2}
					\bigr)
	\;.
\end{align*}

\medskip
For the asymptotic representation~\eqref{eq:Cn-explicit-linear-pc} of $C_n$, the
penultimate summand is $0$ because of the assumption $\insertsortcost_n=0$.
The last summand is in $\Theta(n^{-4})$ and therefore vanishes in the
$\bo(\tfrac1n)$ term (we assume $M=\Theta(1)$ as $n\to\infty$ to be constant).
Now, replacing $\harm{n}$ by its well-known asymptotic estimate
\begin{align*}
	\harm n	 &\wwrel=	\ln(n) + \gamma + \tfrac12 n + \bo\bigl(\tfrac1{n^2}\bigr)
		\qquad \text{ \cite[eq.\,(6.66)]{ConcreteMathematics} }
\end{align*}
and expanding terms in~\eqref{eq:Cn-exact-solution-linear-pc} directly
yields~\eqref{eq:Cn-explicit-linear-pc}.

Finally, the case $T_2=0 \ne 2a+b$ affects the derivation only at a
single point: As $M \ge 1$, the only occurring toll function that can ever equal
$T_2$ is $T_{M+1}$, which occurs only in $C_{M+3}$, see~\eqref{eq:C-M+3}.
In $\rho$, we multiply $C_{M+3}$ by 
$\bigl(\tfrac{n+1}5+\bigl(\tbinom{M+3}4-\tbinom{M+4}5\bigr)\big/ \tbinom
n4\bigr) \rel= \tfrac15(n+1) + \bo(n^{-4})$.
Consequently, we have to subtract
$$
		\delta_{M1}\Bigl(
			\tfrac15(n+1)\cdot \frac3{\tbinom{M+3}2} (2a+b) +
			\bo(n^{-4})
		\Bigr)
	\wwrel=
		\delta_{M1} \, \tfrac{2a+b}{10}(n+1)  + \bo(n^{-4})
	\;.
$$
(The second equation follows by setting $M=1$.)\\*
This concludes the proof of Proposition~\ref{pro:Cn-explicit-linear-pc}.
\end{proof}

\section{Insertionsort}
\label{app:insertionsort}

\begin{algorithm}
	\begin{codebox}
\Procname{$\proc{InsertionSort}(\arrayA,\id{left},\id{right})$}
\li	\For $i=\id{left}+1\,,\dots,\,\id{right}$
\li	\Do
		$j\gets i-1$; \quad
		$v\gets \arrayA[i]$
\li		\While $j \ge \id{left} \wbin\wedge v < \arrayA[j]$
\li		\Do
			$\arrayA[j+1] \gets \arrayA[j]$; \quad	\label{lin:insertionsort-knuth-write-1}
			$j\gets j-1$
		\EndWhile
\li		$\arrayA[j+1] \gets v$ \label{lin:insertionsort-knuth-write-2}
	\EndFor
\end{codebox}
	\caption{\protect\rule[-.75ex]{0pt}{2.75ex}
		Insertionsort as given and analyzed by
		\protect\citeN{Knuth1998}.
	}
	\label{alg:insertionsort-knuth}
\end{algorithm}

\noindent
In this section, we consider in some detail the \proc{InsertionSort} procedure
used for sorting small subproblems. 
Insertionsort is a primitive sorting algorithm with quadratic running time 
in both worst and average case.
On very small arrays however, it is extremely efficient,
which makes it a good choice for our purpose.

Our implementation of \proc{InsertionSort} is given as
Algorithm~\ref{alg:insertionsort-knuth} and its control flow graph is shown in
Figure~\ref{fig:control-flow-graph-insertionsort}.
Algorithm~\ref{alg:insertionsort-knuth} is based on
the implementation by \citeN[Program S]{Knuth1998}. 
Knuth assumes $n\ge 2$ in his code and analysis, but our
Quicksort implementation also calls \proc{InsertionSort} on subproblems of
size~$0$ or~$1$. 
Therefore, Algorithm~\ref{alg:insertionsort-knuth} starts with an index
comparison “$i\le \id{right}$” to handle these cases.

\begin{figure}
 	\mbox{}\hfill%
 	\begin{tikzpicture}[
	node distance=5mm,
	shorten >=.75pt,
	every node/.style={font={\scriptsize},inner sep=2.5pt},
	basic block/.style={
		fill=black!5,
		draw,
		shape=rectangle split,
		rectangle split	parts=2,
		rounded corners=2pt,
	}
]

\newcommand{\block}[4][2cm]{%
	\setcounter{basicblocknumberIS}{#2}%
	\addtocounter{basicblocknumberIS}{-1}%
	\refstepcounter{basicblocknumberIS}%
	\label{bb:insertionsort-#2}%
	\makebox[#1]{\,\textbf{\ref*{bb:insertionsort-#2}}%
	\:\hfill%
	\smash{$#3$}{\tiny$\vphantom{p}$}
	}%
	\nodepart{two}\parbox{#1}{%
		\centering \pbox{#1}{%
			\setcounter{basicblocknumberIS}{#2}%
			\addtocounter{basicblocknumberIS}{-1}%
			\refstepcounter{basicblocknumberIS}%
			#4%
		}
	}%
}
\node[basic block] (b2a) at (0,0) 
	{\block{1}{I}{%
		$i \gets \id{left} + 1$;
		\label{bb:insertionsort-init-i}
	}} ;
\node[basic block,below=of b2a] (b2b)
	{\block{2}{G}{%
		$i \le \id{right}$
		\label{bb:insertionsort-outer-loop}
	}} ;
\node[basic block,below=7.5mm of b2b] (b2c)
	{\block{3}{G-I}{%
		$j \gets i-1$; \\
		$v \gets \arrayA[i]$;
		\label{bb:insertionsort-init-v}
	}} ;
\node[basic block,right=of b2c] (b2d)
	{\block{4}{E+D}{%
		$v < \arrayA[j]$
		\label{bb:insertionsort-comp}
	}} ;
\node[basic block,right=of b2d] (b2e)
	{\block{5}{E}{%
		$\arrayA[j+1] \gets \arrayA[j]$; \\
		$j \gets j-1$; \\
		$j < \id{left}$
		\label{bb:insertionsort-write-1}
	}} ;
\node[basic block,below=of b2e] (b2f)
	{\block{6}{G-I-D}{%
		Goto;
		\label{bb:insertionsort-goto}
	}} ;
\node[basic block,below=20mm of b2d] (b2g)
	{\block{7}{G-I}{%
		$\arrayA[j+1] = v$; \\
		$i \gets i+1$;
		\label{bb:insertionsort-write-2}
	}} ;
\node[basic block,left=of b2g] (b2h)
	{\block{8}{I}{%
		Return
		\label{bb:insertionsort-return}
	}} ;
\coordinate[above left=10mm and 5mm
 of b2h] (alb2h) {};
\coordinate[above=5mm of b2e] (ab2e) {};
\coordinate (ab2g) at (b2f -| b2g) ;
\coordinate[right=5mm of b2e] (rb2e) {};
\begin{scope}[->,auto,every node/.style={font={\tiny}}]
\draw (0,1) -- (b2a) ;
\draw (b2a) -- (b2b) ;
\draw (b2b) -- node {yes} (b2c) ;
\draw (b2b) -| node[pos=.6] {no} (alb2h) -| (b2h) ;
\draw (b2c) -- (b2d) ;
\draw (b2d) -- node {yes} (b2e) ; 
\draw (b2d) -- node[pos=.2] {no} (b2g) ;
\draw (b2e) -- node {yes} (b2f) ;
\draw[thick] (b2e) -- node {no} (ab2e) -| (b2d) ;
\draw[-,shorten >=0pt] (b2f) -- (ab2g) ;
\draw[thick] (b2g) -| (rb2e) |- (b2b) ;
\draw (b2h) -- ++(0,-1) ;
\end{scope}
\end{tikzpicture}%
 	\hfill\mbox{}
 	\caption{%
		Control flow graph of our Java implementation of \proc{InsertionSort}
		(Algorithm~\ref{alg:insertionsort-knuth}). 
		The block names “2a”\,--\,“2h” indicate that these blocks
		replace block~\ref{bb:call-insertionsort} in Figure~\ref{fig:control-flow-graph};
		this figure provides a “close-up view” of block~\ref{bb:call-insertionsort}. 
		Blocks~\ref{bb:insertionsort-goto} and~\ref{bb:insertionsort-return} contain
		control flow statements needed in Java Bytecode, which would normally be
		represented by arrows only.
		They are shown to remind us of their cost contributions.
		\label{fig:control-flow-graph-insertionsort}
 	}%
\end{figure}

Figure~\ref{fig:control-flow-graph-insertionsort} lists the execution
frequencies of all basic blocks. 
The names are chosen to match the corresponding notation of
\citeN{Sedgewick1977} and denote the \emph{total} execution frequencies across
all invocations of \proc{InsertionSort} on small subproblems caused by one
initial call to \proc{QuicksortYaroslavskiy}.
Let us define $\tilde I$, $\tilde G$, $\tilde D$ and $\tilde E$ to denote the
frequencies when we use \proc{InsertionSort} \emph{in isolation} for sorting a
random permutation.
These frequencies are analyzed by \citeN[p.\,82]{Knuth1998} for $n\ge2$. 
As mentioned above, our implementation has to work for $n\ge0$, so our
analysis must take the special cases $n\in\{0,1\}$ into account.
We find
\begin{align*}
	\tilde I(n)		&\wwrel= 	1	\,, &
	\tilde	G(n)	&\wwrel= 	n + \delta_{n0}		\,, &
	\tilde	D(n)	&\wwrel=	n - \harm n		\qquad\text{and} &
	\tilde	E(n)	&\wwrel=	\tbinom n2 \big/ 2		\;. 
\end{align*}
We can compute $I$, $G$, $D$ and $E$ by inserting $\tilde I$, $\tilde G$,
$\tilde D$ and $\tilde E$ for $\insertsortcost$
in the solution provided by Proposition~\ref{pro:Cn-explicit-linear-pc} on
page~\pageref{pro:Cn-explicit-linear-pc}.
\begin{equation}
	\begin{aligned}	I(n)	&\wwrel= 	\tfrac15 (n+1) \frac1{\tbinom{M+2}3} \sum_{k=0}^M
							(3M-2k) \, \tilde I(k)
			 \wwrel=	\tfrac{12}{5(M+2)} (n+1) \,, \\
	G(n)	&\wwrel=	\bigl(1 + \tfrac{18}{5(M+1)} - \tfrac{6}{M+2} \bigr) (n+1) \,,\\
	D(n)	&\wwrel=	\bigl(1 + \tfrac{3}{5(M+2)} 
								- \tfrac{12}{5(M+2)} \harm{M+1} 
						\bigr) (n+1)\,, \\
	E(n)	&\wwrel=	\bigl( \tfrac3{20}M + \tfrac6{5(M+2)} 
							- \tfrac{11}{20} 
						\bigr)	(n+1)\;.
	\end{aligned}
\end{equation}
Using these frequencies, we can easily express the expected number of
key comparisons, write accesses and executed Bytecode instructions:
The only place where key comparisons occur, is in
block~\ref{bb:insertionsort-comp}, so 
$\totalcomparisonmarker{\mathit{IS}} = D+E$.
Write accesses to array \arrayA happen in blocks~\ref{bb:insertionsort-write-1}
and~\ref{bb:insertionsort-write-2}, giving
$\totalwritesmarker{\mathit{IS}} = E + (G-I)$.
The number of Bytecodes is given in the next section.

\section{Low-Level Implementations and Instruction Counts}
\label{app:low-level}

\begin{figure}
\begin{lstlisting}
void quicksortYaroslavskiy(int[] A, int left, int right) {
	if (right - left < M) {
		insertionsort(A, left, right);
	} else {
		final int p, q;
		if (A[left] > A[right]) {
			p = A[right]; q = A[left];
		} else {
			p = A[left]; q = A[right];
		}
		int l = left + 1, g = right - 1, k = l;
		while( k <= g ) {
			final int ak = A[k];
			if (ak < p) {
				A[k] = A[l]; A[l] = ak; ++l;
			} else if (ak >= q) {
				while (A[g] > q && k < g)
					--g;
				if (A[g] < p) {
					A[k] = A[l]; A[l] = A[g]; ++l;
				} else {
					A[k] = A[g];
				}
				A[g] = ak; --g;
			}
			++k;
		}
		--l; ++g;
		A[left] = A[l]; A[l] = p; A[right] = A[g]; A[g] = q;
		quicksortYaroslavskiy(A, left, l - 1);
		quicksortYaroslavskiy(A, g + 1, right);
		quicksortYaroslavskiy(A, l + 1, g - 1);
	}
}

void insertionsort(int[] A, int left, int right) {
	for (int i = left + 1; i <= right; ++i) {
		int j = i-1; final int v = A[i];
		while (v < A[j]) {
			A[j+1] = A[j];	--j;
			if (j < left) break;
		}
		A[j+1] = v;
	}
}
\end{lstlisting}
\caption{%
	Java implementation of Yaroslavskiy's algorithm.%
} 
\label{fig:java-implementation}
\end{figure}

Figure~\ref{fig:java-implementation} shows the Java implementation of
Yaroslavskiy's algorithm whose Bytecode counts are studied in this paper.
The partitioning loop is taken from the
original sources of the Java 7 Runtime Environment library
(see for example
\url{http://www.docjar.com/html/api/java/util/DualPivotQuicksort.java.html}).

The Java code has been compiled using Oracle's Java Compiler 
(javac version 1.7.0\_17). The resulting Java Bytecode was decomposed into
the basic blocks of Figures~\ref{fig:control-flow-graph}
and~\ref{fig:control-flow-graph-insertionsort}.
Then, for each block the number of Bytecode instructions was counted, the
result is given in Table~\ref{tab:block-bytecodes-yaroslavskiy}.
We have automated this process as part of our tool MaLiJAn (Maximum
Likelihood Java Analyzer), which provides a means of automating empirical
studies of algorithms based on their control flow
graphs~\cite{Wild2013Alenex,Laube2010}.

\begin{table}
\tbl{%
	Bytecode counts for the basic blocks of Figures~\ref{fig:control-flow-graph}
	and~\ref{fig:control-flow-graph-insertionsort}.
	\label{tab:block-bytecodes-yaroslavskiy}
}{%
	\def\b#1{\ref{bb:#1}}
	\def\i#1{\ref{bb:insertionsort-#1}}
	\begin{tabular}{r*{11}{r}}
		\toprule
		\textbf{Block}         & \b1 & \b3 & \b4 & \b5 & \b6 & \b7 & \b8 & \b9 & \b{10} & \b{11} \\[1pt]
		\textbf{\#\,Bytecodes} & 5   & 7   &   8 &   9 &  10 &   3 &   7 &  12 &     3  &    5   \\
		\midrule
		\textbf{Block}         & \b{12} & \b{13} & \b{14} & \b{15} & \b{16} & \b{17} & \b{18} & \b{19} & \b{20} \\[1pt]
		\textbf{\#\,Bytecodes} &     3  &     2  &     5  &     6  &    14  &     5  &     2  &    42  &     1  \\
		\midrule
		\addlinespace[1ex]
		\textbf{Block}         & \i1 & \i2 & \i3 & \i4 & \i5 & \i6 & \i7 & \i8  \\[1pt]
		\textbf{\#\,Bytecodes} &   8 &   3 &   8 &   5 &  12 &   1 &   8 &   2  \\
		\bottomrule
	\end{tabular}
}
\end{table}

By multiplying the Bytecodes per block with the block's frequency, we get
the overall number of executed Bytecodes. 
For Yaroslavskiy's Quicksort, we get
{
\let\C\totalcomparisonmarker
\let\S\totalswapmarker
\begin{align}
		\bytecodes{\mathit{QS}}
	&\wwrel=
		5R + 7A + 8B + 9(A-B) + 10A + 3(A+\C1) + 7\C1 + 12\S1 + 3(\C1-\S1)
	\nonumber\\*&\wwrel\ppe\phantom{5R}
		   + 5\C3 + 3(\C3-\C4+F) + 2(\C3-\C4) + 5\C4 + 6(\C4-\S3) + 14\S3 
	\nonumber\\*&\wwrel\ppe\phantom{5R}
		   + 5\C4 + 2\C1 + 42A + 1R
\nonumber
\\*	&\wwrel=	 71 A
				- 1 B 
				+ 6 R
				+ 15 \totalcomparisonmarker{1}
				+ 10 \totalcomparisonmarker{3}
				+ 11 \totalcomparisonmarker{4}
				+  9 \totalswapmarker{1}
				+  8 \totalswapmarker{3}
				+  3 F
	\;.
\end{align}
}
Additionally, we have for \proc{InsertionSort}
\begin{align}
		\bytecodes{\mathit{IS}} 
	&\wwrel= 
		8 I + 3 G + 8 (G-I) + 5(E+D) + 12 E + 1(G-I-D) + 8 (G-I) + 2I
\nonumber	\\
	&\wwrel=	4 D + 17 E + 20 G	- 7 I \;.
\end{align}

\smallskip
Note that \citeN{wild2012thesis} investigated a different (more naïve) Java
implementation of Yaroslavskiy's algorithm and hence reports different Bytecode
counts.

\section{Details on the Distributional Analysis}
\label{app:contraction-proofs}

In this appendix, we give details on the application of the contraction method
to the distributions of costs in Yaroslavskiy's algorithm and we prove the main
technical lemmas from Section~\ref{sec:asymptotics-lemmas}.

\subsection{Proof of Theorem~\ref{thm:limiting-dist-comparisons}}
We consider the first partitioning step of Yaroslavskiy's
algorithm and denote by $\comparisonmarker[n]{}$ the number of key comparisons
of the first partitioning phase.
By Property~\ref{pro:randomness-preservation}, subproblems generated in
the first partitioning phase are, conditional on their sizes,
again uniformly random permutations and independent of each other.
Hence, we obtain the distributional recurrence
\begin{align}
\label{eq:distributional-recurrence-comparisons}
		\totalcomparisonmarker[n]{}
	\wwrel\eqdist
		C'_{I_1} + C''_{I_2} + C'''_{I_3} \wbin+
		\comparisonmarker[n]{} \qquad (n\ge 3),
\end{align}
where $(I_1,I_2,I_3,\comparisonmarker[n]{})$, $(C'_j)_{j\ge 0}$,
$(C''_j)_{j\ge 0}$, $(C'''_j)_{j\ge 0}$ are independent and
$C'_j$, $C''_j$, $C'''_j$ are identically distributed as
$C_j$ for $j\ge 0$.
Note that equation~\eqref{eq:distributional-recurrence-comparisons} is
simply obtained from the generic distributional
recurrence~\eqref{eq:distributional-recurrence-generic} upon inserting the toll
function $\comparisonmarker[n]{}$.

As in Section~\ref{sec:distributional-recurrence}, we now define the normalized
number of comparisons $\tCsn$ as 
\begin{align}
\label{eq:def-normalized-comparisons}
	C^*_0 \wrel\ce 0 
	\qquad\qquad\text{and}\qquad\qquad
	C^*_n &\wwrel\ce
		\frac{C_n-\E[C_n]}{n}\,,
		\qquad (n\ge 1)
\end{align}
Note that $\E[\tCsn] = 0$ and the $\V(\tCsn) < \infty$, i.\,e.,
$(\tCsn)_{n\ge 0}$ is a
sequence of centered, square integrable random variables.
Using  \eqref{eq:distributional-recurrence-comparisons} we
find, cf.~\cite[eq.~(27), (28)]{hwne02},
that $(\tCsn)_{n\ge 0}$  satisfies~\eqref{eq:general-contraction-recurrence}
with
\begin{align*}
	\ui{A}{n}_r &\wwrel= \frac{I_r}{n}, &
	\ui{b}{n}   &\wwrel=
		\frac{1}{n}\biggl(\comparisonmarker[n]{} - \E[C_n]
			\wbin+ \sum_{r=1}^3 \E[C_{I_r}\given I_r]
		\biggr).
\end{align*}

We apply the framework of the contraction method outlined in
Section~\ref{sss:contraction-method}.
To check condition~\ref{cond:convergence-of-recurrence-coeffs} note that from
\eqref{eq:asymptotic-I}, we have $\ui{A}{n}_r \to D_r$ in $L_2$ for $r=1,2,3$
as $n\to\infty$.

To identify the $L_2$-limit of $\ui bn$ we look  at the summands
$\comparisonmarker[n]{} \mathbin/ \! n$ and 
$\bigl(- \E[C_n]\wbin+ \sum_{r=1}^3 \E[C_{I_r}\given I_r]\bigr) \mathbin/ n$
separately. 
By Theorem~\ref{thm:expected-comparisons}, the expectation has the form
$\E[C_n] = 19/10 \: n\ln n + cn +\bo(\log n)$
for some constant $c\in\Rset$, which implies 
$\E[C_{I_r}\given I_r] \rel= 19/10 \: I_r\ln I_r + c I_r + o(n)$ 
since we have $o(I_r)=o(n)$. 
Plugging in these expansions, using $I_1+I_2+I_3=n-2$ and rearranging terms
gives the asymptotic identity, as $n\to\infty$,
\begin{align*}
	\frac{1}{n}\biggl(
			- \E[C_n] + \sum_{r=1}^3 \E[C_{I_r}\given I_r]
		\biggr)
  	&\wwrel= 
			\sum_{r=1}^3 \tfrac{19}{10} \tfrac{I_r}n \ln \tfrac{I_r}n	
				\wwbin- \tfrac2n\, \tfrac{19}{10} \tfrac{n\ln n}{n} 
				\wbin- \tfrac{2c}{n}
				\wbin+ o(1)
\\	&\wwrel=
			\sum_{r=1}^3 \tfrac{19}{10} \tfrac{I_r}n \ln \tfrac{I_r}n	
				\wbin+ o(1)
				\;.
\end{align*}
Hence, Lemma~\ref{lem:asymptotics-xlnx} implies
\begin{align}
\label{eq:convergence-toll-comparisons}
		\frac{1}{n}\biggl(
			- \E[C_n] + \sum_{r=1}^3 \E[C_{I_r}\,|\, I_r]
		\biggr)
	&\wwrel{\stackrel{L_2}{\longrightarrow}}
		\tfrac{19}{10}\sum_{j=1}^3  D_j \ln D_j
		\qquad (n\to \infty)\;.
\end{align}
For the limit behavior of
$\comparisonmarker[n]{}\mathbin/n$ we 
 use the distributions listed in Table~\ref{tab:distribution-of-tolls} and  find
\begin{align*}
		\comparisonmarker[n]{}
	&\wwrel=	  		\comparisonmarker[n]{1}
				\wbin+ \bigl(\comparisonmarker[n]{1} - \swapmarker[n]{1}\bigr)
				\wbin+ \comparisonmarker[n]{3}
				\wbin+ \comparisonmarker[n]{4}
				\wbin+ \toll[n]A
\\	&\wwrel\eqdist
				  		n-1
				\wbin+ I_1
				\wbin+ I_2
				\wbin+ \hypergeometric(I_1+I_2, I_3, n-2)
				\wbin- \hypergeometric(I_1, I_1+I_2, n-2)
				\wbin+ 3\bernoulli(\tfrac{I_3}{n-2})
	\;.
\end{align*}
Using Lemma~\ref{lem:asymptotics-hypergeometric} and~\eqref{eq:asymptotic-I}, we
find for the normalized number of comparisons:
\begin{align*}
		\frac{\comparisonmarker[n]{}}{n}
	&\wwrel\eqdist
				  		\frac{n-2}n
				\wbin+ \frac{I_1}n
				\wbin+ \frac{I_2}n
				\wbin+ \frac{\hypergeometric(I_1+I_2, I_3, n-2)}{n-2}\cdot\frac{n-2}{n}
\\*	&\wwrel\ppe\quad{}
				\wbin- \frac{\hypergeometric(I_1, I_1+I_2, n-2)}{n-2}\cdot\frac{n-2}{n}
				\wbin+ 3\frac{\bernoulli\bigl(\tfrac{I_3}{n-2}\bigr)}{n}
\\	&\wwrel{\overset{L_2}\longrightarrow}
						1
				\wbin+ D_1
				\wbin+ D_2
				\wbin+ (D_1+D_2)D_3
				\wbin- D_1(D_1+D_2)
				\wbin+ 0
\\	&\wwrel=	1 \wbin+ (D_1+D_2)(1 + D_3 - D_1)				
\\	&\wwrel=	1 \wbin+ (D_1+D_2)(D_2 + 2D_3)
	\;.
\end{align*}

Altogether, we obtain that
condition~\ref{cond:convergence-of-recurrence-coeffs} holds with
\begin{align*}
 	(A_1,A_2,A_3,b) &\wwrel=
 		\biggl(
 			D_1,\,
 			D_2,\,
 			D_3,\;
 			1 +	(D_1+D_2)(D_2+2D_3)
 			  + \tfrac{19}{10}\sum_{j=1}^3  D_j \ln D_j
 		\biggr).
\end{align*}
Concerning condition~\ref{cond:recurrence-contraction}, note 
that $D_1$, $D_2$ and $D_3$ are identically distributed with 
density $x\mapsto 2(1-x)$ for $0\le x\le 1$. 
This implies
\begin{align}\label{nein_1}
		\sum_{r=1}^3 \E\bigl[D_r^2\bigr] 
	\wwrel= 
		3\E\bigl[D_1^2\bigr] 
	\wwrel=
		3\int_0^1 \mkern-5mu x^2 \; 2(1-x) \: dx
	\wwrel= 
		\tfrac12
	\wwrel<
		1\;.
\end{align}
Moreover, condition \ref{cond:recurrence-small-sublists-seldom} is fulfilled
since
\begin{align*}
		\sum_{r=1}^K \E\left[
			{\bf1}_{\{\ui{I}{n}_r \le \ell\}} \cdot
			\|(\ui{A}{n}_r)^t \ui{A}{n}_r\|_\mathrm{op}
		\right]
	&\wwrel\le
		\sum_{r=1}^K \Prob(\ui{I}{n}_r \le \ell)
	\wwrel\to 0,
	\qquad(n\to\infty),\quad\text{for any fixed $\ell\in\Nset$}\;.
\end{align*}

\smallskip
Now the conclusions \ref{res:unique-fixpoint} and
\ref{res:convergence-of-coeffs-gives-fixpoint} give the claims
$\tCsn \to C^*$ in distribution with the characterization of
the distribution of $C^*$.
For the asymptotic of the variance note that convergence of the second  moment
$\E[(\tCsn)^2] \to \E[(C^*)^2]$ and the
normalization~\eqref{eq:def-normalized-comparisons}~imply
\begin{align*}
	\V(C_n) &\wwrel\sim \sigma_C^2 n^2
	\qquad\text{with}\qquad
	\sigma_C^2 \wwrel= \E[(C^*)^2]\;.
\end{align*}
To identify $\sigma_C^2$, let $C^{*,(1)}, C^{*,(2)}$ and $C^{*,(3)}$ 
be independent copies of $C^*$ also independent of $(D_1,D_2,D_3)$. 
We abbreviate 
$\tau:=1 + (D_1+D_2)(D_2+2D_3) + \tfrac{19}{10}\sum_{r=1}^{3} D_r \ln D_r$.
Taking squares and expectations in \eqref{eq:fix-point-eq-comparisons}
and noting that $\E[C^*] = \E[C^{*,(r)}] = 0$, we find
	\begin{align*}
			\E[(C^*)^2]
		&\wwrel=
			\E\Bigl[\Bigl(
					1 + (D_1+D_2)(D_2+2D_3)
					  + \tfrac{19}{10}\sum_{r=1}^{3} D_r \ln D_r 
				\wbin+ \sum_{r=1}^3 D_r  C^{*,(r)}
				\Bigr)^2\Bigr]
	\\	&\wwrel= \E[\tau^2] \wbin+ 2\E\Bigl[ \tau \sum_{r=1}^3 D_r C^{*,(r)} \Bigr]
				  \wbin+ \E\Bigl[ \Bigl( \sum_{r=1}^3 D_r  C^{*,(r)} \Bigr)^2 \Bigr]
	\\	&\wwrel= \E[\tau^2]
				\wbin+ 2\sum_{r=1}^3 \E[\tau D_r] \E[C^*]
				\wbin+ \sum_{r=1}^3 \E[D_r^2] \E[(C^*)^2]
				\wbin+ \; \sum_{\substack{\mathclap{1\le,r,s\le 3} \\ r\ne s}} \;
							\E[D_r D_s] \E[C^*]^2
	\\[-1.5ex]	&\wwrel= \E[\tau^2]
				\wbin+ \E[(C^*)^2] \sum_{r=1}^3 \E[D_r^2] \\ &\wwrel=\E[\tau^2] + \tfrac{1}{2}\E[(C^*)^2]
		\;,
	\end{align*}
where for the last equality (\ref{nein_1}) was used. 
Solving for $\E[(C^*)^2]$ implies
\begin{align*}
	\E[(C^*)^2] =
		2 \E \biggl[\Bigl(
			1 + (D_1+D_2)(D_2+2D_3)
			  + \tfrac{19}{10} \sum_{j=1}^3  D_j \ln D_j
		\Bigr)^{2} \biggr].
\end{align*}
Now, the integral representation \eqref{eq:Dirichlet-integral} and the use of a computer
algebra system yields the expression for $\sigma_C^2$.
\qed

\subsection{Proof of Theorem~\ref{thm:limiting-dist-swaps}}
The proof can be done similarly to the one for
Theorem~\ref{thm:limiting-dist-comparisons}. We have the recurrence
\begin{align}
\label{eq:distributional-recurrence-swaps}
		S_n
	&\wwrel\eqdist S'_{I_1} + S''_{I_2} + S'''_{I_3}
						\wbin+ \swapmarker[n]{},
	\qquad (n\ge 3),
\end{align}
with conditions on independence and distributions as in
\eqref{eq:distributional-recurrence-comparisons}, where $\swapmarker[n]{}$
is the number of swaps in the first partitioning step of the algorithm.
We set \smash{$S^*_0 \ce 0$} and
\begin{align}
\label{eq:def-normalized-swaps}
 		S^*_n
	&\wwrel\ce
		\frac{S_n - \E[S_n]}{n}\,,
		\qquad (n\ge 1)\;.
\end{align}
Hence, $(\tSsn)_{n\ge 0}$ is a sequence of centered, square integrable
random variables satisfying~\eqref{eq:general-contraction-recurrence} with
\begin{align*}
	\ui{A}{n}_r = \frac{I_r}{n},\qquad
	\ui{b}{n}   = \frac{1}{n}\biggl(
						\swapmarker[n]{} - \E[S_n] + \sum_{r=1}^3 \E[S_{I_r}\,|\,I_r]
					\biggr),
\end{align*}
where by Theorem~\ref{thm:expected-swaps}, we know 
$\E[S_n]=\frac35 n\ln(n) + c'n + \bo(\log n)$ for
a constant $c'\in\Rset$.
Analogously to~\eqref{eq:convergence-toll-comparisons} we obtain
\begin{align}
\label{eq:convergence-toll-swaps}
		\frac{1}{n}\biggl(
			- \E[S_n] + \sum_{r=1}^3 \E[S_{I_r}\,|\,I_r]
		\biggr)
	&\wwrel\to
		\tfrac{3}{5}\sum_{j=1}^3  D_j \ln D_j ,
		\qquad (n\to \infty),
		\quad\text{in }L_2 \;.
\end{align}

\smallskip
It remains to study the asymptotic behavior of $\swapmarker[n]{}\mathbin/n$.
Again profiting from the spadework of Section~\ref{sec:expected-frequencies},
we find the exact distribution of the number of swaps:
\begin{align*}
		\swapmarker[n]{}
	&\wwrel=			\swapmarker[n]{1}
				\wbin+ 	\bigl( \comparisonmarker[n]{4} - \swapmarker[n]{3} \bigr)
				\wbin+ 2\swapmarker[n]{3}
				\wbin+ 2\toll[n]A
	 \wwrel= I_1 + \numberat{sm}{G} + \delta + 2
\\	&\wwrel\eqdist		I_1
				\wbin+	\hypergeometric(I_1+I_2, I_3, n-2)
				\wbin+	\bernoulli\bigl( \tfrac{I_3}{n-2} \bigr)
				\wbin+ 2
	\;.
\end{align*}
By Lemma~\ref{lem:asymptotics-hypergeometric} and~\eqref{eq:asymptotic-I}, we
find
\begin{align*}
		\frac{\swapmarker[n]{}}n
	&\wwrel{\overset{L_2}\longrightarrow}
		D_1 + (D_1+D_2)D_3
		\;.
\end{align*}
The conditions \ref{cond:convergence-of-recurrence-coeffs},
\ref{cond:recurrence-contraction}
and~\ref{cond:recurrence-small-sublists-seldom}
are now checked as in the proof of Theorem~\ref{thm:limiting-dist-comparisons}.
The assertions of Theorem \ref{thm:limiting-dist-swaps} follow from
\ref{res:unique-fixpoint} and \ref{res:convergence-of-coeffs-gives-fixpoint},
the identification of $\sigma^2_S$ is done as that of
$\sigma^2_C$ in Theorem \ref{thm:limiting-dist-comparisons}.
\qed

\subsection{Proof of Theorem~\ref{thm:limiting-dist-bytecodes}}
The proof can be done very similarly as for
Theorems~\ref{thm:limiting-dist-comparisons} and~\ref{thm:limiting-dist-swaps}.
We only present the key points where changes are needed.
For the distributional recurrence, we here have the toll function
\begin{align*}
		\toll[n]{\bytecodes{}}
	&\wwrel=	
				  15 \comparisonmarker[n]{1}
				+ 10 \comparisonmarker[n]{3}
				+ 11 \comparisonmarker[n]{4}
				+  9 \swapmarker[n]{1}
				+  8 \swapmarker[n]{3}
\\* &\wwrel\ppe\quad {}				
				+ 71 \toll[n]A
				- 1 \toll[n]B
				+ 6 \toll[n]R
				+  3 \toll[n]F \;.
\end{align*}
All contributions from the second line are bounded by $\bo(1)$ (see
Table~\ref{tab:distribution-of-tolls}).
Therefore they vanish in the limit of $\toll[n]{\bytecodes{}} \mathbin/ n$:
\begin{align*}
		\frac{\toll[n]{\bytecodes{}}}n
	&\wwrel{\overset{L_2}\longrightarrow}
				15(D_1+D_2)
		\bin+ 	10D_3
		\bin+	11(D_1+D_2)D_3
		\bin+	 9D_1(D_1+D_2)
		\bin+	 8\bigl( D_1 - D_1(D_1+D_2) \bigr)
\\	&\wwrel=
		24 + (D_3-9)D_2 - 2D_3(5D_3 + 2) \;.
\end{align*}
The rest of the proof is carried out along the same lines as for the
proofs above.
\qed

\subsection{Proof of Theorem~\ref{thm:limit-covariance}}
We define the column vector $Y_n \ce \bigl({C_n \atop S_n}\!\bigr)$. 
Then from
\eqref{eq:distributional-recurrence-comparisons} and
\eqref{eq:distributional-recurrence-swaps}, we obtain
\begin{align*}
	Y_n &\wwrel\eqdist
		Y'_{I_1} + Y''_{I_2} + Y'''_{I_3}
		\wbin+
		\begin{pmatrix}
			\comparisonmarker[n]{} \\
			\swapmarker[n]{}
		\end{pmatrix}
	\,,
\end{align*}
with conditions on independence and distributions as in
\eqref{eq:distributional-recurrence-comparisons}.
We set $Y^*_0 \ce \bigl({0 \atop 0}\bigr)$ and
\begin{align}
\label{eq:def-normalized-CS-vector}
	Y^*_n &\wwrel\ce
		\frac1n
		\bigl( Y_n - \E[Y_n] \bigr) \;.
\end{align}
Hence, $(\tYsn)_{n \ge 0}$ is a sequence of centered, square integrable,
bivariate random variables satisfying~\eqref{eq:general-contraction-recurrence}
with
\begin{align*}
	\ui{A}{n}_r &\wwrel= \frac1n \begin{bmatrix}
							I_r & 0 \\ 0 & I_r
					  \end{bmatrix}
		\,, &
	\ui{b}{n} 	&\wwrel= \frac1n \begin{pmatrix}
							  \comparisonmarker[n]{}
							- \E[\totalcomparisonmarker[n]{}]
							+ \sum_{r=1}^3	\E[\totalcomparisonmarker[I_r]{}\given I_r] \\[.5ex]
							  \swapmarker[n]{}
							- \E[\totalswapmarker[n]{}]
							+ \sum_{r=1}^3	\E[\totalswapmarker[I_r]{} \given I_r]
					  \end{pmatrix}
		\;.
\end{align*}
Using the asymptotic behavior from the proofs of Theorems
\ref{thm:limiting-dist-comparisons}  and \ref{thm:limiting-dist-swaps} we
obtain that condition \ref{cond:convergence-of-recurrence-coeffs} holds with
\begin{align}
\label{eq:covariance-fix-point-coefficients}
	A_r	= 	\begin{bmatrix}
				D_r& 0 \\ 0 & D_r
		  	\end{bmatrix}
	\,, \qquad
	b	=	\begin{pmatrix}
 				1 + (D_1+D_2)(D_2+2D_3)
	 			{} + \tfrac{19}{10}\sum_{j=1}^3  D_j \ln D_j
 			\\[.5ex]
				\like{
					1 + (D_1+D_2)(D_2+2D_3)
				}{
					D_1 + (D_1+D_2)D_3
	 			}
				  + \like{\tfrac{19}{10}}{\tfrac{3}{5}}
				  	\sum_{j=1}^3  D_j \ln D_j
			\end{pmatrix}
	\;.
\end{align}
Condition \ref{cond:recurrence-contraction} is satisfied as
$$
			\sum_{r=1}^3 \E\left[ \| A_r^tA_r\|_\mathrm{op}\right]
	\;=\; 	\sum_{r=1}^3 \E[D_r^2]
	\;=\;	\tfrac{1}{2}
	\;<\;	1 \;.
$$
Condition~\ref{cond:recurrence-small-sublists-seldom} is checked similarly as in
the proof of Theorem~\ref{thm:limiting-dist-comparisons}.

Hence from~\ref{res:unique-fixpoint} we obtain the existence of a centered,
square integrable, bivariate distribution ${\cal L}(\Lambda_1,\Lambda_2)$ that
solves the bivariate fixed-point equation \eqref{eq:general-fix-point-eq} with
the choices for $A_r$ and $b$ given
in~\eqref{eq:covariance-fix-point-coefficients}.
Furthermore~\ref{res:convergence-of-coeffs-gives-fixpoint} implies that the
sequence $(\tYsn)$ defined in~\eqref{eq:def-normalized-CS-vector}
converges in distribution and with mixed second moments towards
$(\Lambda_1,\Lambda_2)$. 
This implies in particular, as $n\to\infty$,
\begin{align*}
			\E\left[\frac{C_n-\E[C_n]}{n} \cdot \frac{S_n-\E[S_n]}{n} \right]
	&\wwrel\to	\E[\Lambda_1 \cdot \Lambda_2].
\end{align*}
Hence, we obtain
\begin{align*}
	\Cov(C_n,S_n) &\wwrel\sim \E[\Lambda_1\Lambda_2] \, n^2.
\end{align*}
The value $\E[\Lambda_1\Lambda_2]$ is obtained from the fixed-point equation
\eqref{eq:general-fix-point-eq} with the choices for $A_r$ and $b$ given in
\eqref{eq:covariance-fix-point-coefficients} by multiplying the components on
left and right hand side, taking expectations and solving for
$\E[\Lambda_1\Lambda_2]$.
The integral representation \eqref{eq:Dirichlet-integral} then leads to the
expression given in \eqref{corcs}.
\qed

\subsection{Proof of Lemma~\ref{lem:asymptotics-hypergeometric}}
We denote by $\binomial(n,p)$ the binomial distribution with $n$ trials
and success probability $p$.

The hypergeometric distribution $\hypergeometric(k,r,r+b)$ has mean and variance
given in \eqref{eq:mean-variance-hypergeometric}.
In particular for sequences $(\alpha_n)_{n\ge 1}$, $(\beta_n)_{n\ge 1}$ with
$\alpha_n = \alpha\,n + r_n$ and 
$ \beta_n = \beta\, n + s_n$ with 
$\alpha, \beta\in (0,1)$ and $|r_n|, |s_n| \le n^{2/3}$, 
we obtain for hypergeometrically $\hypergeometric(\alpha_n,\beta_n,n)$ distributed
random variables $\Upsilon_n$ that $\V(\Upsilon_n)\le n$ and moreover
\begin{align}
	\left\|\frac{\Upsilon_n}{n} - \alpha\beta\right\|_2 
		&\wwrel\le		\left\|\frac{\Upsilon_n}{n} - \frac{\E \Upsilon_n}{n}\right\|_2 + \left|\frac{\E \Upsilon_n}{n} - \alpha\beta\right|\nonumber\\		
		  &\wwrel= 		\frac{\sqrt{\V(\Upsilon_n)}}{n} + \left|\frac{\alpha_n\beta_n}{n^2} - \alpha\beta\right|\nonumber\\
		  &\wwrel\le n^{-1/2} + \frac{\alpha |s_n|}{n} + \frac{\beta |r_n|}{n}+\frac{ |r_n s_n|}{n^2}
				\nonumber\\ 
		& \wwrel\le n^{-1/2} + 2 n^{-1/3} + n^{-2/3} \wwrel\le 4 n^{-1/3} \;.
	\label{eq:eshy}
\end{align}
Now, we first condition on 
$V \ce (V_1,\ldots,V_b) = (v_1,\ldots,v_b) \equalscolon v$, where $v$ satisfies
$v_i\in [\.0,1]$ and $\sum_{r=1}^b v_r = 1$.
Conditionally on $V=v$, the random variables $\sum_{j\in J_i} L_j$ have a
binomial $\binomial(n,w_i)$ distribution,
where $w_i \ce \sum_{j\in J_i} v_j$ for $i=1,2$.
We denote
\begin{align*}
	B_n^v &\wwrel\ce 
	\bigcap_{i=1}^2
	\biggl\{
		\sum_{j\in J_i} L_j \in
		\Bigl[n \, w_i - n^{2/3} ,\; n \, w_i + n^{2/3} \Bigr]
	\biggr\}\;.
\end{align*}
Then by Chernoff's
bound \cite[p.\,195]{McDi98}, we obtain uniformly
in $v$ that 
\begin{align}
			\Prob(B_n^v)
		&\wwrel\ge 1 - 4 \exp\bigl(-2n^{1/3}\bigr) 
		\wwrel\to 1
	\qquad\text{as }n\to \infty
	\;.
\label{eq:prob-Bnv}
\end{align}
We abbreviate $W_i \ce \sum_{j\in J_i}\!\! V_j$ for $i=1,2$ and let
$Z_n^v$ denote $Z_n$ conditional on $V=v$.
By $\Prob_V$ we denote the distribution of $V$.
Then, we obtain with \eqref{eq:eshy}
\begin{align*}
	\E \left[\left| \frac{Z_n}{n} - W_1W_2   \right|^2\right] 
		&\wwrel= 	
				\int \E \left[\left| 
						\frac{Z_n^v}{n} - w_1w_2
					\right|^2\right]
				d\,\Prob_V(v) \\
	 	&\wwrel\le	
	 			\iint_{B_n^v}
	 				\left| \frac{Z_n^v}{n} - w_1w_2 \right|^2
	 			d\,\Prob\; 
	 			\wbin+ 4\exp\bigl( -2n^{1/3} \bigr)
	 				\cdot \max_{0\le z\le n} \bigl|\tfrac zn - w_1 w_2\bigr|^2 d\,\Prob_V(v)
	 			\\ 
	 	&\wwrel\le 	
	 			4 n^{-1/3} 
	 			+ 4\exp\bigl( -2n^{1/3} \bigr) \\* 
	 	&\wwrel\to 	
	 			0
	 		\,, \qquad\text{as } n \to \infty \,,
\end{align*}
which concludes the proof.
\qed

\subsection{Proof of Lemma~\ref{lem:asymptotics-xlnx}}
Let $i\in \{1,\ldots,n\}$ be arbitrary. 
The strong law of large numbers implies that $L_i \mathbin/ n \to V_i$
\weakemph{almost surely} as $n\to\infty$. 
The function $x\mapsto x\ln(x)$ is continuous on~$[0,1]$ 
(with the convention $x\ln(x) = 0$ for $x=0$).
Hence, as $n\to \infty$,
\begin{align*}
		 \Bigl|
			\tfrac{L_i}n \ln \bigl( \tfrac{L_i}n \bigr) - V_i \ln V_i
		\Bigr|^2 
		&\wwrel\to 0 \qquad \mbox{almost surely} \;.
\end{align*}
Since the non-positive function $x\mapsto x\ln(x)$, $x\in [0,1]$ is lower
bounded (e.\,g.\ by $-1/e$) the square in the latter display is uniformly
bounded (e.\,g.\ by $(2/e)^2$). 
Hence the \weakemph{dominated convergence theorem} implies
\begin{align*}
		\E\Bigl[ \Bigl|
			\tfrac{L_i}n \ln \bigl( \tfrac{L_i}n \bigr) - V_i \ln V_i
		\Bigr|^2 \Bigr] 
		&\wwrel\to 0\,,
	\qquad\text{as } n\to\infty\;.
\end{align*}
This concludes the proof.
\qed

\section{Experimental Validation of Asymptotics}
\label{app:experimental-validation}

\begin{figure}
	\def\plotdatafile{plots/experimental-cmps-swaps-bytecodes.tab}
	\mbox{}\hfill%
	\tikzset{external/export=true}\tikzsetnextfilename{cmps-mean}%
	\begin{tikzpicture}
		\begin{axis}[
			width=0.37\linewidth,
		    		    title=Comparisons,
		    mark size=1pt,
 		    cycle list name=asymptotics,
 		    restrict x to domain=1000:inf,
 		    scaled ticks=false,
 		    ymin=1.6,ymax=1.8,
		   ]
			\addplot plot table[x=n,y={pred-mean-cmps/nlnn}] {\plotdatafile};
			\label{plot:asymptotic-means}
			\addplot plot[error bars/.cd, y dir=both, y explicit, 
			               error bar style={thin,solid}] 
			          table[x=n,y={mean-cmps/nlnn},y error={stdev-cmps/nlnn}]
			          {\plotdatafile};
			\label{plot:sample-means}
		\end{axis}
	\end{tikzpicture}\hfill%
	\tikzset{external/export=true}\tikzsetnextfilename{swaps-mean}%
	\begin{tikzpicture}
		\begin{axis}[
			width=0.37\linewidth,
		    		    title=Swaps,
		    mark size=1pt,
 		    cycle list name=asymptotics,
 		    restrict x to domain=1000:inf,
 		    scaled ticks=false,
 		    ymin=.45,ymax=.6,
		   ]
			\addplot plot table[x=n,y={pred-mean-swaps/nlnn}] {\plotdatafile};
			\addplot plot[error bars/.cd, y dir=both, y explicit, 
			               error bar style={thin,solid}] 
			          table[x=n,y={mean-swaps/nlnn},y error={stdev-swaps/nlnn}]
			          {\plotdatafile};
		\end{axis}
	\end{tikzpicture}\hfill%
	\tikzset{external/export=true}\tikzsetnextfilename{bytecodes-mean}%
	\begin{tikzpicture}
		\begin{axis}[
			width=0.37\linewidth,
		    		    title=Bytecodes,
		    mark size=1pt,
 		    cycle list name=asymptotics,
 		    restrict x to domain=1000:inf,
 		    scaled ticks=false,
 		    ymin=20,ymax=23,
		   ]
			\addplot plot table[x=n,y={pred-mean-bytecodes/nlnn}] {\plotdatafile};
			\addplot plot[error bars/.cd, y dir=both, y explicit, 
			               error bar style={thin,solid}] 
			          table[x=n,y={mean-bytecodes/nlnn},y error={stdev-bytecodes/nlnn}]
			          {\plotdatafile};
		\end{axis}
	\end{tikzpicture}%
	\hfill\mbox{}%
	\tikzset{external/export=false}
	\caption{%
		Comparison of sample means with the asymptotics
		computed in Section~\ref{sec:average-case-analysis}.
		The asymptotics are shown as solid lines, the
		sample means are indicated by circles. Additionally,
		the error bars show the sample standard deviation.
		On the $x$-axis, the input size $n$ is shown, the $y$-axis shows the
		corresponding counts normalized by $n \ln n$.
		\label{fig:means}
	}
\end{figure}
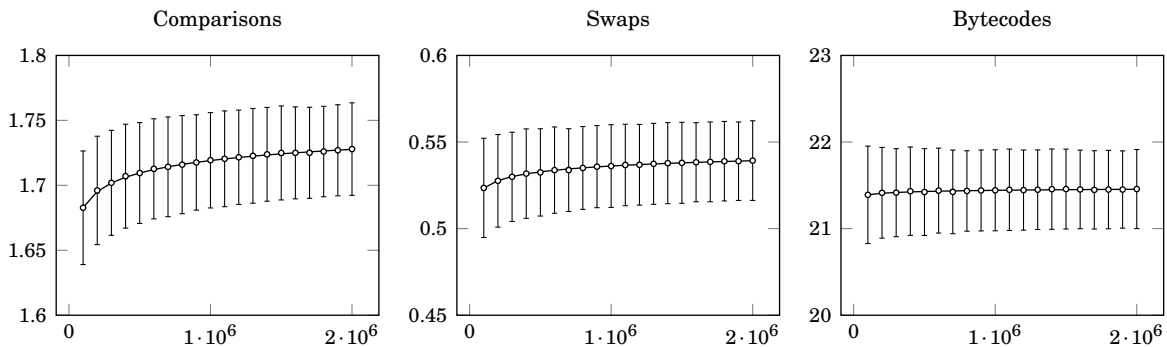

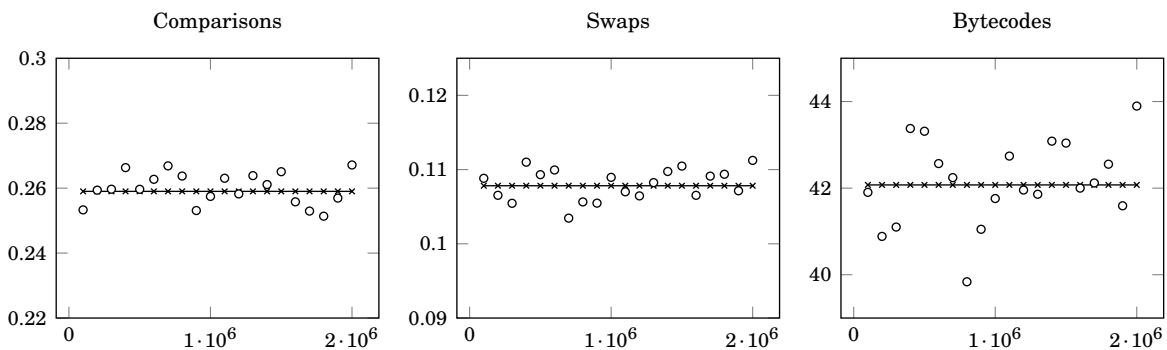
\begin{figure}
\def\plotdatafile{plots/experimental-cmps-swaps-bytecodes.tab}
	\mbox{}\hfill%
	\tikzset{external/export=true}\tikzsetnextfilename{cmps-var}%
	\begin{tikzpicture}
		\begin{axis}[
			width=0.37\linewidth,
		    		    title=Comparisons,
		    mark size=1.5pt,
 		    cycle list name=asymptotics,
 		    restrict x to domain=1000:inf,
 		    scaled ticks=false,
 		    ymin=.22,ymax=.3,
		   ]
			\addplot plot table[x=n,y={pred-var-cmps/n2}] {\plotdatafile};
			\label{plot:asymptotic-means}
			\addplot plot 
			          table[x=n,y={var-cmps/n2}]
			          {\plotdatafile};
			\label{plot:sample-means}
		\end{axis}
	\end{tikzpicture}\hfill%
	\tikzset{external/export=true}\tikzsetnextfilename{swaps-var}%
	\begin{tikzpicture}
		\begin{axis}[
			width=0.37\linewidth,
		    		    title=Swaps,
		    mark size=1.5pt,
 		    cycle list name=asymptotics,
 		    restrict x to domain=1000:inf,
 		    scaled ticks=false,
 		    yticklabel style={/pgf/number format/.cd,fixed,precision=3},
 		    ymin=0.090,ymax=0.125,
		   ]
			\addplot plot table[x=n,y={pred-var-swaps/n2}] {\plotdatafile};
			\addplot plot 
			          table[x=n,y={var-swaps/n2}]
			          {\plotdatafile};
		\end{axis}
	\end{tikzpicture}\hfill%
	\tikzset{external/export=true}\tikzsetnextfilename{bytecodes-var}%
	\begin{tikzpicture}
		\begin{axis}[
			width=0.37\linewidth,
		    		    title=Bytecodes,
		    mark size=1.5pt,
 		    cycle list name=asymptotics,
 		    restrict x to domain=1000:inf,
 		    scaled ticks=false,
 		    ymin=39,ymax=45,
		   ]
			\addplot plot table[x=n,y={pred-var-bytecodes/n2}] {\plotdatafile};
			\addplot plot
			          table[x=n,y={var-bytecodes/n2}]
			          {\plotdatafile};
		\end{axis}
	\end{tikzpicture}%
	\hfill\mbox{}%
	\tikzset{external/export=false}
	\caption{%
		Comparison of sample variances with the asymptotics
		computed in Section~\ref{sec:distributional-analysis}.
		The asymptotics are shown as solid lines, the
		sample variances are indicated by circles.
		On the $x$-axis, the input size $n$ is shown, the $y$-axis shows the
		corresponding variances normalized by $n^2$.
		\label{fig:variances}
	}
\end{figure}

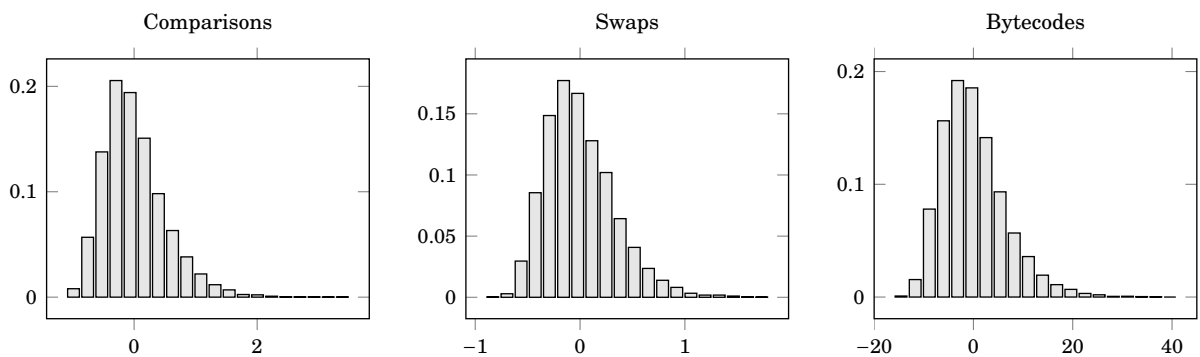
\begin{figure}
	\def\plotdatafile{plots/norm-cmps-1M-hist.tab}
	\tikzset{external/export=true}\tikzsetnextfilename{cmps-dist}%
	\begin{tikzpicture}
		\begin{axis}[
			title=Comparisons,
			width=.37\linewidth,
			ybar,
			bar width=4.4pt,
			yticklabel style={/pgf/number format/.cd,fixed,precision=2},			
		]
		\addplot plot[draw=black,fill=black!10]
				  table[x=norm-cmps-bin,y=prob] {\plotdatafile};		
		\end{axis}
	\end{tikzpicture}\hfill%
	\def\plotdatafile{plots/norm-swaps-1M-hist.tab}%
	\tikzset{external/export=true}\tikzsetnextfilename{swaps-dist}%
	\begin{tikzpicture}
		\begin{axis}[
			title=Swaps,
			width=.37\linewidth,
			ybar,
			bar width=4.4pt,
			yticklabel style={/pgf/number format/.cd,fixed,precision=2},			
		]
		\addplot plot[draw=black,fill=black!10]
				  table[x=norm-swaps-bin,y=prob] {\plotdatafile};		
		\end{axis}
	\end{tikzpicture}\hfill%
	\tikzset{external/export=true}\tikzsetnextfilename{bytecodes-dist}%
	\def\plotdatafile{plots/norm-bytecodes-1M-hist.tab}%
	\begin{tikzpicture}
		\begin{axis}[
			title=Bytecodes,
			width=.37\linewidth,
			ybar,
			bar width=4.4pt,
			yticklabel style={/pgf/number format/.cd,fixed,precision=2},			
		]
		\addplot plot[draw=black,fill=black!10]
				  table[x=norm-bytecodes-bin,y=prob] {\plotdatafile};		
		\end{axis}
	\end{tikzpicture}
	\caption{%
		Histograms for the distributions of the normalized number of comparisons
		$\tCsn$, swaps $\tSsn$ and executed Bytecode instructions $\tBCsn$
		from the sample of $10\,000$ random permutations of size $n=10^6$.
		\label{fig:histograms}
	}
\end{figure}

In this paper, we computed asymptotics for mean and variance of the costs
of Yaroslavskiy's algorithm.
Whereas the results for the mean are very precise and indeed can be made exact
with some additional diligence, our contraction arguments only provide leading
term asymptotics. 
In this section, we compare the asymptotic approximations with
experimental sample means and variances. 

We use the Java implementation given in Appendix~\ref{app:low-level} and run it
on $10\,000$ 
pseudo-randomly generated permutations of
$\{1,\ldots,n\}$ for each of the $20$ sizes in 
$\{10^5, \mbox{$2\cdot 10^5$},\ldots,\mbox{$2\cdot 10^6$}\}$. 
Note that for sensible estimates of variances, much larger samples are needed
than for means.
The experiment itself is done using our tool MaLiJAn, which
automatically counts the number of comparisons, swaps and Bytecode instructions
\cite{Wild2013Alenex}.

Figure~\ref{fig:means} shows the results for the expected costs
and Figure~\ref{fig:variances} compares asymptotic and sampled variances.
The histograms in Figure~\ref{fig:histograms} give some impression how the 
limit laws will look like.

It is clearly visible in Figure~\ref{fig:means} that for the given range of
input sizes, the average costs computed in
Section~\ref{sec:average-case-analysis} are extremely precise. 
In fact, hardly any deviation between prediction and measurement is
visible.
The variances in Figure~\ref{fig:variances} show more erratic behavior.
As variances are much harder to estimate than means, this does not come as a
surprise.
From the data we cannot tell whether the true variances show some oscillatory
behavior (in lower order terms) or whether we observe sampling noise.
Nevertheless, Figure~\ref{fig:variances} shows that for the given range of
sizes, the asymptotic is a sensible approximation of the exact variance.

\begin{figure}
	\def\plotsfile#1{plots/exact-norm-dist-n#1.tab}
	\def\myplot#1{%
		\addplot[y filter/.code={\pgfmathadd{##1}{-.015*#1}}]
		 plot[mark=none,color=black] table[x={cmp-mean/n},y=prob]
		 {\plotsfile{#1}}; 
	}
	\mbox{}\hfill%
	\tikzset{external/export=true}\tikzsetnextfilename{cmps-distributions-small-n}%
	\begin{tikzpicture}
		\begin{axis}[
			width=0.8\linewidth,
			height=0.5\linewidth,
			restrict x to domain*=-2.0:2.7,
			hide axis,
		]
 		\foreach \n in {3,...,28} {
 			\myplot{\n}
 		}
 		\begin{scope}[every node/.style={fill=white,inner sep=0pt,font=\tiny}]
			\node at (axis cs:2.7,-0.015*3) {$n=3$};
			\node at (axis cs:2.7,-0.015*5) {$n=5$};
			\node at (axis cs:2.7,-0.015*10) {$n=10$};
			\node at (axis cs:2.7,-0.015*15) {$n=15$};
			\node at (axis cs:2.7,-0.015*20) {$n=20$};
			\node at (axis cs:2.7,-0.015*25) {$n=25$};
		\end{scope}
		\draw (axis cs:0,-26.25*0.015) -- ++(0,-2ex) node[below] {mean} ;
 		\end{axis}
	\end{tikzpicture}
	\hfill\mbox{}
	\caption{%
		This figure shows the distributions of the normalized number of
		comparisons $\tCsn$ for small~$n$. The distributions are computed by
		unfolding the distributional
		recurrence~\eqref{eq:distributional-recurrence-comparisons}.
		This figure serves as visual evidence for the convergence of cost
		distributions to a limit law. 
		It is heavily inspired by a corresponding figure
		for classic Quicksort \protect\cite[Figure\,1.3]{SedgewickFlajolet1996}.
		\label{fig:limiting-process-dists}
	}
\end{figure}

Figure~\ref{fig:limiting-process-dists} shows how fast the exact probability
distribution of the normalized number of comparisons approaches a smooth
limiting shape even for tiny~$n$.
This strengthens the above quantitative arguments that the limiting
distributions computed in this paper are useful approximations of
the true behavior of costs in Yaroslavskiy's algorithm.

\section*{ACKNOWLEDGEMENTS}
We would like to thank two anonymous reviewers for their helpful comments and
suggestions.

\FloatBarrier

\bibliographystyle{plainnat}
\bibliography{quicksort-refs}   

\end{document}